\newtheorem{theorem}{Theorem}
\newtheorem{assump}{Assumption}
\newtheorem{cor}{Corollary}
\newtheorem{proposition}{Proposition}
\newtheorem{sectheorem}{Theorem}[section]
\newtheorem{seclemma}{Lemma}[section]
\newtheorem{secprop}{Proposition}[section]
\theoremstyle{definition}
\newtheorem{definition}{Definition}
\newtheorem{example}{Example}
\newtheorem{remark}{Remark}
\newtheorem{secremark}{Remark}[section]
\def\Snospace~{\S{}}
\def\thm@space@setup{
  \thm@preskip=15pt \thm@postskip=15pt % controls spacing before and
                                % after theorems
}
\def\indep{\perp\!\!\!\perp}
\newcommand{\cov}{\text{Cov}}
\newcommand{\var}{\text{Var}}
\newcommand{\E}{{\bf E}}
\newcommand{\R}{\mathbb{R}}
\newcommand{\RR}{\bm{R}}
\newcommand{\prob}{{\bf P}}
\newcommand{\plimarrow}{\stackrel{p}\longrightarrow}
\newcommand{\dlimarrow}{\stackrel{d}\longrightarrow}
\newcommand{\N}{\mathcal{N}}
\newcommand{\Zeta}{\bm{\zeta}}
\newcommand{\zero}{\bm{0}}
\newcommand{\ind}{\bm{1}}
\newcommand*{\medcap}{\mathbin{\scalebox{1.5}{\ensuremath{\cap}}}}
\newcommand*{\medcup}{\mathbin{\scalebox{1.5}{\ensuremath{\cup}}}}
\let\emptyset\varnothing
\providecommand{\abs}[1]{\lvert#1\rvert} 
\providecommand{\norm}[1]{\lVert#1\rVert}
\renewcommand{\qed}{\hfill \mbox{\raggedright \rule{0.08in}{0.08in}}} % black QED box
\renewenvironment{proof}[1][\proofname]{{\noindent\sc#1. }}{\qed\vspace{15pt}} % "proof" small caps
\title{\bf\sc Normal Approximation in Large Network Models\thanks{We thank the editor and referees for valuable comments and suggestions that helped improve the exposition of the paper. We gratefully acknowledge financial support from the National Science Foundation under grants SES-1755100 and SES-1625586.}}
\author{Michael P.\ Leung\thanks{Department of Economics, UC Santa Cruz. E-mail: leungm@ucsc.edu.} \and Hyungsik Roger Moon\thanks{Department of Economics, University of Southern California and Yonsei University. E-mail: moonr@usc.edu.}}
\begin{document}
\maketitle
\onehalfspacing
 
\begin{abstract}

  {\sc Abstract.} We prove a central limit theorem for network formation models with strategic interactions and homophilous agents. Since data often consists of observations on a single large network, we consider an asymptotic framework in which the network size diverges. We argue that a modification of ``stabilization'' conditions from the literature on geometric graphs provides a useful high-level formulation of weak dependence which we utilize to establish an abstract central limit theorem. Using results in branching process theory, we derive interpretable primitive conditions for stabilization. The main conditions restrict the strength of strategic interactions and equilibrium selection mechanism. We discuss practical inference procedures justified by our results.

  \bigskip 

  \noindent {\sc JEL Codes}: C31, C57, D85

  \noindent {\sc Keywords}: social networks, strategic interactions, weak dependence, network formation
 
\end{abstract}

\addcontentsline{toc}{part}{Main Paper}
\newpage

%----------------------------------------------------------------------
\section{Introduction}\label{sintro}
%----------------------------------------------------------------------

Network models have attracted considerable attention as tractable representations of non-market interactions, such as peer effects and social learning, and of economic relationships, such as financial and trade networks. The economic perspective on networks emphasizes the importance of strategic interactions or externalities \citep{jackson2016networks}. One strand of the literature studies social interactions, how an individual's behavior interacts with those of her social contacts. A second strand studies network formation, why an individual chooses particular social contacts, and how their choices affect those of others. In both cases, externalities generate a wedge between choices that are optimal from the individual's perspective and those that are efficient for society. This has important consequences for policymaking, for instance motivating policies of ``associational redistribution'' that intervene on the network structure \citep{durlauf1996associational}. However, when preferences are misaligned with the policy objective, the endogenous response to a policy intervention may diverge from the intended outcome \citep{carrell_natural_2013}. It is therefore of interest to develop econometric methodologies for measuring network externalities.

The focus of our paper is models of network formation, which have diverse applications ranging from risk sharing in the rural Philippines \citep{fg2007} to research partnerships in the biotechnology industry \citep{pwko2005}. They can be used to distinguish between different theoretical mechanisms for link formation, including preferential attachment \citep{barabasi1999emergence}, strategic transitivity \citep{mele2022structural,ridder2022two}, and homophily \citep{mpslc2001}. They enable policymakers to forecast the effects of counterfactual interventions on a social network \citep{mele2017segregation} and to account for dependence between the network and unobservables of a social interactions model \citep{badev2021nash}. 

We augment a class of latent space \citep{hrh2002} and geometric graph \citep{py2003} models with strategic interactions. In such models, nodes are positioned in a possibly latent social or economic space such that the rate of link formation decays with distance in space. However, due to strategic interactions, link formation depends not only on the attributes of the node pair involved in the link but also on links between other pairs. This induces a non-standard form of cross-sectional dependence between network units. The challenge for large-sample theory lies in establishing conditions under which the ``amount of independent information'' increases with the network size.

We establish a CLT for a large class of network moments computed from a single large network. The moments are averages of node-level statistics $n^{-1} \sum_{i=1}^n \psi_i$, where $\psi_i$ is some function of the network and node types. Examples include the {\em degree} of node $i$, which is the number of links involving $i$; the average clustering coefficient; and subnetwork counts which have been utilized for inference in econometric models of network formation \citep{sheng2014}. We discuss practical inference procedures justified by the result. 

We emphasize two technical contributions. The first is an abstract CLT that holds under high-level conditions, which is an extension of limit theorems from the literature on geometric graphs \citep[e.g.][]{penrose2007gaussian}. The key condition is a modification of ``stabilization'' assumptions from that literature that provide a useful formulation of weak dependence for our purposes. Stabilization essentially requires $\psi_i$ to only be a function of a random subset of nodes whose size has a distribution with exponential tails. This implies that the size is asymptotically bounded, so node $i$'s statistic only depends on a small fraction of alters. In this sense, $\{\psi_i\}_{i=1}^n$ is weakly dependent.

In our applications, this random subset involves a union of network components, which are disconnected subnetworks that are challenging combinatorial objects to analyze (a formal definition is provided at the end of this section). Our primary contribution is to develop a methodology for deriving the tail bounds required for stabilization. We employ techniques from branching process theory commonly used to bound component sizes in the literature on random graphs. Using the methodology, we derive interpretable primitive sufficient conditions for stabilization in models of strategic network formation. 

One of the main conditions restricts the strength of strategic interactions. These induce cross-sectional dependence since the realization of a link can depend on the existence of ``neighboring'' links, which in turn can depend on other links, and so on. The longer the lengths of these chains of dependent links, the stronger the degree of cross-sectional dependence. We adopt a well-known technique in random graph theory to bound the lengths of these chains by branching processes whose sizes can more tractably be proven to have exponential tails, provided the processes are ``subcritical'' or non-explosive in growth. We argue that conditions for subcriticality in our context are analogous to weak dependence conditions imposed on spatial and temporal autoregressive models that bound the magnitude of the autoregressive parameter below one.

The other main condition is a restriction on the equilibrium selection mechanism. Even if strategic interactions are sufficiently weak, strong cross-sectional dependence may exist if nodes can globally ``coordinate'' on the equilibrium network through a common signal such as the type of a particular node. We instead require equilibrium selection to be sufficiently ``decentralized,'' which holds for instance under myopic best-response dynamics. This is the single-network analog of the usual requirement under many-network asymptotics that equilibrium selection is independent across networks.

A growing literature studies frequentist inference in network formation models when the econometrician observes a single network. \cite{leung2016} and \cite{menzel2016} develop laws of large numbers for models of strategic network formation. The former paper modifies a weaker stabilization condition due to \cite{py2003} and uses branching processes to derive primitive sufficient conditions. Our paper tackles the more difficult problem of obtaining a normal approximation, which naturally requires a stronger stabilization condition, and discusses practical inference procedures newly justified by the result. We prove the CLT by adapting results from the literature on geometric graphs \citep{penrose2005normal,penrose2007gaussian}; a more detailed discussion of our contributions relative to this literature can be found in \autoref{srelit}. \cite{kuersteiner2019limit} takes a different approach, using a novel conditional mixingale type assumption defined in terms of a random metric of distance.

\cite{leung2015a} and \cite{ridder2022two} study strategic network formation under incomplete information. In this setting, links are independent conditional on observables, whereas the models we study can be microfounded as games of complete information. These allow for unobserved heterogeneity, which generates dependence between potential links even conditional on observables. 

\cite{charbonneau2017multiple}, \cite{dzemski2019}, \cite{graham2017}, and \cite{jochmans2018semiparametric} consider dyadic link formation models without strategic interactions but allow for node-level fixed effects. A large literature in statistics studies models without strategic interactions, for example stochastic block models \citep{bickel2011method} and latent-space models \citep{hrh2002}. These are useful for their parsimony and tasks such as community detection. \cite{cj2015} and \cite{bm2015} study statistical models allowing for interdependence between links.

\cite{kojevnikov2021limit} prove a CLT for node-level data conditional on the network. This does not apply to network formation models since the network is the outcome. We prove an unconditional CLT that may be applied to network formation as well as network processes \citep{leung2019inference}.

The next section presents a model of strategic network formation and defines network moments. In \autoref{smethpf}, we state high-level conditions for a CLT and outline its proof. Readers interested in the low-level CLT for strategic network formation may skip \autoref{smethpf} for \autoref{scltstatic}, which presents primitive sufficient conditions. We also outline a general methodology for verifying the high-level condition that can be applied to other network models. In \autoref{sinfer}, we discuss practical inference procedures. We present results from a simulation study in \autoref{smc}, and \autoref{sconclude} concludes. Proofs can be found in the supplemental appendix.

We introduce standard notation and terminology for networks. We represent a network on a set of $n$ nodes by an $n\times n$ adjacency matrix, where the $ij$th entry $A_{ij}$, termed the {\em potential link}, is an indicator for whether nodes $i,j$ are connected. We assume $A_{ii} = 0$ for all nodes $i$, meaning that there are no self-links, and we focus on undirected networks, so $A_{ij} = A_{ji}$. For two networks $\bm{A},\bm{A}'$, we say that $\bm{A}$ is a {\em subnetwork} of $\bm{A}'$ if every link in $\bm{A}$ is a link in $\bm{A}'$. A {\em path} in a network from node $i$ to $j$ is a sequence of distinct nodes starting with $i$ and ending with $j$ such that for each consecutive node pair $k,k'$ in this sequence, $A_{kk'}=1$. Its {\em length} is the number of links it involves. The {\em path distance} between two nodes $i\neq j$ in $\bm{A}$ is the length of the shortest path that connects them if a path exists and $\infty$ otherwise. The path distance between a node and itself is defined as zero. The {\em $K$-neighborhood} of a node $i$ in $\bm{A}$, denoted by $\N_{\bm{A}}(i,K)$, is the set of nodes of path distance at most $K$ from $i$. Finally the {\em component} of a node $i$ in a network $\bm{A}$ is the set of nodes at finite path distance from $i$.

%----------------------------------------------------------------------
\section{Model}\label{smodel}
%----------------------------------------------------------------------

Let $\N_n = \{1,\dots,n\}$ be a set of nodes, and endow each $i\in\N_n$ with an i.i.d.\ vector-valued {\em type} $(X_i,Z_i) \in \R^d \times \R^{d_z}$. We distinguish $X_i$ as the {\em position} of node $i$, a continuously distributed vector of {\em homophilous} (defined below) attributes with density $f$ that has bounded support. Endow each node pair $(i,j)$ with an i.i.d.\ $\mathbb{R}$-valued random utility shock $\zeta_{ij} = \zeta_{ji}$ independent of types. 

The network $\bm{A}$ satisfies, for all $i\neq j$,
\begin{equation}
  A_{ij} = \mathbf{1}\left\{V_{ij} > 0\right\} \quad\text{for}\quad V_{ij} \equiv V( r_n^{-1}\norm{X_i-X_j}, S_{ij}, Z_i, Z_j, \zeta_{ij}), \label{model}
\end{equation}

\noindent where the {\em joint-surplus function} $V(\cdot)$ is an $\R$-valued function, $\norm{\cdot}$ a norm on $\R^d$, and $S_{ij}$ a vector of statistics that captures strategic interactions through its dependence on $\bm{A}$. As discussed below, the sparsity of $\bm{A}$ will in part be determined by the positive constant $r_n$. Model \eqref{model} corresponds to the well-known {\em pairwise-stability} solution concept under transferable utility \citep{jackson2010}. 

\begin{example}\label{elinear}
  Consider the linear joint surplus function
  \begin{equation}
    V_{ij} = \theta_1 + \theta_2 S_{ij} - \theta_3 (r_n^{-1}\norm{X_i-X_j})^2 + \zeta_{ij} \quad\text{with}\quad S_{ij} = \max_k A_{ik}A_{jk}. \label{modelS}
  \end{equation}

  \noindent Our theory will require $\theta_3>0$ which captures homophily in position since it disincentivizes link formation between positionally dissimilar nodes. The term $S_{ij}$ is an indicator for whether $i$ and $j$ share a common neighbor \citep[used for example by][]{menzel2016}. If $\theta_2>0$, then node pairs $(i,j)$ sharing a common neighbor are more likely to form a link. Both rationalize the well-known stylized fact that networks are commonly {\em clustered} in that nodes with common neighbors are themselves typically neighbors \citep{jackson2010}.
\end{example}

\begin{example}\label{esheng}
  \cite{sheng2014} studies a specification similar to
  \begin{multline}
    V_{ij} = \beta_0 + (Z_i+Z_j)'\beta_1 + \beta_2 r_n^{-1}\norm{X_i-X_j} \\ + \gamma_1 \sum_{k=1}^n (A_{ik} + A_{jk}) + \gamma_2 \sum_{k=1}^n A_{ik}A_{jk} + \zeta_{ij}, \label{shengU}
  \end{multline}

  \noindent which corresponds to linear joint surplus function with
  \begin{equation}
    S_{ij} = \bigg( \sum_{k=1}^n A_{ik}A_{jk}, \sum_{k=1}^n A_{ik}, \sum_{k=1}^n A_{jk} \bigg). \label{modelS2}
  \end{equation}

  \noindent The first component of \eqref{modelS2} plays a role analogous to $S_{ij}$ in \eqref{modelS}, being a count of the number of common neighbors. The second and third components are respectively the {\em degrees} (number of neighbors) of $i$ and $j$. If $\gamma_1>0$, this captures a popularity effect. 
\end{example}

As explained in the next subsection, the assumptions we impose for a CLT in many cases require $S_{ij}$ to be uniformly bounded over $i,j\in\N_n$ and $n\in\mathbb{N}$. Existing work on large-network asymptotics for network formation shares this limitation \citep{bm2015,menzel2016,ridder2022two}. Statistics such as \eqref{modelS2} may be modified to satisfy uniform boundedness by truncation, for example $\min\{\sum_{k=1}^n A_{ik}, \Delta\}$ for some user-specified constant $\Delta$.

The next subsection states a restriction on $V(\cdot)$ that generates homophily and sparsity. In \autoref{eselection}, we formalize how $S_{ij}$ may depend on $\bm{A}$ and introduce the equilibrium selection mechanism. Finally, in \autoref{smoms}, we define the class of network moments for which we provide a CLT.
  
%----------------------
\subsection{Homophily and Sparsity}\label{rsparse}
%----------------------

A common feature of social networks is {\em homophily}, the tendency for those with similar characteristics to associate. We require homophily in position, specifically that the joint-surplus function $V(\cdot)$ is decreasing in the first component, so nodes dissimilar in position are less likely to form links. Positions may represent latent node locations in an abstract ``social space,'' as in latent-space models \citep{hrh2002}, or attributes such as income and geographic location. 

Another common feature is {\em sparsity}, meaning that the number of connections formed by the typical node is of significantly smaller order than $n$ \citep{barabasi2015}. This is often accomplished by scaling the sequence of models such that the expected degree $n^{-1} \sum_{i=1}^n \sum_{j=1}^n \E[A_{ij}]$ is asymptotically bounded.

Our first assumption restricts the joint-surplus function $V(\cdot)$ and the distribution of $\zeta_{ij}$, so that the network generated by model \eqref{model} exhibits homophily in positions and sparsity. Let $\bar{\Phi}_\zeta(z) = \prob(\zeta_{ij} \geq z)$. Define $\bar{V}(r_n^{-1}\norm{X_i-X_j}, \zeta_{ij}) = \sup_{s,z,z'} V(r_n^{-1}\norm{X_i-X_j}, s, z, z', \zeta_{ij})$, where the supremum is taken over $(s,z,z')$ in the support of $(S_{ij},Z_i,Z_j)$ (which will be identical across $i,j$ under assumptions below). Finally, recall that $d$ is the dimension of $X_1$.

\begin{assump}[Homophily and Sparsity]\label{Vc} 
  For any $\delta\in\R_+$, $\bar{V}(\delta,\cdot)$ is invertible and increasing in its second argument, and its inverse $\bar{V}^{-1}(\delta,\cdot)$ satisfies $\limsup_{\delta\rightarrow\infty} \delta^{-1} \log \bar{\Phi}_\zeta(\bar{V}^{-1}(\delta, 0)) < 0$. Furthermore, there exists $\kappa>0$ such that, for any $n\in\mathbb{N}$,
  \begin{equation}
    r_n \equiv (\kappa/n)^{1/d}. \label{r_n}
  \end{equation}
\end{assump}

\noindent Homophily is a consequence of the first sentence because $\prob(A_{ij}=1 \mid r_n^{-1}\norm{X_i-X_j}=\delta) \leq \bar{\Phi}_\zeta(\bar{V}^{-1}(\delta, 0))$, which is required to decrease exponentially with $\delta$. If $V(\cdot)$ is not too nonlinear in its first component and the distribution of $\zeta_{ij}$ has exponential tails, then this assumption is satisfied, as shown in the next example.

\begin{example}\label{ehphil}
  Consider \autoref{elinear}, and suppose $S_{ij}$ has uniformly bounded support. Then for some universal constant $M$,
  \begin{equation*}
    \prob(V_{ij} > 0 \mid r_n^{-1}\norm{X_i-X_j}=\delta) \leq \prob(\underbrace{M - \theta_3\delta^2 + \zeta_{ij}}_{\bar{V}(\delta,\zeta_{ij})} > 0) = \bar{\Phi}_\zeta(\underbrace{\theta_3\delta^2 - M}_{\bar{V}^{-1}(\delta,0)} ),
  \end{equation*}

  \noindent which decays to zero exponentially with $\delta$ if $\theta_3>0$ and the distribution of $\zeta_{ij}$ has exponential tails.
\end{example}

\begin{example}
  In the previous example, \autoref{Vc} implicitly imposes restrictions on the support of $(S_{ij},Z_i,Z_j,\zeta_{ij})$. Consider a variant of \eqref{modelS} in which we replace $\theta_3 (r_n^{-1}\norm{X_i-X_j})^2$ with the ``random geometric graph'' penalty $\infty \cdot \ind\{r_n^{-1}\norm{X_i-X_j} > c\}$ for some $c>0$, with the convention that $\infty \cdot 0 = 0$. That is, nodes do not link with alters whose scaled positions are sufficiently far from the ego's. Then $\prob(V_{ij} > 0 \mid r_n^{-1}\norm{X_i-X_j}=\delta) \leq \ind\{\delta \leq c\}$ which satisfies \autoref{Vc} without support restrictions.
\end{example}

Lastly we discuss the role of \eqref{r_n} for sparsity. As $n$ increases, there are more opportunities to form links, which increases expected degree, corresponding to a denser network. By sending $r_n$ to zero with $n$, we increase the ``cost'' of link formation due to homophily, thereby decreasing the expected degree. Our choice of $r_n$ balances these two forces to achieve a sparse network in which the expected degree is asymptotically bounded. To see this, notice that expected degree is
\begin{align}
  \frac{1}{n} \sum_{i=1}^n &\sum_{j=1}^n \E[A_{ij}] = (n-1) \prob\big(V(r_n^{-1}\norm{X_i-X_j}, S_{ij}, Z_i, Z_j, \zeta_{ij}) > 0\big) \nonumber\\
			   &\leq (n-1) \prob\big(\zeta_{ij} > \bar{V}^{-1}(r_n^{-1}\norm{X_i-X_j},0)\big) \nonumber\\
			   &= (n-1)r_n^d \int_{\R^d} \int_{\R^d} \prob\big(\zeta_{ij} > \bar{V}^{-1}(\norm{x-x'},0)\big) f(x) f(x+r_n(x'-x)) \,\text{d}x \,\text{d}x' \label{sparsecalc}
\end{align}

\noindent by a change of variables $x' \mapsto x+r_n(x'-x)$. By \eqref{r_n}, if $f$ is continuous, this converges to
\begin{equation*}
  \kappa \int_{\R^d} \int_{\R^d} \bar{\Phi}_\zeta\big(\bar{V}^{-1}(\norm{x-x'},0)\big) f(x)^2 \,\text{d}x' \,\text{d}x,
\end{equation*}

\noindent which is finite because the integrand decays exponentially with $\norm{x-x'}$ by \autoref{Vc}.

%----------------------
\subsection{Strategic Interactions and Equilibrium Selection}\label{eselection}
%----------------------

We next define the statistics $S_{ij}$ that capture strategic interactions. For any $H\subseteq\N_n$, let $r_n^{-1}\bm{T}_H = ((r_n^{-1}X_i,Z_i))_{i \in H}$ be the array of (scaled) types for nodes in $H$ and $\Zeta_H = (\zeta_{ij})_{i,j \in H}$ (with $\zeta_{ii} \equiv 0$ for all $i$) be the corresponding array of random-utility shocks. In the case where $H = \N_n$, we abbreviate $r_n^{-1}\bm{T}_n \equiv r_n^{-1}\bm{T}_{\N_n}$ and $\Zeta_n \equiv \Zeta_{\N_n}$. Let $\bm{A}_H = (A_{ij})_{i,j\in H}$ denote the subnetwork of $\bm{A}$ on $H$. We define
\begin{equation}
  S_{ij} = S_n(i, j, r_n^{-1}\bm{T}_n, \bm{A}) \label{Sijt}
\end{equation}

\noindent where $\{S_n\}_{n\in\mathbb{N}}$ is a sequence of permutation-invariant functions.\footnote{For any bijection $\pi\colon \N_n \rightarrow \N_n$, we abuse notation by defining $\pi(r_n^{-1}\bm{T}_n) = ( (r_n^{-1}X_{\pi(i)}, Z_{\pi(i)}) )_{i\in \N_n}$ and $\pi(\bm{A}) = (A_{\pi(i)\pi(j)})_{i,j}$. We say $S_n(\cdot)$ is {\em permutation-invariant} if $S_n(i, j, r_n^{-1}\bm{T}_n, \bm{A}) = S_n(\pi(i), \pi(j), \pi(r_n^{-1}\bm{T}_n), \pi(\bm{A}))$. This means that the function does not directly depend on node labels which have no economic content in our model. \label{pinvar}} 

We next impose the common restriction that $S_{ij}$ only depends on its arguments through the 1-neighborhoods of $i$ and $j$, recalling the definition from \autoref{sintro}.

\begin{assump}[Local Externalities]\label{S}
  For any $r>0$, $n\in\mathbb{N}$, $i,j\in\N_n$, and $H = \N_{\bm{A}}(i,1) \cup \N_{\bm{A}}(j,1)$, $S_n(i, j, r^{-1}\bm{T}_n, \bm{A}) = S_{\abs{H}}(i, j, r^{-1}\bm{T}_H, \bm{A}_H)$.
\end{assump}

\noindent That is, $S_{ij}$ is only a function of nodes linked to either $i$ or $j$. Most statistics used in the literature satisfy this assumption, including those in Examples \ref{elinear} and \ref{esheng}, which appear to be the most common choices \citep{christakis2020empirical,leung2016,mele2017,sheng2014}. Dependence of $S_n(\cdot)$ on types also allows for covariate-weighted versions of these examples, such as the weighted degree $\sum_{k=1}^n A_{ik} Z_k$.

The model thus far is incomplete since multiple networks $\bm{A}$ may satisfy \eqref{model} due to strategic interactions. Let $\mathcal{E}(r_n^{-1}\bm{T}_n, \Zeta_n)$ denote the set of such networks, which corresponds to the set of pairwise stable equilibria. To complete the model, we introduce a {\em selection mechanism}, which is the reduced-form mapping from primitives to the observed network. It is a representation of the latent social process by which nodes coordinate on an equilibrium.

\begin{assump}[Equilibrium Selection]\label{aesel} 
  For any $r>0$ and $n\in\mathbb{N}$, an equilibrium exists in that $\mathcal{E}(r^{-1}\bm{T}_n, \Zeta_n)$ is non-empty, and there exists a permutation-equivariant function $\lambda_n(\cdot)$ (the {\em equilibrium selection mechanism}) such that $\bm{A} = \lambda_n(r^{-1}\bm{T}_n, \Zeta_n) \in \mathcal{E}(r^{-1}\bm{T}_n, \Zeta_n)$.\footnote{Following the notation in \autoref{pinvar}, we say $\lambda_n(\cdot)$ is permutation-equivariant if $\pi(\lambda_n(r^{-1}\bm{T}_n, \Zeta_n)) = \lambda_n(\pi(r^{-1}\bm{T}_n), \pi(\Zeta_n))$, where $\pi(\Zeta_n)$ is defined similarly to $\pi(\bm{A})$.}
\end{assump}

\begin{remark}\label{rselectmech}
  The empirical games literature typically represents the equilibrium selection mechanism as a conditional distribution $\sigma(\cdot \mid r^{-1}\bm{T}_n, \Zeta_n)$ over $\mathcal{E}(r^{-1}\bm{T}_n, \Zeta_n)$. We represent it as a deterministic function, which is more convenient for our purposes, especially for formulating \autoref{nocoord} below. We next show that for any conditional distribution, we can construct a deterministic function that induces the same distribution over equilibria. The main idea is that types can always include payoff-irrelevant components that can be used to randomize over equilibria. Specifically, let $\{\nu_i\}_{i=1}^n \stackrel{iid} \sim \mathcal{U}([0,1])$ be independent of all structural primitives. These will serve to generate a distribution over equilibria conditional on structural primitives. Redefine $r^{-1}\bm{T}_n = ((r^{-1}X_i,Z_i,\nu_i))_{i \in \N_n}$ (scaled types with payoff-irrelevant components), and let $r^{-1}\tilde{\bm{T}}_n = ((r^{-1}X_i,Z_i))_{i \in \N_n}$ (scaled types as originally defined).

  The set of equilibria is given by $\mathcal{E}(r^{-1}\tilde{\bm{T}}_n, \Zeta_n) = \{\bm{A}_k\}_{k=1}^m$. Consider any conditional distribution $\sigma$ over this set, where we abbreviate $\sigma_k \equiv \sigma(\bm{A}_k \mid r^{-1}\tilde{\bm{T}}_n, \Zeta_n)$. Define $\nu^* = F(1 - \max \{\nu_i\}_{i=1}^n)$, where $F$ is the CDF of $1 - \max \{\nu_i\}_{i=1}^n$. Partition $[0,1]$ into $m$ intervals of lengths $\sigma_1, \ldots, \sigma_m$, and let $\lambda_n(r^{-1}\bm{T}_n, \Zeta_n)$ be the function that selects equilibrium $\bm{A}_k$ if $\nu^*$ falls within the interval associated with $\sigma_k$. Since $\nu^* \sim \mathcal{U}([0,1])$ for any $n$, we have $\prob(\lambda_n(r^{-1}\bm{T}_n, \Zeta_n) = \bm{A}_k \mid r^{-1}\tilde{\bm{T}}_n, \Zeta_n) = \sigma_k$. In other words, our deterministic construction of $\lambda_n(\cdot)$ generates the desired conditional distribution over equilibria.
\end{remark}

%----------------------
\subsection{Network Moments}\label{smoms}
%----------------------

Our objective is to prove a CLT for network moments that are averages of {\em node statistics} 
\begin{equation*}
  \frac{1}{n} \sum_{i=1}^n \psi_n(i,r_n^{-1}\bm{T}_n, \Zeta_n, \bm{A}),
\end{equation*}

\noindent where $\{\psi_n\}_{n\in\mathbb{N}}$ is a sequence of $\R^{d_\psi}$-valued, permutation-invariant functions (see \autoref{pinvar}). We often abbreviate $\psi_i(\N_n) \equiv \psi_n(i,r_n^{-1}\bm{T}_n, \Zeta_n, \bm{A})$.

\begin{example}[Subnetwork Counts]\label{esubcts}
  A simple network moment is the {\em average degree} for which the node statistic is the degree $\psi_i(\N_n) = \sum_{j=1}^n A_{ij}$. This is a permutation-invariant function of the network. Average degree is proportional to the dyad count (number of links). More generally, we can count any other connected subnetwork, such as the number of triangles, $k$-stars, or complete networks on $k$-tuples. For instance, the triangle count is proportional to $\sum_{i,j,k} A_{ij}A_{jk}A_{ik}$, with corresponding node statistic $\psi_i(\N_n) = \sum_{j,k} A_{ij}A_{jk}A_{ik}$. See \autoref{sbdmoms} for a formal definition of subnetwork counts.
\end{example}

More generally, we consider the following class of node statistics that includes the previous examples. Recall that $\N_{\bm{A}}(i,K)$ is $i$'s $K$-neighborhood, defined in \autoref{sintro}. 

\begin{assump}[Node Statistics]\label{dyklocal}
  For some $K\in\mathbb{Z}_+$ and any $r>0$, $n\in\mathbb{N}$, $i\in\N_n$, and $H = \N_{\bm{A}}(i,K)$, $\psi_n(i, r^{-1}\bm{T}_n, \Zeta_n, \bm{A}) = \psi_{\abs{H}}(i, r^{-1}\bm{T}_H, \Zeta_H, \bm{A}_H)$.
\end{assump}

\noindent This states that the node statistic only depends on the types, random-utility shocks, and subnetwork of nodes on $i$'s $K$-neighborhood. The average degree and triangle count satisfy this for $K=1$. A more complex example is the number of nodes at most path distance $D$ from node $i$, which satisfies this assumption for $K=D$. 

\begin{example}\label{eshenginf}
  Subnetwork counts are useful for structural inference. Suppose $V(\cdot)$ is known up to some vector of parameters $\theta_0$. \cite{sheng2014} defines an identified set for $\theta_0$ in terms of moment inequalities of the form
  \begin{equation}
    \frac{1}{n} \sum_{i_1=1}^n \dots \sum_{i_m=1}^n  \E[(G_{i_1,\ldots,i_m} - H_{i_1,\ldots,i_m}(\theta_0)) q_{i_1,\ldots,i_m}(\bm{T}_n)] \leq 0.\footnote{Technically this is proportional to the expectation of (5.3) in \cite{sheng2014}. Our scaling is different because we consider sparse networks.} \label{sheng}
  \end{equation}

  \noindent Here $\sum_{i_1=1}^n \ldots \sum_{i_m=1}^n  G_{i_1,\ldots,i_m}$ is proportional to a count of a particular connected subnetwork of size $m$. For example, for counting triangles, which are fully connected subnetworks with $m=3$, $G_{i_1,\ldots,i_m} = A_{i_1i_2}A_{i_2i_3}A_{i_3i_1}$. The upper bound $H_{i_1,\ldots,i_m}(\theta_0)$ and instrument function $q_{i_1,\ldots,i_m}(\bm{T}_n)$ are known, deterministic functions of the observed component of $\bm{T}_n$. 

  Furthermore, $n^{-1} \sum_{i_1=1}^n \ldots \sum_{i_m=1}^n H_{i_1,\ldots,i_m}(\theta_0) q_{i_1,\ldots,i_m}(\bm{T}_n)$ is proportional to a U-statistic of order $m$ with a kernel that is a deterministic function of $\{(X_{i_k},Z_{i_k})\}_{k=1}^m$. By the Hoeffding decomposition, it equals $n^{-1} \sum_{i=1}^n J_i(\theta_0) + o_p(n^{-1/2})$ for some $J_i(\theta_0)$ with the same mean that is a deterministic function of $i$'s type $(X_i,Z_i)$. Hence, $J_i(\theta_0)$ satisfies \autoref{dyklocal} for $K=0$. Additionally, $\tilde\psi_{i_1}(\N_n) \equiv \sum_{i_2=1}^n \dots \sum_{i_m=1}^n G_{i_1,\ldots,i_m} q_{i_1,\ldots,i_m}(\bm{T}_n)$ is a node statistic satisfying \autoref{dyklocal} for some $K \leq m-1$ since the subnetwork is connected. We have therefore shown that
  \begin{multline*}
    \frac{1}{\sqrt{n}} \sum_{i_1=1}^n \dots \sum_{i_m=1}^n \big[ (G_{i_1,\ldots,i_m} - H_{i_1,\ldots,i_m}(\theta_0)) q_{i_1,\ldots,i_m}(\bm{T}_n) \\ - \E[(G_{i_1,\ldots,i_m} - H_{i_1,\ldots,i_m}(\theta_0)) q_{i_1,\ldots,i_m}(\bm{T}_n)] \big] \\ = \frac{1}{\sqrt{n}} \sum_{i=1}^n \big(\underbrace{\tilde\psi_i(\N_n) - J_i(\theta_0)}_{\psi_i(\N_n)} - \E[\tilde\psi_i(\N_n) - J_i(\theta_0)]\big) + o_p(1).
  \end{multline*}

  \noindent Our CLT can be applied to the first term on the last line.
\end{example}

%----------------------------------------------------------------------
\section{Stabilization}\label{smethpf}
%----------------------------------------------------------------------

This section provides high-level conditions for a CLT. Readers interested in the application to strategic network formation may wish to skip to \autoref{scltstatic} where we state primitive sufficient conditions.

Recall from \autoref{aesel} that for any $H \subseteq \N_n$, $\lambda_{\abs{H}}(r_n^{-1}\bm{T}_H, \Zeta_H)$ is the equilibrium network under the counterfactual model in which the set of nodes is $H$ rather than $\N_n$. Abusing notation, for $i \in H$, we define $i$'s {\em counterfactual node statistic}
\begin{equation}
  \psi_i(H) \equiv \psi_{\abs{H}}\big(i, r_n^{-1}\bm{T}_H, \Zeta_H, \lambda_{\abs{H}}(r_n^{-1}\bm{T}_H, \Zeta_H)\big). \label{psiH}
\end{equation}

\noindent For $H=\N_n$, this coincides with the original definition. For $H\subset \N_n$, the counterfactual interpretation is due entirely to the last argument. Contrast \eqref{psiH} with
\begin{equation}
  \psi_{\abs{H}}(i, r_n^{-1}\bm{T}_H, \Zeta_H, \bm{A}_H).
  \label{contrast}
\end{equation}

\noindent For example, if $\psi_i(\N_n)$ is the degree of node $i$ on the full network, then \eqref{contrast} is $i$'s degree in the observed subnetwork restricted to $H$, which can be directly computed from observation of $\bm{A}$. Generally, this does not equal \eqref{psiH} because the networks $\bm{A}_H$ and $\lambda_{\abs{H}}(r_n^{-1}\bm{T}_H, \Zeta_H)$ do not coincide due to strategic interactions, which induce nodes in $H$ to form different links when nodes $\N_n\backslash H$ are absent from the model. 

Recall that $d$ is the dimension of $X_1$. For any $B \subseteq \R^d$, let $\N_n(B) = \{i\in\N_n\colon r_n^{-1}X_i \in B\}$, the set of nodes with scaled positions lying in the set $B$. For $x \in \R^d$ and $R \geq 0$, let $B(x,R) = \{y \in \R^d\colon \norm{x-y} \leq R\}$.

\begin{definition}\label{rosdef}
  For any $r > 0$, $n\in\mathbb{N}$, and $i \in \N_n$, the {\em radius of stabilization} 
  \begin{equation*}
    \RR_i(n,r) \equiv \RR(i, r^{-1}\bm{T}_n, \Zeta_n, \lambda_n)
  \end{equation*}
  
  \noindent is the smallest integer-valued $R\geq 0$ such that $\psi_i(\N_n) = \psi_i(\N_n(\mathcal{B}_i))$ for all sets $\mathcal{B}_i \subseteq \R^d$ containing $B(r^{-1}X_i, R)$.
\end{definition}

\noindent This is the smallest radius $R$ such that $i$'s node statistic has the same value under all counterfactual models that drop nodes positioned outside of $i$'s neighborhood $B(r^{-1}X_i, R)$. 

The main idea is as follows. If this radius were ``small,'' then $\psi_i(\N_n)$ would primarily depend on a small fraction of nodes, specifically those positioned near $i$, in which case node statistics should be weakly dependent. To establish a law of large numbers (LLN), \cite{leung2016} requires the radius to be $O_p(1)$, analogous to a condition due to \cite{py2003}. However, a CLT seems to require stronger conditions on the distribution of the radius, in particular the thickness of its tails \citep{penrose2007gaussian}. 

%----------------------
\subsection{Abstract CLT}\label{sabsclt}
%----------------------

We first define some notation and the sequence along which we take limits. Let $\{(X_i,Z_i)\}_{i\in\mathbb{N}}$ and $\{\zeta_{ij}\colon i\in\N_n, j>i\}$ both be i.i.d.\ and independent, and assume the density $f$ of $X_1$ is bounded. We consider a sequence indexed by $n\in\mathbb{N}$ such that the $n$th element of the sequence is given by the tuple 
\begin{equation*}
  (r_n^{-1}\bm{T}_n, \Zeta_n, \lambda_n, \psi_n).
\end{equation*}

\noindent The first three components are structural primitives that determine $\bm{A}$, while the last is the node statistic function. 

For technical reasons, we need to consider sequences of models indexed by $n$ in which the network size is not $n$ but rather of the same asymptotic order as $n$. Let $\{N_n\}_{n\in\mathbb{N}}$ be a sequence of random variables independent of all model primitives such that
\begin{equation}
  N_n \sim \text{Poisson}(n).
  \label{poisson}
\end{equation}

\noindent A de-Poissonization argument discussed below requires us to consider two different network sizes: a random size $N_n+k$, where $k$ is a constant, and a non-random size $m+k$, where $m=m_n$ with $m_n/n \rightarrow c \in (0,\infty)$. Our original setup with $n$ nodes corresponds to $m=n$ and $k=0$, and no intuition is lost to the reader who only considers this case. The need to consider models with different network sizes is purely for technical reasons clarified in \autoref{spfsketch}.

Define $\mathcal{T} = \text{supp}(X_1,Z_1)$, $\mathcal{T}^k = \bigtimes_{i=1}^k \mathcal{T}$, and $\bm{T}_k = ((X_i,Z_i))_{i=1}^k$. Our first assumption is the key high-level weak dependence condition that controls the tail behavior of the distribution of the radius of stabilization, or more formally the conditional probability $\prob(\RR_1(m+k,r_n) > w \mid \bm{T}_k = \bm{t}_k)$. Since probabilities conditional on continuous random variables are not uniquely defined, we require some additional notation to phrase the assumption in terms of a {\em version} of the conditional probability. Let $m,k \in \mathbb{N}$ and $\bm{t}_k \in \mathcal{T}^k$. Construct $r_n^{-1}\bm{T}_{m+k}^\star(\bm{t}_k)$ from $r_n^{-1}\bm{T}_{m+k}$ by replacing the types of the first $k$ nodes $\bm{T}_k$ with fixed values $\bm{t}_k$. Define $\RR_1^\star(m+k,r_n;\bm{t}_k) \equiv \RR(1, r_n^{-1}\bm{T}_{m+k}^\star(\bm{t}_k), \Zeta_{m+k}, \lambda_{m+k})$. Then $\prob(\RR_1^\star(m+k,r_n;\bm{t}_k) > w)$ is a version of the conditional probability $\prob(\RR_1(m+k,r_n) > w \mid \bm{T}_k = \bm{t}_k)$.

\begin{assump}[Exponential Stabilization]\label{main-bp-exp-stab}
  For any $k \in \{1,\dots,4\}$ and $k'\in\{1,2\}$, there exist $n_0,\epsilon>0$ and $\eta \in (0,1]$ such that 
  \begin{align*}
    &\limsup_{w\rightarrow\infty} w^{-\eta} \max\{ \log \tau_{b,\epsilon}(w), \log \tau_p(w)\} < 0, \quad\text{where} \\
    &\tau_{b,\epsilon}(w) = \sup_{n>n_0} \sup_{m \in ((1-\epsilon)n, (1+\epsilon)n)} \sup_{\bm{t}_k \in \mathcal{T}^k} \prob\big( \RR_1^\star(m+k,r_n;\bm{t}_k) > w \big), \\
    &\tau_p(w) = \sup_{n>n_0} \sup_{\bm{t}_{k'} \in \mathcal{T}^{k'}} \prob\big( \RR_1^\star(N_n+k',r_n;\bm{t}_{k'}) > w \big).
  \end{align*}
\end{assump} 

\noindent For $\eta=1$, this says that the distribution of the radius of stabilization has an exponential tail under models with $N_n+k$ and $m+k$ nodes, uniformly over the types of the first $k$ nodes.\footnote{The first $k$ is an arbitrary choice of $k$ nodes. Note that nodes are exchangeable in our model since types and random-utility shocks are i.i.d.\ and the equilibrium selection mechanism is permutation-equivariant.} Technically the assumption allows for slower than exponential decay with $\eta < 1$, but it is otherwise analogous to conditions used by \cite{penrose2005normal} and \cite{penrose2007gaussian}. Our method of proof for the result below is based on theirs but with important differences in setup discussed in \autoref{srelit}.

Contrast \autoref{main-bp-exp-stab} with the more familiar concept of $m$-dependence, which states that an observation is only correlated with nodes in some non-random neighborhood of known radius $m$. Stabilization generalizes this to allow neighborhoods to be node-specific, random, and complex functions of the primitives. In general, bounding the size of this set is far from trivial, and one of our main contributions is to demonstrate that branching processes can be used for this purpose to derive primitive sufficient conditions, as discussed in \autoref{smainres}.

The next assumption imposes a moment condition. Similar to the previous assumption, we require additional notation to phrase the condition in terms of a version of a conditional expectation. Let $k\in\mathbb{N}$, $\bm{t}_k \in \mathcal{T}^k$, and $H \subseteq \mathbb{N}$ be a finite set containing $\{1, \ldots, k\}$. Construct $r_n^{-1}\bm{T}_H^\star(\bm{t}_k)$ from $r_n^{-1}\bm{T}_H$ by replacing the types of the first $k$ nodes $\bm{T}_k$ with fixed values $\bm{t}_k$. Define $\psi_1^\star(H; \bm{t}_{k}) \equiv \psi_{\abs{H}}(1, r_n^{-1}\bm{T}_H^\star(\bm{t}_k), \Zeta_H, \lambda_{\abs{H}})$, so that $\E[\psi_1^\star(H; \bm{t}_{k})]$ is a version of the conditional expectation $\E[\psi_1(H) \mid \bm{T}_k = \bm{t}_k]$. Finally, let $\norm{x}_\infty$ denote the entry-wise maximum of a vector or matrix $x$.

\begin{assump}\label{main-bp-mom}
  (a) There exist $p>2,\epsilon>0$, $M<\infty$, and $n_0 \in \mathbb{N}$ such that for all $n>n_0$, $k\in\{1,2,3\}$, $m \in ((1-\epsilon)n, (1+\epsilon)n)$, and $\bm{t}_k\in\mathcal{T}^k$,
  \begin{equation*}
    \max\big\{ \E[\norm{\psi_1^\star(\N_{m+k}; \bm{t}_k)}_\infty^p], \E[\norm{\psi_1^\star(\N_{N_n+k} \cap (H_n\cup\{1\}); \bm{t}_k)}_\infty^p] \big\} < M
  \end{equation*}

  \noindent for any sequence of sets $H_n \subseteq \mathbb{N}$. (b) For any $r>0$, $n\in\mathbb{N}$, $i\in\N_n$, and $x \in \R^d$,
  \begin{align*}
    &\psi_n(i, r^{-1}\bm{T}_n, \Zeta_n, \bm{A}) = \psi_n\big(i, ((r^{-1}X_j+x,Z_j))_{j=1}^n, \Zeta_n, \bm{A}\big) \quad\text{and} \nonumber \\ &\lambda_n(r^{-1}\bm{T}_n, \Zeta_n) = \lambda_n\big((r^{-1}X_j+x,Z_j))_{j=1}^n, \Zeta_n\big).
  \end{align*} 
\end{assump} 

\noindent Part (a) requires node statistics to have bounded $p>2$ moments, uniformly over the types of $k$ nodes. We provide primitive conditions in \autoref{sbdmoms} for the case of subnetwork counts (\autoref{esubcts}). 

Part (b) says that node statistics and the selection mechanism are invariant to additive shifts in scaled positions $\{r^{-1}X_i\}_{i=1}^n$, which holds if these only enter through scaled distances $r^{-1}\norm{X_i-X_j}$. Because $X_i$ and $Z_i$ may be arbitrarily dependent, this still allows $X_i$ (unscaled) to enter the model as a subvector of $Z_i$. In Examples \ref{elinear} and \ref{esheng}, the latent index only depends on types through scaled distances or $Z_i$, so requiring the selection mechanism and node statistics to satisfy the same property does not apparently rule out any economically interesting applications.\footnote{For example, \autoref{main-bp-mom}(b) disallows network moments such as $n^{-1} \sum_{i=1}^n \sum_{j=1}^n r^{-1} X_j A_{ij}$, but since $X_i$ and $Z_i$ may be arbitrarily correlated, it allows for $n^{-1} \sum_{i=1}^n \sum_{j=1}^n X_j A_{ij}$.}

Let $\bm{\Sigma}_n = n^{-1} \var(\sum_{i=1}^n \psi_i(\N_n))$, $\lambda_\text{min}(\bm{\Sigma}_n)$ be its smallest eigenvalue, and $\bm{I}$ denote the $d_\psi$-dimensional identity matrix.

\begin{theorem}\label{master-clt}
  Under Assumptions \ref{main-bp-exp-stab} and \ref{main-bp-mom}, $\sup_n \norm{\bm{\Sigma}_n}_\infty < \infty$. Further suppose that $\liminf_{n\rightarrow\infty} \lambda_\text{min}(\bm{\Sigma}_n) > 0$. Then
  \begin{equation}
    \bm{\Sigma}_n^{-1/2} \frac{1}{\sqrt{n}} \sum_{i=1}^n \big( \psi_i(\N_n) - \E[\psi_i(\N_n)] \big) \dlimarrow \N(\zero,\bm{I}).  \label{thenormal}
  \end{equation}
\end{theorem}
\begin{proof}
  See \autoref{smaster}.
\end{proof}

The proof does not rely on the setup or assumptions in \autoref{smodel}, other than the distributional assumptions at the top of \autoref{smodel} and the requirement that $\lambda_n(\cdot)$ is a deterministic, permutation-equivariant function of the structural primitives. The other assumptions will be used in the next section to verify \autoref{main-bp-exp-stab} for strategic network formation.\footnote{\autoref{master-clt} can be applied to other network models. For instance, in \cite{leung2019inference}, $\lambda_n(\cdot)$ is the reduced-form mapping that takes as input the structural primitives and outputs both the network and the outcome of a second-stage social interactions model given the network.}

%----------------------
\subsection{Outline of Proof}\label{spfsketch}
%----------------------

{\bf Step 1.} We first establish a CLT for the ``Poissonized'' model in which the number of nodes is $N_n$ defined in \eqref{poisson}, so-called because $\{X_i\}_{i=1}^{N_n}$ has the same distribution as a Poisson point process with intensity function $nf(\cdot)$ \citep[][Proposition 1.5]{penrose2003}. Specifically, we show 
\begin{equation}
  \tilde{\bm{\Sigma}}_n^{-1/2} \frac{1}{\sqrt{n}}\left( \sum_{i=1}^{N_n} \psi_i(\N_{N_n}) - \E\left[ \sum_{i=1}^{N_n} \psi_i(\N_{N_n}) \right] \right) \dlimarrow \N(\zero,\bm{I}), \label{poistep}
\end{equation}

\noindent where $\tilde{\bm{\Sigma}}_n = n^{-1} \var(\sum_{i=1}^{N_n} \psi_i(\N_{N_n}))$. This is simpler to prove directly because, unlike $\{X_i\}_{i=1}^n$, the Poisson process $\{X_i\}_{i=1}^{N_n}$ possesses a well-known spatial independence property: for any disjoint subsets $S_1,S_2 \subseteq \R^d$, $\abs{\{X_i\}_{i=1}^{N_n} \cap S_1} \indep \abs{\{X_i\}_{i=1}^{N_n} \cap S_2}$. 

We prove \eqref{poistep} by adapting a theorem due to \cite{penrose2005normal} (see our \autoref{poi-clt}). We proceed by first partitioning the support of $X_1$ into cubes $Q_1, \dots, Q_{V_n}$ of slowly growing volume and representing the moment $n^{-1} \sum_{i=1}^{N_n} \psi_i(\N_{N_n})$ as a double sum over cubes and nodes within cubes $n^{-1} \sum_{i=1}^{V_n} \sum_{j=1}^{N_i} \psi_{ij}(\N_{N_n})$, where $N_i$ is the number of nodes positioned in $Q_i$ and $ij$ indexes the $j$th node in $Q_i$. Spatial independence of the Poisson process implies independence of node positions across cubes. However, node statistics are complex functionals of the process and hence are not generally independent across cubes.

Since stabilization implies that node statistics $\psi_{ij}(\N_n)$ are primarily determined by nodes relatively proximate to $ij$, and hence cubes relatively proximate to $Q_i$, it can be shown that the dependence structure of $\{\sum_{j=1}^{N_i} \psi_{ij}(\N_{N_n})\}_{i=1}^{V_n}$ is ``approximately'' characterized by a certain ``dependency graph.'' This is a network in which, roughly speaking, two observations are linked if and only if they are dependent. In our case, observations are cubes, which we connect if and only if they are relatively proximate. Careful construction of the cubes ensures a small approximation error for the dependency graph characterization, and applying a CLT for dependency graphs \citep{chen2004normal} delivers the result. For precise details, see \autoref{spois}.

\bigskip
\noindent {\bf Step 2.} Since $N_n/n \plimarrow 1$, $\{X_i\}_{i=1}^n$ and $\{X_i\}_{i=1}^{N_n}$ should be ``similar,'' so given \eqref{poistep}, we expect a similar result for the original model. The second ``de-Poissonization'' step of the proof, which follows \cite{penrose2007gaussian}, shows that this intuition is correct, provided we properly adjust $\tilde{\bm{\Sigma}}_n$ downward to obtain the correct variance $\bm{\Sigma}_n$ (see our \autoref{pen2.3}). This is needed because $N_n$ contributes additional randomness to the asymptotic distribution. 

More specifically, define the {\em add-one cost}
\begin{equation}
  \Xi_n = \psi_{n+1}(\N_{n+1}) + \sum_{i=1}^n \big( \psi_i(\N_{n+1}) - \psi_i(\N_n) \big). \label{add1}
\end{equation}

\noindent This is the aggregate counterfactual impact on the total $\sum_{i=1}^n \psi_i(\N_n)$ from adding a new node labeled $n+1$ to the model. The first term is the direct effect of adding $n+1$, which is its own node statistic. The second term is the indirect effect, which is the new node's impact on the statistics of all other nodes. A key step of the proof establishes that
\begin{multline*}
  n^{-1/2} \left( \sum_{i=1}^{N_n} \psi_i(\N_{N_n}) - \E\left[ \sum_{i=1}^{N_n} \psi_i(\N_{N_n}) \right] \right) \\ = n^{-1/2} \left( \sum_{i=1}^n \psi_i(\N_n) - \E\left[ \sum_{i=1}^n \psi_i(\N_n) \right] \right) \\ + n^{-1/2}(N_n-n)\E[\Xi_{N_n}] + o_p(1).
\end{multline*}

\noindent This may be viewed as a first-order expansion in the number of nodes, comparing $N_n$ to $n$. The ``derivative'' is $\E[\Xi_{N_n}]$ since it captures the change in moments as a result of a unit increment in the number of nodes. By \eqref{poistep}, the left-hand side is asymptotically normal, and by the Poisson CLT, so is $n^{-1/2}(N_n-n)\E[\Xi_{N_n}]$. Because $N_n$ is independent of all other primitives, we can then establish that the first term on the right-hand side is asymptotically normal by an argument using characteristic functions; for details see the end of \autoref{sdepois}.

%----------------------
\subsection{Related Literature}\label{srelit}
%----------------------

The proof is closely based on arguments in \cite{penrose2005normal} and \cite{penrose2007gaussian}, whose results pertain to geometric graphs without strategic interactions. The innovation in \autoref{master-clt} is primarily conceptual, namely, the recognition that an appropriate modification of stabilization allows us to adapt their results to econometric models. Our main technical innovation will be discussed in \autoref{scltstatic}, namely the use of branching processes to derive primitive conditions for stabilization in models with strategic interactions. \cite{leung2016} uses branching processes to establish an LLN, while we tackle the more difficult task of proving a CLT, which requires us to establish new tail bounds for the radius of stabilization (\autoref{dyDC}).

The setup and assumptions used in Penrose's work are not directly applicable to our setting, so we cannot simply verify their conditions. However, we show that their proofs can be translated to our setting, which differs in three main aspects. The first is the definition of the radius of stabilization. We reformulate the definition in terms of counterfactual models \eqref{psiH} and require invariance of $i$'s node statistic to the removal of nodes outside of $\N_n(\mathcal{B}_i)$. Existing definitions demand invariance to the removal {\em and addition} of new nodes, but invariance to addition is typically violated in the models we study due to strategic interactions. Second, $X_i$ may be correlated with $Z_i$ in our setup, whereas the literature requires independence, but this turns out to have little effect on the proofs. Third, our model includes pair-specific shocks $\zeta_{ij}$, which pose little problem due to their high degree of independence. These are independent across pairs, in contrast to type pairs $((X_i,Z_i),(X_j,Z_j))$ which are correlated across pairs sharing a common node, for example $(i,j)$ and $(i,k)$.

%----------------------------------------------------------------------
\section{CLT for Network Formation}\label{scltstatic}
%----------------------------------------------------------------------

Exponential stabilization (\autoref{main-bp-exp-stab}) provides a high-level formulation of weak dependence. This section derives primitive conditions for the network formation model in \autoref{smodel}, so throughout this section we work under its setup. We begin in \autoref{sstrnbh} by introducing key definitions used in \autoref{sdepstr} to explain two sources of cross-sectional dependence induced by the model. These motivate the weak dependence conditions stated in \autoref{sSOI} and \autoref{scoord}. In \autoref{smainres} we present the main result, that these conditions imply exponential stabilization. We outline the method of proof in \autoref{smeth2}. 

%----------------------
\subsection{Strategic Neighborhood}\label{sstrnbh}
%----------------------

Recall the definition of the joint surplus from \eqref{model}, and let
\begin{multline}
  D_{ij} = \ind\big\{ \sup_s V(r_n^{-1}\norm{X_i-X_j},s,Z_i,Z_j,\zeta_{ij}) > 0 \big\} \\ \times \ind\big\{ \inf_s V(r_n^{-1}\norm{X_i-X_j},s,Z_i,Z_j,\zeta_{ij}) \leq 0 \big\}. \label{Dij}
\end{multline}

\noindent This is an indicator for whether the potential link $A_{ij}$ is {\em non-robust}. If $\inf_s V(r_n^{-1}\norm{X_i-X_j},s,Z_i,Z_j,\zeta_{ij}) > 0$, then $A_{ij}=1$, and the link is {\em robust} in that the joint surplus is positive regardless of what other links are formed. This is because $\bm{A}$ enters $V(\cdot)$ only through $S_{ij}$. Likewise, if $\sup_s V(r_n^{-1}\norm{X_i-X_j},s,Z_i,Z_j,\zeta_{ij}) \leq 0$, then $A_{ij}=0$, and the link is {\em robustly absent} in that the joint surplus is negative regardless of what other links are formed. In either case, $D_{ij}=0$. If instead $D_{ij}=1$, then $A_{ij}$ may be 0 or 1, and the potential link is non-robust in that the sign of the joint surplus is responsive to links formed by others.

Let $\bm{D}$ be the network of non-robustness indicators with $ij$th entry $D_{ij}$. Let $C_i$ denote $i$'s component in $\bm{D}$, recalling from \autoref{sintro} that a component is a connected subnetwork that is disconnected from the rest of the network. Let $\bm{\Pi}$ be the network of robust link indicators with $ij$th entry 
\begin{equation*}
  \Pi_{ij} = \ind\big\{ \inf_s V(r_n^{-1}\norm{X_i-X_j},s,Z_i,Z_j,\zeta_{ij}) > 0 \big\}.
\end{equation*}

\noindent Recall that $\N_{\bm{\Pi}}(i,1)$ denotes $i$'s 1-neighborhood in $\bm{\Pi}$, which includes $i$ itself. A crucial concept for what follows is a node's {\em strategic neighborhood}, given by
\begin{equation}
  C_i^+ = \bigcup \left\{\N_{\bm{\Pi}}(j,1)\colon j \in C_i\right\}. \label{stratneigh}
\end{equation}

\noindent This adds to $C_i$ the set of all nodes that possess a robust link to some member of $C_i$.

\begin{example}\label{enonrobust}
  Consider \autoref{elinear}, and suppose $\theta_2 > 0$. Then $A_{ij}$ is robust if $\theta_1 -\theta_3 (r_n^{-1}\norm{X_i-X_j})^2 + \zeta_{ij} > 0$ and robustly absent if $\theta_1  + \theta_2 - \theta_3 (r_n^{-1}\norm{X_i-X_j})^2 + \zeta_{ij} \leq 0$. The non-robust indicator is
  \begin{equation*}
    D_{ij} = \ind\big\{ -\theta_2 < \theta_1 - \theta_3 (r_n^{-1}\norm{X_i-X_j})^2 + \zeta_{ij} \leq 0 \big\}.
  \end{equation*}

  \noindent As the strength of strategic interactions $\theta_2$ increases, so does the right-hand side, and hence, the likelihood of non-robustness. In the case of no strategic interactions ($\theta_2=0$), there are no non-robust potential links.

  We can compute $C_i$ as follows. Initialize $C_i = \{i\}$, add $i$'s neighbors in the network $\bm{D}$ to the set, and then iteratively add neighbors of neighbors in the manner of a breadth-first search until there are no new nodes to add. To compute $C_i^+$, we set $\Pi_{ij} = \ind\{ \theta_1 - \theta_3 (r_n^{-1}\norm{X_i-X_j})^2 + \zeta_{ij} > 0 \}$ for all $i\neq j$ and add to $C_i$ all nodes that are neighbors under $\bm{\Pi}$ of some member of $C_i$.
\end{example}

The previous example illustrates how $D_{ij}$ is increasing in the strength of strategic interactions. Stronger interactions then imply that $C_i^+$ is a larger set for any given realization of the primitives. This suggests that, {\em when strategic neighborhoods are likely small in size, we expect weaker strategic interactions and hence weaker cross-sectional dependence.} We elaborate on this point in the next subsection.

\begin{figure}[ht]
  \centering
  \includegraphics[scale=0.45]{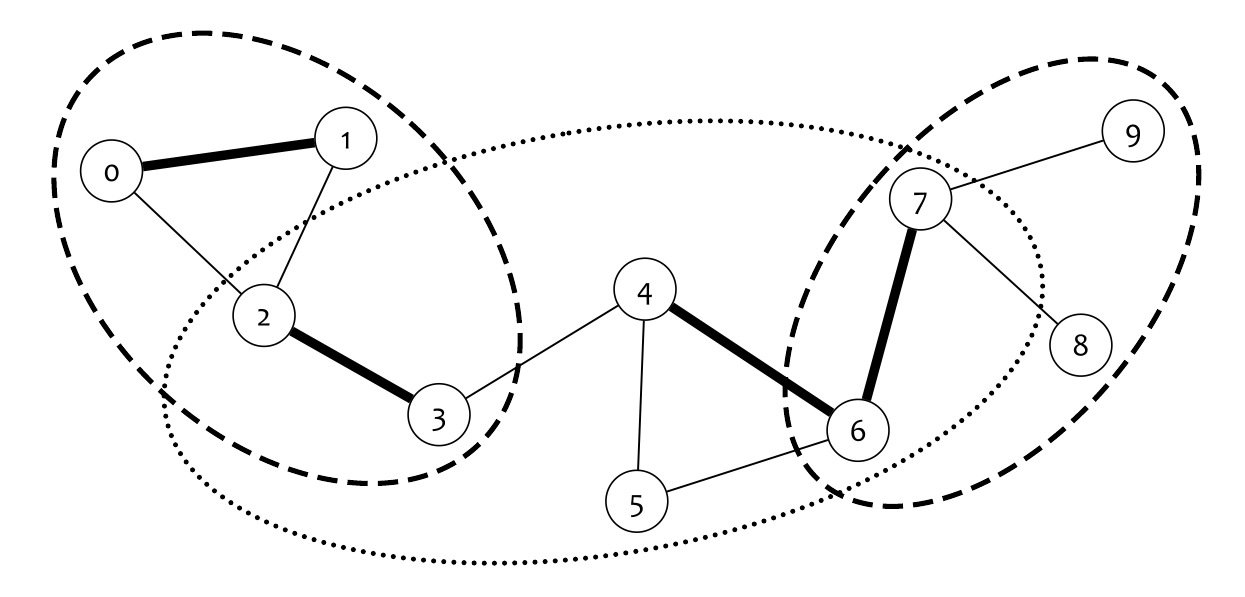}
  \caption{A network with three strategic neighborhoods.}\label{fsn} 
\end{figure}

\begin{example} \label{exstrngb}
  Let the primitives $(r_n^{-1}\bm{T}_n, \Zeta_n)$ be given, which fully determine the realizations of non-robustness indicators $\bm{D}$ and robust link indicators $\bm{\Pi}$ defined in \autoref{enonrobust}. Suppose their realizations are depicted in \autoref{fsn}, where thin lines represent non-robust potential links (those of $\bm{D}$), solid lines represent robust links (those of $\bm{\Pi}$), and the absence of a line between two nodes represents a robustly absent link. The observed network $\bm{A}$ is a subnetwork of the depicted network because if $D_{ij}=1$, then $A_{ij}$ is either 1 or 0, while if $D_{ij} = 0$, then the link between $i,j$ is either robustly absent, in which case $A_{ij} = 0$, or robust ($\Pi_{ij}=1$), in which case $A_{ij}=1$. The components of $\bm{D}$ are the three subnetworks obtained by removing the solid lines: $\{0,1,2\}$, $\{3,4,5,6\}$, and $\{7,8,9\}$. The strategic neighborhoods are the three circled subnetworks $\{0,1,2,3\}$, $\{2,\ldots,7\}$, and $\{6,7,8,9\}$, obtained by adding to each component the set of nodes with solid lines to any member of the component. Notice that strategic neighborhoods may have nodes in common, whereas components necessarily partition $\N_n$.
\end{example}

%----------------------
\subsection{Dependence Structure}\label{sdepstr}
%----------------------

Models of strategic network formation induce two forms of cross-sectional dependence, one due to link interdependencies directly induced by strategic interactions, and the other due to equilibrium selection. To illustrate, consider a model with $n=9$ nodes and realized structural primitives such that \autoref{fsn} depicts the resulting robust and non-robust potential links. Consider the impact on the realized network $\bm{A}$ of a hypothetical intervention that perturbs the value of $\zeta_{56}$. 

%------------
\subsubsection{Best-Response Chains}\label{sbrc}
%------------

The perturbation changes the joint surplus $V_{56}$ to some new value $V_{56}^{(1)}$, both evaluated under $\bm{A}$. If the signs of these values differ, then $A_{56}$ is no longer pairwise stable, so suppose in response the pair updates their potential link $A_{56}$ to $A_{56}^{(1)} = \ind\{V_{56}^{(1)} > 0\}$, resulting in the network $\bm{A}_1$. This update may affect the joint surplus of other potential links, for instance $V_{67}$ via $S_{67}$. However, because $A_{67}$ is a robust link, $V_{67}$ is always positive, whether evaluated under $\bm{A}$ or $\bm{A}_1$, so there is no subsequent update to $A_{67}$. On the other hand, $A_{45}$ is non-robust ($D_{45}=1$), so the update to $A_{56}$ may change the value of $V_{45}$ evaluated under $\bm{A}$ to a new value $V_{45}^{(2)}$ evaluated under $\bm{A}_1$ that has a different sign. Suppose in response the pair updates $A_{45}$ to $A_{45}^{(2)} = \ind\{V_{45}^{(2)} > 0\}$, resulting in a new network $\bm{A}_2$. This in turn affects the joint surplus of any other node pair containing nodes 4 and 5. The only such pairs with non-robust potential links are $(3,4)$ and $(5,6)$ since $D_{34}=D_{56}=1$, so {\em only} those potential links may update in response to the update to $A_{45}$. 

Suppose we iterate these best-response dynamics indefinitely. At no point do potential links other than those of pairs $(3,4)$, $(4,5)$, and $(5,6)$ update during the process. This is because they are the only pairs connected to nodes 5 or 6, the target of the initial perturbation, through a chain of non-robust potential links, or more formally a path in $\bm{D}$. These paths constitute the furthest extent that best-response chains extending from nodes 5 and 6 can travel. By definition, the component of $\bm{D}$ containing nodes 5 and 6 contains all such paths. 

The previous example pertains to interventions that perturb non-robust potential links, specifically $A_{56}$. Interventions that perturb robust links, such as $A_{67}$, trigger best-response dynamics in all strategic neighborhoods containing the link, which would be the two right-most strategic neighborhoods in \autoref{fsn}.

These examples demonstrate that the sizes of components $C_i$ and strategic neighborhoods $C_i^+$ (since they contain the components) are indicative of the degree of cross-sectional dependence. The key idea is that {\em if strategic neighborhoods are relatively small, then best-response chains cannot extend too far from the initial perturbation, corresponding to weaker dependence.} To control the length of best-response chains, in \autoref{sSOI}, we state a ``subcriticality'' condition that ensures that component sizes are asymptotically bounded. From the discussion following \autoref{enonrobust}, this should hold if strategic interactions are sufficiently weak. 

To derive the condition, we employ a well-known technique used in random graph theory for bounding the size of a component, which is to construct a branching process (see \autoref{introbp} for a formal definition) whose size stochastically dominates that of the component \citep[e.g.][]{bollobas2012simple}. The basic idea is to explore each $C_i$ via breadth-first search by starting at $i$, branching to its neighbors, neighbors of its neighbors, and so on. This is akin to growing a branching process, a model of population growth in which individuals in a given generation independently produce a random number of offspring, which corresponds to a node's neighborhood size. Subcriticality ensures that the average number of offspring is less than one, in which case the size of the process does not diverge and $\abs{C_i}$ is asymptotically bounded. 

While this is enough to establish an LLN \citep{leung2016}, a CLT additionally requires the distribution of $\abs{C_i}$ to have exponential tails. We utilize a tail bound for subcritical branching processes to obtain the desired result; see \autoref{Dexptail} which is proven by \cite{leung2019compute} using an argument due to \cite{turova_asymptotics_2012}. This is a key ingredient for verifying \autoref{main-bp-exp-stab}.

%------------
\subsubsection{Coordination}\label{scoord0}
%------------

The second source of dependence is due to equilibrium selection. Let us shut down the first source of dependence by supposing the perturbation to $\zeta_{56}$ is small enough not to change the sign of $V_{56}$, so there is no change to the network at any point in the best-response dynamics outlined above. However, if equilibrium selection is governed by a mechanism different from best-response dynamics, the perturbation may still substantially affect the network structure. 

Suppose $\abs{\mathcal{E}(r_n^{-1}\bm{T}_n, \Zeta_n)} = 2$. Since the perturbation does not change $V_{56}$, the set of equilibria is identical before and after the perturbation. However, one can construct a selection mechanism $\lambda_n(\cdot)$ that outputs one equilibrium under the structural primitives before the perturbation but outputs the other after the perturbation. Hence, the perturbation can alter potential links involving nodes external to $C_5$ even though any best-response chain must be limited to this component. This illustrates how, {\em under unrestricted equilibrium selection, all nodes may coordinate on the same ``signal'' ($\zeta_{56}$), resulting in strongly dependent potential links whose realizations all depend on this random variable}. 

In \autoref{scoord}, we state a condition that ensures coordination is ``decentralized'' in that strategic neighborhoods ``separately select'' their pairwise stable subnetworks. A consequence is that any perturbation to a node pair's types or random-utility shocks only affects the selection of the equilibrium subnetwork on strategic neighborhoods involving the pair. For example, in \autoref{fsn}, it will be the case that a perturbation to $\zeta_{56}$ only changes the equilibrium subnetwork on $\{2, \ldots, 7\}$ since no other neighborhood contains $\{5,6\}$. Combined with the subcriticality condition that ensures strategic neighborhood sizes are asymptotically bounded, this implies that the perturbation can only shift equilibrium selection within a bounded subset of nodes, ensuring weak cross-sectional dependence. This will allow us to construct a radius of stabilization that is asymptotically bounded.

%----------------------
\subsection{Strength of Interactions}\label{sSOI}
%----------------------

We next state a condition that controls the sizes of strategic neighborhoods, and hence length of best-response chains described in \autoref{sbrc}, by restricting the magnitude of strategic interactions. We measure strategic interaction strength by
\begin{align}
  p_{r_n}(X_i,Z_i,X_j,Z_j) =\,\, &\prob\big( \sup_s V(r_n^{-1}\norm{X_i-X_j},s,Z_i,Z_j,\zeta_{ij})>0 \mid X_i, Z_i, X_j, Z_j \big) \nonumber\\ -\,\, &\prob\big( \inf_s V(r_n^{-1}\norm{X_i-X_j},s,Z_i,Z_j,\zeta_{ij})>0 \mid X_i, Z_i, X_j, Z_j \big). \label{L_r}
\end{align}

\noindent This is the effect on link formation of changing $S_{ij}$ from its ``lowest'' to its ``highest'' possible value, conditional on types. In other words, it is the maximal change in linking probability induced by the strategic component of $V(\cdot)$. 

Let $\Phi_z(\cdot \,|\, x)$ be the conditional distribution of $Z_i$ given $X_i=x$. Recall that $f$ is the density of $X_1$, $\mathcal{T} = \text{supp}(X_1,Z_1)$, and $d_z$ is the dimension of $Z_i$. For any $h\colon \R^d \times \R^{d_z} \rightarrow \R$, define the mixed norm 
\begin{equation*}
  \norm{h}_\mathbf{m} = \sup_{x\in\R^d} \left( \int_{\R^{d_z}} h(x,z)^2 \,\text{d}\Phi^*(z) \right)^{1/2}, 
\end{equation*}

\noindent where $\Phi^*$ is a measure defined in the next assumption. Finally, for $\bar{f} = \sup_{x\in\R^d} f(x)$ and $\kappa$ in \eqref{r_n}, let
\begin{multline*}
  h_{r_n}(x,z) = n \int_{\R^d} \left( \int_{\R^{d_z}} p_{r_n}(x,z,x',z')^2 \,\text{d}\Phi_z(z' \mid x') \right)^{1/2} f(x') \,\text{d}x' \quad\text{and}\\ h^*(x,z) = \kappa \bar{f}\, \int_{\R^d} \left( \int_{\R^{d_z}} p_1(x,z;x',z')^2 \,\text{d}\Phi^*(z') \right)^{1/2} \,\text{d}x'.
\end{multline*}

\begin{assump}[Subcriticality]\label{dfrag}
  (a) There exists a measure $\Phi^*$ on $\R^{d_z}$ such that for all $(x,z) \in \mathcal{T}$ and $n\in\mathbb{N}$,
  \begin{equation*}
    n\int_{\R^d} \int_{\R^{d_z}} p_{r_n}(x,z;x',z') \,\text{d}\Phi_z(z' \mid x') f(x') \,\text{d}x' \leq \kappa \bar{f} \int_{\R^d} \int_{\R^{d_z}} p_1(x,z;x',z') \,\text{d}\Phi^*(z') \,\text{d}x',
  \end{equation*}
  
  \noindent and $h_{r_n}(x,z) \leq h^*(x,z)$. (b) $\norm{h^*}_\mathbf{m} < 1$.
\end{assump}

\noindent Part (a) is a regularity condition corresponding to Assumption 2 of \cite{leung2019compute} that replaces $\Phi_z(\cdot \mid x')$ with a ``dominating'' measure $\Phi^*$. The substantive requirement is (b), which is a slightly stronger version of Assumption 6 of \cite{leung2016}. Its formulation is as primitive as possible at this level of generality, but we may obtain further insight by specializing to a particular joint surplus function. The next example demonstrates how the condition restricts the magnitude of strategic interactions.

\begin{example}\label{esubcrit}
  Consider \autoref{elinear}, and suppose $X_i \sim \mathcal{U}([0,1]^2)$ and $\zeta_{ij} \sim \N(0,1)$. By a change of variables $x' \mapsto x+r_n(x'-x)$ as in \eqref{sparsecalc},
  \begin{multline*}
    n\int_{\R^d} \int_{\R^{d_z}} p_{r_n}(x,z;x',z') \,\text{d}\Phi_z(z' \mid x') f(x') \,\text{d}x' \\ = n \int_{\R^d} \prob\big( -\theta_2 < \theta_1 - \theta_3 (r_n^{-1}\norm{x-x'})^2 + \zeta_{12} \leq 0 \big) \ind\{x' \in [0,1]^2\} \,\text{d}x' \\ \leq \kappa \int_{\R^d} \prob\big( -\theta_2 < \theta_1 - \theta_3 \norm{x-x'}^2 + \zeta_{12} \leq 0 \big) \,\text{d}x'.
  \end{multline*}

  \noindent Notice the second line equals $h_{r_n}(x,z)$ and the third line $h^*(x,z)$. Letting $\Phi(\cdot)$ denote the normal CDF,
  \begin{align*}
    h^*(x,z) &= \kappa \int_{\R^2} \left[ \Phi(\theta_1+\theta_2-\theta_3\norm{u}^2) - \Phi(\theta_1-\theta_3\norm{u}^2) \right] \,\text{d}u \\
	       &= 2\kappa\pi \int_0^\infty w\left[ \Phi(\theta_1+\theta_2 - \theta_3w^2) - \Phi(\theta_1-\theta_3w^2) \right] \,\text{d}w
  \end{align*}

  \noindent by a change of variables to polar coordinates, where $\pi$ is the transcendental constant. Let $q(x) = x\Phi(x)+\phi(x)$, where $\phi(\cdot)$ is the normal PDF. The integral has a closed-form expression
  \begin{equation*}
    h^*(x,z) = \frac{\kappa\pi}{\theta_3} \big( q(\theta_1+\theta_2) - q(\theta_1) \big).
  \end{equation*}

  \noindent \autoref{dfrag} requires this to be less than one, which holds if $\theta_2$ is sufficiently small in magnitude. This is transparently a restriction on the strength of strategic interactions. 
\end{example}

The example illustrates how $\norm{h^*}_\mathbf{m} < 1$ {\em constitutes the network formation analog of well-known weak dependence conditions for linear spatial or temporal autoregressive models}, which require the magnitude of the autoregressive coefficient to be bounded below one. For instance, in the linear-in-means model of peer effects, it is assumed that the endogenous peer effect satisfies $\abs{\beta}<1$ \citep{bramoulle2009identification}. For nonlinear models such as ours, the analogous condition is necessarily more complicated to state. Equation (9) of \cite{de_jong_dynamic_2011} states the condition for dynamic binary choice time series models, which shares some visual similarities with our assumption.

The connection between \autoref{dfrag} and the discussion in \autoref{sdepstr} regarding the size of $C_i$ is that $\eqref{L_r} = \E[D_{ij} \mid X_i, Z_i, X_j, Z_j]$, so by Jensen's inequality and the change of variables argument in \eqref{sparsecalc}, $h^*(x,z) \geq \E[\sum_j D_{ij} \mid X_i=x, Z_i=z]$. The expectation of the right-hand side is the expected degree in $\bm{D}$, so by \autoref{dfrag}, this is upper bounded by one in expectation. This implies that, as we explore $C_i$ by branching from $i$ to its neighbors in $\bm{D}$, neighbors of neighbors, and so on, each node has fewer than one neighbor on average. That is, a node at each step is typically replaced by fewer than one node in the next step, so the process is below the replacement rate, and $\abs{C_i}$ should be asymptotically bounded.

%----------------------
\subsection{Coordination}\label{scoord}
%----------------------

The second weak dependence condition restricts the selection mechanism, ruling out coordination of the type described in \autoref{scoord0}. It requires selection to be ``decentralized'' in that each strategic neighborhood $C^+$ selects its pairwise stable subnetwork based on the types and random-utility shocks of nodes in $C^+$ alone. In this sense, there is no coordination across disjoint strategic neighborhoods.

Before stating the assumption, we need to clarify why such a restriction is possible. Initially, it may appear incoherent because the pairwise stability of a subnetwork on $H \subseteq \N_n$ can depend on the state of the network outside of $H$. If nodes $\N_n\backslash H$ are removed from the model, we would expect nodes in $H$ to adjust their links in best response. This suggests that coordination across subsets of nodes is generally unavoidable. 

However, if $H$ is specifically a strategic neighborhood, then under \autoref{S}, its subnetwork is in fact pairwise stable regardless of the state of the ambient network. Consider \autoref{fsn} and a counterfactual intervention that removes all nodes from the network outside of $C_5^+ = \{2, \ldots, 7\}$. The intervention has no impact on the pairwise stability of $A_{23}$ and $A_{67}$ since these are robust links. It also has no impact on the pairwise stability of non-robust potential links between nodes in $C_5^+$, for example $A_{45}$ since neither nodes 4 nor 5 are linked to $\N_n\backslash C_5^+$ in $\bm{A}$ and strategic interactions are local by \autoref{S}. That is, the non-robust potential links in $C_5^+$ are separated from $\N_n\backslash C_5^+$ by a ``buffer'' of robust links, so the intervention does not trigger a chain of best-responses that affects $\bm{A}_{C_5^+}$. This illustrates the following result.

\begin{proposition}[\cite{leung2016}, Proposition 1]\label{psac}
  Under \autoref{S}, for any $r>0$, $n\in\mathbb{N}$, and strategic neighborhood $C^+$,
  \begin{equation*}
    \mathcal{E}(r^{-1}\bm{T}_{C^+}, \Zeta_{C^+}) = \{\bm{A}_{C^+}: \bm{A} \in \mathcal{E}(r^{-1}\bm{T}_n,\Zeta_n)\}.
  \end{equation*}
\end{proposition}

\noindent The left-hand side is the set of pairwise stable networks under the counterfactual model in which the set of nodes is $C^+$, rather than $\N_n$. The right-hand side takes the set of pairwise stable networks on $\N_n$ and restricts them to $C^+$. The proposition asserts the two sets are equivalent, which says that the pairwise stability of subnetworks on $C^+$ only depends on the structural primitives of nodes in $C^+$. We emphasize that this property is unique to strategic neighborhoods and is not true for arbitrary subsets of nodes. \cite{leung2019compute} exploits \autoref{psac} to devise an algorithm that computes $\mathcal{E}(r_n^{-1}\bm{T}_n,\Zeta_n)$ in polynomial time under \autoref{dfrag}. We utilize it to establish a CLT.

Recall that our objective is to impose the assumption that strategic neighborhoods ``separately select'' their own pairwise stable subnetworks. \autoref{psac} ensures that it is coherent to refer to the pairwise stability of a subnetwork on a strategic neighborhood in isolation from the rest of the network. However, a second concern is that strategic neighborhoods do not necessarily partition $\N_n$, so such an assumption may still appear incoherent. For instance, in \autoref{fsn}, the left and middle strategic neighborhoods share nodes 2 and 3 in common, so it is not clear how the two neighborhoods can separately select equilibria.

Recall that components of $\bm{D}$ do partition $\N_n$, while strategic neighborhoods are obtained by adding nodes that are robustly linked to components. Then {\em the link between any node pair that lies in multiple strategic neighborhoods must necessarily be robust and therefore have the same realization under any pairwise stable equilibrium.} That is, any pairwise stable subnetwork on the left strategic neighborhood in \autoref{fsn} sets $A_{23}=1$, as does any pairwise stable subnetwork on the middle strategic neighborhood. Consequently, it is a logically coherent operation to select an equilibrium subnetwork for each strategic neighborhood and then take a ``union'' to obtain the overall network $\bm{A}$ \cite[see \S SA.4.2 of][for a detailed elaboration of this idea]{leung2019compute}.

We are now prepared to state the assumption. Let $\lambda_n(r_n^{-1}\bm{T}_n, \Zeta_n)\big|_H$ be the restriction of the range of $\lambda_n(\cdot)$ to subnetworks on $H$. For example, under \autoref{aesel}, $\lambda_n(r_n^{-1}\bm{T}_n, \Zeta_n) = \bm{A}$, so $\lambda_n(r_n^{-1}\bm{T}_n, \Zeta_n)\big|_H = \bm{A}_H$.

\begin{assump}[Decentralized Selection]\label{nocoord} 
  For any $r>0$, $n\in\mathbb{N}$, and strategic neighborhood $C^+$ constructed under the structural primitives $(r^{-1}\bm{T}_n, \Zeta_n)$, we have $\lambda_n(r^{-1}\bm{T}_n, \Zeta_n)\big|_{C^+} = \lambda_{\abs{C^+}}(r^{-1}\bm{T}_{C^+}, \Zeta_{C^+})$. 
\end{assump}

\noindent This corresponds to Assumption 7 of \cite{leung2016}. It is important to understand the difference between the left- and right-hand sides of the equality. On the left, we have the model involving all nodes $\N_n$; the selection mechanism produces a network $\bm{A}$, and we take its subnetwork $\bm{A}_{C^+}$. On the right, we have the counterfactual model involving only nodes in $C^+$, and the selection mechanism $\lambda_{\abs{C^+}}(\cdot)$ produces a network $\lambda_{\abs{C^+}}(r^{-1}\bm{T}_{C^+}, \Zeta_{C^+})$. The assumption asserts that the two outputs are the same.

\cite{sheng2014} does not require this assumption because she considers a setting with many small independent networks, which means equilibrium selection is necessarily independent across network observations. \autoref{nocoord} is the single-network analog of this requirement, requiring selection to operate separately across latent strategic neighborhoods. 

The assumption rules out selection mechanisms in which all nodes coordinate through a common signal, such as a single node's type. Coordination is only allowed to occur within strategic neighborhoods. In the special case where there exists a unique equilibrium on $\N_n$, for instance if there are no strategic interactions, the assumption holds trivially. More generally, the condition is satisfied by variants of myopic best-response dynamics, which are widely used in the theoretical and econometric literature on dynamic network formation \citep[e.g.][]{jackson2010,mele2017}. 

\begin{example}[Best-Response Dynamics]\label{embrd}
  An example of myopic best-response dynamics is the following. Arbitrarily order all node pairs and begin at an arbitrary network $\bm{A}_0$. At step $t$, update the previous network $\bm{A}_{t-1}$ by setting the $ij$th component to $\bm{1}\{V(r_n^{-1}\norm{X_i-X_j}, S_{ij}^{t-1}, Z_i, Z_j, \zeta_{ij}) > 0\}$, where $S_{ij}^{t-1}=S_n(i,j,r_n^{-1}\bm{T}_n,\bm{A}_{t-1})$, the network statistics evaluated at the prior network. Repeat for all pairs of nodes to obtain $\bm{A}_t$. Repeat this process until convergence to a network $\bm{A} \in \mathcal{E}(r_n^{-1}\bm{T}_n, \Zeta_n)$.\footnote{The usual method for proving the existence of a pairwise stable network is to establish non-existence of ``closed cycles,'' which implies that myopic best-response dynamics always converge to an equilibrium. See for example Proposition 2.1 of \cite{sheng2014}.} This constitutes a selection mechanism $\lambda_n(\cdot)$ since it maps structural primitives to an equilibrium network. To ensure that $\lambda_n(\cdot)$ is permutation-equivariant (\autoref{aesel}), we may suppose that potential links in $\bm{A}_0$ and the ordering of node pairs are only functions of the types of the nodes in the pair.\footnote{More generally, the initial network and ordering of node pairs may be determined by any permutation-equivariant functions of $(r_n^{-1}\bm{T}_n, \Zeta_n)$.}

  To understand how this satisfies \autoref{nocoord}, first consider the ideal scenario in which the data consists of two independent network observations formed by these dynamics. The equilibrium on each network may be generated in two equivalent ways. First, the dynamics may be run separately on each network. Second, the two networks may be concatenated into one, defining the joint surplus between pairs of nodes in different networks as $-\infty$, and the dynamics may be run on the entire entity. These produce the same output given the initial network and node pair ordering.

  Now suppose the data consists of a single network comprised of two strategic neighborhoods. If the neighborhoods do not share a pair of nodes in common, then this is the same situation as the two-network case. If they do, then both ways of running the dynamics still produce the same result because any pair of nodes shared by both neighborhoods must form a robust link. This reasoning immediately extends to an arbitrary number of strategic neighborhoods, so \autoref{nocoord} holds.
\end{example}

%----------------------
\subsection{Main Results}\label{smainres}
%----------------------

The last assumption we require is a regularity condition.

\begin{assump}[Regularity]\label{dreg}
  Either $p_r(X_1,Z_1;X_2,Z_2)=0$ a.s.\ for any $r>0$ in a neighborhood of zero, or $\inf\{ \liminf_{n\rightarrow\infty} n\, \E[p_{r_n}(x,z;X_2,Z_2)]\colon (x,z) \in \mathcal{T}\} > 0$.
\end{assump}

\noindent The case $p_r(X_1,Z_1;X_2,Z_2)=0$ corresponds to a model without strategic interactions, which is only mentioned for completeness. In the more interesting case, the assumption essentially requires that strategic interactions are sufficiently nontrivial for all nodes in that $p_{r_n}(X_1,Z_1;X_2,Z_2)$ is at least order $n^{-1}$. This is a mild requirement that is typically satisfied because the two probabilities in \eqref{L_r} are upper and lower bounds on the probability of link formation, and the upper bound is order $n^{-1}$ under sparsity by \eqref{sparsecalc}.

\begin{example}\label{edreg}
  Consider \autoref{esubcrit}. Following the derivation there,
  \begin{equation*}
    n\, \E[p_{r}(x,z;X_2,Z_2)] = 2nr^2 \pi \int_0^\infty w \left[ \Phi(\theta_1+\theta_2 - \theta_3 w^2) - \Phi(\theta_1 - \theta_3 w^2) \right] \,\text{d}w.
  \end{equation*}

  \noindent If $\theta_2=0$, this corresponds to the case $p_r(X_1,Z_1;X_2,Z_2)=0$ a.s. Otherwise, \autoref{dreg} holds if $\theta_2 \neq 0$ since $nr_n^2 = \kappa > 0$ by \eqref{r_n}.
\end{example}

\begin{theorem}\label{primclt}
  Assumptions \ref{Vc}--\ref{dyklocal} and \ref{dfrag}--\ref{dreg} imply \autoref{main-bp-exp-stab}.
\end{theorem}
\begin{proof}
  See \autoref{sprimclt}. We sketch the proof below.
\end{proof}

\noindent From Theorems \ref{master-clt} and \ref{primclt} we immediately obtain the following CLT for strategic network formation.

\begin{cor}\label{cprim0}
  Under Assumptions \ref{Vc}--\ref{dyklocal} and \ref{main-bp-mom}--\ref{dreg}, $\sup_n \norm{\bm{\Sigma}_n}_\infty < \infty$, and if additionally \\ $\liminf_{n\rightarrow\infty} \lambda_\text{min}(\bm{\Sigma}_n) > 0$, then \eqref{thenormal} holds.
\end{cor}

\noindent At this level of generality, these conditions are close to as primitive as possible, but with additional structure we can derive lower-level conditions. 

\begin{cor}\label{cprim}
  Consider the network formation model in \autoref{esubcrit}, and assume \autoref{main-bp-mom}(b) and the following weak dependence conditions hold.
  \begin{enumerate}[(a)]
    \item (Strength of interactions) For $q(x)$ defined in the example,
      \begin{equation*}
	\frac{\kappa\pi}{\theta_3} \big( q(\theta_1+\theta_2) - q(\theta_1) \big) < 1.
      \end{equation*}

    \item (Decentralized selection) The equilibrium is selected via myopic best-response dynamics as in \autoref{embrd}.
  \end{enumerate}

  \noindent If $\sum_{i=1}^n \psi_i(\N_n)$ is a vector of connected subnetwork counts (\autoref{esubcts}), then \eqref{thenormal} holds.
\end{cor}

\begin{proof}
  We verify the conditions of \autoref{cprim0}. \autoref{Vc} holds because
  \begin{equation*}
    \bar{\Phi}_\zeta(\bar{V}^{-1}(\delta,0)) = \bar{\Phi}_\zeta(\theta_3\delta^2 - \theta_1 - \max\{\theta_2,0\}),
  \end{equation*}

  \noindent which decays to zero exponentially with $\delta$ since $\bar{\Phi}_\zeta$ is the complementary CDF of the standard normal distribution. \autoref{S} holds by choice of $S_{ij}$. Assumptions \ref{aesel} and \ref{nocoord} hold by (b). \autoref{dyklocal} holds because we consider connected subnetwork counts (see \autoref{eshenginf}). \autoref{main-bp-mom}(a) follows from \autoref{lbdmoms}. \autoref{dfrag} follows from (a) (see \autoref{esubcrit}). Finally, \autoref{dreg} holds, as shown in \autoref{edreg}. 
\end{proof}

%----------------------
\subsection{Method of Proof}\label{smeth2}
%----------------------

We present the method of proof for \autoref{primclt} for the case of the average degree, whose node statistic is $\psi_i(\N_n) = \sum_{j=1}^n A_{ij}$. The approach can be applied to other network models with strategic interactions.\footnote{A previous version of the paper presents applications to dynamic network formation \citep{leung2019normal}, and \cite{leung2019inference} applies the methodology to games on networks.} We construct an upper bound on the radius of stabilization that has a distribution with sufficiently thin tails. 

\bigskip
\noindent {\bf Step 1.} Recalling \eqref{psiH}, we construct a set of nodes $J_i \subseteq \N_n$ positioned near $i$ such that 
\begin{equation}
  \psi_i(\N_n) = \psi_i(J_i). \label{Jigoal}
\end{equation}

\noindent That is, $i$'s node statistic is invariant to the counterfactual removal of nodes outside of $J_i$. The challenge is to find a set that is relatively small so that the distribution of the set's size has thin tails. Supposing such a set could be found, the radius of stabilization $\RR_i(n,r_n)$ (\autoref{rosdef}) would be upper bounded by the radius of the smallest ball centered at $r_n^{-1}X_i$ containing the positions of $J_i$ (plus one):
\begin{equation}
  \RR_i(n,r_n) \leq \tilde\RR_i(n,r_n) \equiv \max_{j \in J_i} r_n^{-1}\norm{X_i-X_j} + 1. \label{ros}
\end{equation}

Since $\psi_i(\N_n) = \sum_{j=1}^n A_{ij}$ is simply the 1-neighborhood size, an initial guess for $J_i$ might be $i$'s 1-neighborhood. However, this does not generally satisfy \eqref{Jigoal}. To see why, consider node $i=6$ in \autoref{fsn} whose 1-neighborhood is contained in $\{4,5,7\}$. Since $D_{56}=1$, the potential link $A_{56}$ is non-robust and therefore may differ under the counterfactual that removes node 3 from the model, even though 3 is not $i$'s neighbor.

Instead we take $J_i = C_i^+$. The pairwise stability of $\bm{A}_{C_i^+}$ is invariant to the removal of nodes outside of $C_i^+$ (see \autoref{exstrngb} and \autoref{psac}), and under \autoref{nocoord}, $\bm{A}_{C_i^+}$ remains the selected equilibrium subnetwork on $C_i^+$ after removal of $\N_n\backslash C_i^+$. That is, for $H = C_i^+$, $\bm{A}_H = \lambda_n(r_n^{-1}\bm{T}_n, \Zeta_n)|_H = \lambda_{\abs{H}}(r_n^{-1}\bm{T}_H, \Zeta_H)$. Since $C_i^+$ contains $i$'s 1-neighborhood, \eqref{Jigoal} holds.

\bigskip
\noindent {\bf Step 2.} Next we show that $\abs{J_i}$ has a distribution with exponential tails. First, \autoref{introbp} establishes that $\abs{J_i}$ is stochastically dominated by the size of a certain branching process when $J_i$ is constructed from components and $K$-neighborhoods of dyadic networks. \autoref{Jtails} shows that the size of the branching process has a distribution with exponential tails using Assumptions \ref{dfrag} and \ref{dreg}.

In the case of average degree, our choice of $J_i=C_i^+$ is the union of a component of $\bm{D}$ and 1-neighborhoods in $\bm{\Pi}$, as defined in \autoref{scltstatic}, which are dyadic networks. For more general $K$-neighborhood node statistics satisfying \autoref{dyklocal}, the construction of $J_i$ is more complicated but still involves similar objects. \autoref{constructJT} in \autoref{sexpstab} shows how to construct $J_i$ for a general node statistic satisfying \autoref{dyklocal}. The lemmas in \autoref{sexpstab} can then be applied to obtain the desired tail bounds.

\bigskip
\noindent {\bf Step 3.} We translate the tail bound for $\abs{J_i}$ into one for $\tilde\RR_i(n,r_n)$. Intuitively, if $\abs{J_i}$ is small, then so will be $\tilde\RR_i(n,r_n)$ since nodes are homophilous in positions (\autoref{Vc}), so each $j\in J_i$ will typically be close to $i$ in terms of distance $r_n^{-1}\norm{X_i-X_j}$. \autoref{dyDC} provides the formal argument.

%----------------------------------------------------------------------
\section{Applications to Inference}\label{sinfer}
%----------------------------------------------------------------------

We discuss two inference procedures applicable to network data generated by the model in \autoref{smodel}. Our results provide the first formal justification for their use in the subsequent applications.

Since our objective is to establish a CLT, thus far we have made no distinction between what aspects of the model are known or observed by the econometrician. In what follows, the only requirement is that the econometrician must be able to compute the relevant network moments, but it is application-specific which structural primitives need to be observed for this to be possible. 

In the network statistics applications, only $\bm{A}$ needs to be observed. In structural applications, $V(\cdot)$ is typically known up to a vector of parameters, $\bm{A}$ is observed, and $\zeta_{ij}$ is unobserved, but any subvector of $(X_i, Z_i)$ could potentially be unobserved. In particular, the procedures can be implemented when positions are unobserved, as in the literature on latent-space models. Typically the distribution of the unobserved component of types, conditional on observables, is assumed known up to a vector of parameters.

%----------------------
\subsection{Network Statistics}\label{snetstats}
%----------------------

Define $\mu_0 = \E[\psi_1(\N_n)]$ and $\hat\mu = n^{-1} \sum_{i=1}^n \psi_i(\N_n)$. Consider testing the null hypothesis 
\begin{equation*}
  H_0\colon \mu_0=\mu.
\end{equation*}

\noindent This is relevant for the reporting of stylized facts in the networks literature. Such facts are obtained by computing various statistics from $\bm{A}$ but are seldom accompanied by formal uncertainty quantification due to a lack of available methods. \cite{leung2018dependence} discusses two particular examples: testing for nontrivial clustering (his \S3.3) and testing for a power law degree distribution (his \S3.4). We next discuss two generic tests justified by our CLT. Confidence regions for $\mu_0$ can be obtained by test inversion. 

\bigskip
\noindent {\bf Single large network.} If the sample consists of a single network, we may apply the resampling procedure due to \cite{song_ordering-free_2016} and \cite{leung2018dependence}. Let $\alpha$ be the desired level of the test, $R_n = (n/2)^{4/3}$ rounded to the nearest integer, $\Pi$ be the set of all bijections (permutation functions) on $\{1, \dots, n\}$, and $\pi = (\pi_r)_{r=1}^{R_n}$ be i.i.d.\ uniform draws from $\Pi$. Let $\hat{\bm{V}} = n^{-1} \sum_{i=1}^n (\psi_i(\N_n) - \hat\mu)(\psi_i(\N_n) - \hat\mu)'$, the sample variance. Define the test statistic
\begin{equation*}
  T_U(\mu;\pi) = \frac{1}{\sqrt{d_\psi R_n}} \sum_{r=1}^{R_n} (\psi_{\pi_r(1)}(\N_n) - \mu)' \hat{\bm{V}}^{-1} (\psi_{\pi_r(2)}(\N_n) - \mu),
\end{equation*}

\noindent recalling that $d_\psi$ is the dimension of the range of $\psi_n(\cdot)$. Let $z_{1-\alpha}$ be the $1-\alpha$ quantile of the standard normal distribution. The test rejects if and only if 
\begin{equation}
  T_U(\mu;\pi) > z_{1-\alpha}. \label{deprob}
\end{equation}

Validity of this test hinges on the high-level weak dependence condition that $\hat\mu$ is $\sqrt{n}$-consistent. \cite{leung2016} provides an LLN for $\hat\mu$ but not a rate of convergence. Our paper is the first to provide primitive conditions for $\sqrt{n}$-consistency in the context of strategic network formation.

For intuition on the importance of $\sqrt{n}$-consistency for the test's validity, consider a simpler test statistic studied by \cite{leung2018dependence}:
\begin{equation*}
  \tilde{T}_M(\mu;\pi) = \frac{1}{\sqrt{R_n^M}} \sum_{r=1}^{R_n^M} \hat{\bm{V}}^{-1/2} \big(\psi_{\pi_r(1)}(\N_n)-\mu\big)
\end{equation*}

\noindent for $R_n^M = \sqrt{n}$ rounded to the nearest integer. To understand its asymptotic behavior, we add and subtract its mean conditional on the data:
\begin{multline*}
  \frac{1}{\sqrt{R_n^M}} \sum_{r=1}^{R_n^M} \hat{\bm{V}}^{-1/2} \big(\psi_{\pi_r(1)}(\N_n)-\E[\psi_{\pi_r(1)}(\N_n) \mid \{\psi_i(\N_n)\}_{i=1}^n]\big) \\ + \frac{1}{\sqrt{R_n^M}} \sum_{r=1}^{R_n^M} \hat{\bm{V}}^{-1/2} \big(\E[\psi_{\pi_r(1)}(\N_n) \mid \{\psi_i(\N_n)\}_{i=1}^n] - \mu\big).
\end{multline*}

\noindent Conditional on the data $\{\psi_i(\N_n)\}_{i=1}^n$, the permutations $\pi_r$ are independent, so the first term is an average of $R_n^M$ conditionally independent observations and can be shown to be asymptotically $\N(\bm{0},\bm{I})$ under $H_0$. The second is a bias term that can be shown to equal $(R_n^M/n)^{1/2} \hat{\bm{V}}^{-1/2} \sqrt{n}(\hat\mu-\mu)$. Hence, under $H_0$ and $\sqrt{n}$-consistency, the bias is order $(R_n^M/n)^{1/2} = o(1)$.

\bigskip
\noindent {\bf Multiple large networks.} A drawback of the previous procedure is that it is inefficient, having a rate of convergence slower than $\sqrt{n}$ since $R_n^M = o(n)$. If the sample consists of sufficiently many independent large networks, then more powerful methods are available from the cluster-robust inference literature. Consider a sequence of $L$ independent networks indexed by $n$, where $L$ is fixed with respect to $n$ and each network $\ell=1,\dots,L$ has size $n_\ell$ satisfying $n_\ell/n \rightarrow c_\ell \in (0,\infty)$ as $n\rightarrow\infty$. Let $\hat{\mu}^\ell = n_\ell^{-1} \sum_{i=1}^{n_\ell} \psi_i(\N_{n_\ell})$, the network moment computed on network $\ell$. Assume there exists a universal population moment $\mu_0$ such that $\E[\hat{\mu}^\ell] = \mu_0 + o(n^{-1/2})$ for all $\ell$. We seek to test the null $H_0\colon \mu_0=\mu$.

We consider the randomization test proposed by \cite{canay2017randomization}. For $S_{n,\ell} = \sqrt{n_\ell}(\hat{\mu}^\ell - \mu)$ and $S_n = (S_{n,\ell})_{\ell=1}^L$, define the Wald statistic
\begin{equation*}
  T(S_n) = \left( \frac{1}{\sqrt{L}} \sum_{\ell=1}^L S_{n,\ell}' \right) \left( \frac{1}{L} \sum_{\ell=1}^L S_{n,\ell}S_{n,\ell}' \right)^{-1} \left( \frac{1}{\sqrt{L}} \sum_{\ell=1}^L S_{n,\ell} \right).
\end{equation*}

\noindent We obtain critical values from the randomization distribution $\{T(\pi S_n)\colon \pi \in \{-1,1\}^L\}$ where $\pi S_n = ( \pi_\ell S_{n,\ell} )_{\ell=1}^L$ for $\pi = (\pi_\ell)_{\ell=1}^L$. Let $\alpha$ be the desired level of the test, $q = 2^L(1-\alpha)$ rounded up to the nearest integer, and $c_{L,1-\alpha}$ be the $q$th largest value of $\{T(\pi S_n)\colon \pi \in \{-1,1\}^L\}$. The test rejects if and only if 
\begin{equation*}
  T(S_n) > c_{L,1-\alpha}. 
\end{equation*}

The test is asymptotically level $\alpha$ under the high-level condition that the limit distribution of the vector of network moments is asymptotically normal. We provide the first primitive sufficient conditions in the literature for strategic network formation. Intuitively, under $H_0$ and asymptotic normality, $\{S_{n,\ell}\}_{\ell=1}^L$ are independent draws from a mean-zero, approximately normal distribution. Hence, multiplying these draws by $\pm 1$ does not change the asymptotic distribution of $T(S_n)$, which is the key justification for the validity of randomization tests.

%----------------------
\subsection{Structural Inference}\label{ssturinf}
%----------------------

We revisit \autoref{eshenginf}, which concerns inference on structural parameters using moment inequalities proposed by \cite{sheng2014}. Recalling the setup there, let $\theta_0$ denote the true parameters of $V(\cdot)$. To test the hypothesis $H_0\colon \theta_0=\theta$, we test the moment inequality
\begin{equation*}
  \mu_0 \equiv \E[\psi_1(\N_n)] \leq \zero \quad \text{where}\quad \psi_{i_1}(\N_n) = \sum_{i_2=1}^n \dots \sum_{i_m=1}^n (G_{i_1,\ldots,i_m} - H_{i_1,\ldots,i_m}(\theta_0)) q_{i_1,\ldots,i_m}(\bm{T}_n),
\end{equation*}

When the data consists of a single large network, we can employ a test due to \cite{leung2018dependence}. Let $\mu_k$ be the $k$th component of $\mu_0$, $\psi_{ik}(\N_n)$ the $k$th component of $\psi_i(\N_n)$, and $T_{U,k}(\mu_k; \pi)$ the U-type statistic defined in the previous subsection but computed with scalar data $\{\psi_{ik}(\N_n)\}_{i=1}^n$. Also let $\hat\mu_k$ be the $k$th component of $\hat\mu$ and $\hat{\bm{V}}_{kk}$ the $k$th diagonal entry of $\hat{\bm{V}}$. Define the test statistic
\begin{align*}
  &Q_n(\pi) = \max_{1\leq k\leq d_\psi} \big\{ T_{U,k}(0; \pi) - \mu_k^*\ind\{\hat\mu_k < 0\} \big\} \quad\text{where} \\
  &\mu_k^* = \hat\mu_k \hat{\bm{V}}_{kk}^{-1} \frac{1}{\sqrt{d_\psi R_n}} \sum_{r=1}^{R_n} \left( \psi_{\pi_r(1),k}(\N_n) + \psi_{\pi_r(2),k}(\N_n) \right) - \sqrt{\frac{R_n}{d_\psi}} \hat{\bm{V}}_{kk}^{-1} \hat\mu_k^2.
\end{align*}

\noindent Let $\tilde\pi_1, \dots, \tilde\pi_L$ be i.i.d.\ with the same distribution as $\pi$. Let $q = L(1-\alpha)$ rounded up to the nearest integer and $c_{L,1-\alpha}$ be $q$th largest value of $\{\max_{1\leq k\leq d_\psi} T_{U,k}(\hat\mu_k;\tilde\pi_\ell)\}_{\ell=1}^L$. The test rejects if and only if 
\begin{equation*}
  Q_n(\pi) > c_{L,1-\alpha}. 
\end{equation*}

Theorem 2 of \cite{leung2018dependence} provides conditions under which the test is asymptotically level $\alpha$ under $H_0$. The main assumption that needs to be verified is $\sqrt{n}$-consistency of $\hat\mu$, which is a consequence of our CLT. Like the test in the previous subsection, this procedure is inefficient with a slower than $\sqrt{n}$-rate of convergence. To construct more powerful tests, we require either a consistent estimate of $\bm{\Sigma}_n$ or a valid resampling procedure, topics we leave to future research.

%----------------------------------------------------------------------
\section{Simulation Study}\label{smc}
%----------------------------------------------------------------------

We conduct a simulation study to assess the quality of the normal approximation and finite-sample performance of the inference procedures in \autoref{sinfer}. We simulate data according to the model in \autoref{elinear} with $\lVert \cdot\rVert$ equal to the Euclidean norm, $\theta = (1, 0.25, 1)$, $X_i \stackrel{iid}\sim \mathcal{U}([0,1]^2)$, and $\zeta_{ij} \stackrel{iid}\sim \mathcal{N}(0,1)$. To satisfy \autoref{dfrag}, we set $\kappa = 0.8\theta_3\pi^{-1}(q(\theta_1+\theta_2) - q(\theta_1))^{-1}$ following the notation in \autoref{esubcrit}, which implies $\norm{h^*}_{\mathbf{m}} = 0.8 < 1$. To satisfy \autoref{nocoord}, we select the equilibrium network using myopic best-response dynamics (\autoref{embrd}) starting from the network $\bm{\Pi}$ defined in \autoref{sstrnbh}. The model generates networks with fairly realistic properties; the largest component comprises about 75 percent of the network, while the average degree and clustering are respectively about 4.3 and 0.4.

We consider equality tests involving the following network moment. Define the node statistic $\psi_i(\N_n) = \psi_i^c - \psi_i^f$ where 
\begin{equation*}
  \psi_i^c = \frac{\sum_{j\neq k\neq i} A_{ij}A_{jk}A_{ik}}{\sum_{j\neq k\neq i} A_{ij} A_{ik}} \quad\text{and}\quad \psi_i^f = \frac{2}{n-1} \sum_{j\neq i} A_{ij}.
\end{equation*}

\noindent Then $n^{-1} \sum_{i=1}^n \psi_i^c$ is the average clustering coefficient, a measure of triadic closure, while $n^{-1} \sum_{i=1}^n \psi_i^f$ is the link frequency. The motivation for the network moment $n^{-1} \sum_{i=1}^n \psi_i(\N_n)$ is as follows. Under the ``null'' Erd\H{o}s-R\'{e}nyi model, the moment is approximately zero. However, a well-known stylized fact is that real-world networks typically feature nontrivial clustering in that the statistic is far from zero \citep{barabasi2015}. While this fact is based only on the point estimate, the inference procedures in \autoref{sinfer} enable us to formulate this as a statistical test.

We simulate rejection rates for three different tests of hypotheses of the form
\begin{equation*}
  H_0\colon \E[\psi_1] = c\mu \quad\text{against}\quad H_1\colon \E[\psi_1] \neq c\mu
\end{equation*}

\noindent where $c \in (0,1]$ and $\mu$ is the true value of $\E[\psi_1]$, computed using 40k simulation draws. To simulate size, we set $c=1$, and to simulate power, we set $c \in \{0.8, 0.9\}$. We consider the following tests at the 5-percent level. To assess the quality of the normal approximation, we use the ``oracle'' $t$-test with test statistic $(n^{-1} \sum_{i=1}^n \psi_i - c\mu) / \sigma$, where $\sigma$ is the true standard error, computed by taking the standard deviation of the network moment across 40k simulation draws. We then consider the dependence-robust test \eqref{deprob} and randomization test from \autoref{snetstats}. The randomization test will use data from 5, 6, or 8 independent networks of identical size, while the other tests only utilize data from one of the networks. We simulate size and power using 5k simulation draws.

\begin{table}[ht]
\centering
\caption{Simulation results}
\begin{threeparttable}
\begin{tabular}{lrrrrrrrrr}
\toprule
 & \multicolumn{3}{c}{Size ($c=1$)} & \multicolumn{3}{c}{Power ($c=0.9$)} & \multicolumn{3}{c}{Power ($c=0.8$)} \\
$n$ & 250 & 500 & 1000 & 250 & 500 & 1000 & 250 & 500 & 1000 \\
\midrule
Oracle & 0.050 & 0.054 & 0.048 & 0.337 & 0.627 & 0.915 & 0.856 & 0.995 & 1.000 \\
DR & 0.069 & 0.057 & 0.058 & 0.176 & 0.263 & 0.410 & 0.530 & 0.823 & 0.981 \\
Rand (8) & 0.040 & 0.040 & 0.038 & 0.928 & 0.998 & 1.000 & 1.000 & 1.000 & 1.000 \\
Rand (6) & 0.028 & 0.029 & 0.030 & 0.672 & 0.929 & 0.997 & 0.993 & 1.000 & 1.000 \\
Rand (5) & 0.000 & 0.000 & 0.000 & 0.000 & 0.000 & 0.000 & 0.000 & 0.000 & 0.000 \\
$\mu$ & 0.393 & 0.409 & 0.416 & 0.393 & 0.409 & 0.416 & 0.393 & 0.409 & 0.416 \\
$\sigma$ & 0.026 & 0.018 & 0.013 & 0.026 & 0.018 & 0.013 & 0.026 & 0.018 & 0.013 \\
\bottomrule
\end{tabular}
\begin{tablenotes}[para,flushleft]
  \small 5000 simulations. DR corresponds to the dependence-robust test and Rand ($x$) to the randomization test using data from $x$ independent networks, all of size $n$. Oracle corresponds to the $t$-test using the true standard error $\sigma$.
\end{tablenotes}
\end{threeparttable}
\label{trej}
\end{table}

\autoref{trej} shows the results. From the first three columns, we see that the oracle $t$-test controls size well across all sample sizes, illustrating the quality of the normal approximation. The randomization test (the Rand rows in the table) controls size well across all $n$, while the dependence-robust test (the DR row in the table) exhibits some over-rejection, particularly in smaller samples. Section 6 of \cite{leung2018dependence} provides additional simulation results for the dependence-robust test under a variety of data-generating processes, including a network formation model similar to the present design. His results show that the test has good properties for different choices of the tuning parameter $R_n$.

The last six columns of \autoref{trej} compare power. The differences between the oracle and dependence-robust tests are due to the latter's slower rate of convergence. A comparison of the power of the randomization and dependence-robust tests is not apples-to-apples because the former utilizes 5 to 8 times the sample size. However, we see that the randomization test requires at least 6 networks to obtain nontrivial power. These results confirm the discussion in \autoref{sinfer}.

%----------------------------------------------------------------------
\section{Conclusion}\label{sconclude}
%----------------------------------------------------------------------

This paper develops a large-sample theory for a model of network formation with strategic interactions and homophilous agents when the data consists of a small sample of large networks or possibly a single network. We prove a general CLT under a high-level weak dependence condition and provide a general methodology for its verification. We apply the methodology to obtain primitive conditions for a CLT for moments of pairwise stable networks.

There are several important directions for future work. Concentration inequalities for stabilization would be useful, for example, for deriving lower-level conditions for uniform convergence of nonparametric or high-dimensional estimators using network data. It is also of interest to develop efficient alternatives to the inference procedures in \autoref{sinfer} in the case where the data is a single network observation.

\appendix
\numberwithin{equation}{section} % include section number in equation numbering

%----------------------------------------------------------------------
\section{Bounded Subnetwork Moments}\label{sbdmoms}
%----------------------------------------------------------------------

This section formally defines subnetwork counts, introduced in \autoref{esubcts}, and verifies \autoref{main-bp-mom}(a) for these moments. Fix the subnetwork size $h\in\mathbb{N}\backslash\{1\}$, and let $a_h$ be a connected network on nodes $\{1, \dots, h\}$ (a network is connected if the path distance between any pair of nodes is finite). For $H \subset \N_n$ with $\abs{H}=h$, we say $\bm{A}_H$ is {\em isomorphic} to $a_h$ if there exists a bijection $\pi\colon \N_n\rightarrow \N_n$ such that $\pi(\bm{A})_H = a_h$, where $\pi(\bm{A})$ is the permuted adjacency matrix $(A_{\pi(i)\pi(j)})_{i,j\in\N_n}$. If $\bm{A}_H$ is isomorphic to $a_h$, we write $\bm{A}_H \cong a_h$.

Let $[n]_h$ be the set of subsets of $\N_n$ of size $h$. The subnetwork count for $a_h$ is
\begin{equation*}
  \sum_{H \in [n]_h} \bm{1}\{\bm{A}_H\cong a_h\}.
\end{equation*}

\noindent This counts ``unlabeled'' subnetworks isomorphic to $a_h$. To rewrite it as a (scaled) network moment $\sum_{i=1}^n \psi_i(\N_n)$, we observe that this is proportional to the corresponding count of ``labeled'' subnetworks. Formally, 
\begin{equation}
  h! \sum_{H \in [n]_h} \bm{1}\{\bm{A}_H\cong a_h\} = \sum_{i_1\in\N_n} \underbrace{\sum_{i_2\in\N_n} \dots \sum_{i_h\in\N_n} \bm{1}\{\bm{A}_{\{i_1,\ldots,i_h\}} \cong a_h\}}_{\psi_{i_1}(\N_n)}. \label{t3902jg}
\end{equation}

\noindent The right-hand side counts labeled subnetworks isomorphic to $a_h$, which will be our object of analysis in what follows.

Equation (5.3) of \cite{sheng2014} uses subnetwork counts to define moment inequalities. She divides the counts by $\binom{n}{h}$ since she considers a setting with many small independent networks. With a single large sparse network, the correct scaling is instead to divide by $n$, resulting in the network moment $n^{-1} \sum_{i=1}^n \psi_i(\N_n)$ for $\psi_i(\N_n)$ defined in \eqref{t3902jg}.

\begin{secprop}\label{lbdmoms}
  Under \autoref{Vc}, \autoref{main-bp-mom}(a) holds for node statistics $\psi_i(\N_n)$ of the form given in \eqref{t3902jg}.
\end{secprop}
\begin{proof}
  Per the setup of \autoref{main-bp-mom}(a), let the number of nodes be $m+k$, where $k \in \{1,2\}$ and $m$ is either a nonrandom element of $\mathbb{N}$ or equal to $N_n$. The statement of \autoref{main-bp-mom}(a) also considers models where the set of nodes is a subset $H_n \cup \{1\}$ of $\N_{m+k}$, but since this only reduces the upper bound in \eqref{Mupper} below, it is sufficient to consider $H_n = \N_{m+k}\backslash\{1\}$. Define
  \begin{equation}
    \psi_1(\N_{m+k}) = \sum_{i_1\in\N_{m+k}} \dots \sum_{i_{h-1} \in\N_{m+k}} \bm{1}\{\bm{A}_{\{1,i_1,\ldots,i_{h-1}\}} \cong a_h\}. \label{tu3902jh4w}
  \end{equation}

  \noindent Since $a_h$ is a connected network on $\{1,\dots,h\}$, any node in the network is at most path distance $h-1$ from node 1. Therefore, we can replace occurrences of $\N_{m+k}$ in \eqref{tu3902jh4w} with $\N_{\bm{A}}(1,h-1)$. With this change, \eqref{tu3902jh4w} is bounded above by $\abs{\N_{\bm{A}}(1,h-1)}^{h-1}$.

  Recall the definition of the network $\bm{M}$ from \eqref{Mnet}. Since $A_{ij} \leq M_{ij}$, $\bm{A}$ is a subnetwork of $\bm{M}$, so
  \begin{equation}
    \abs{\N_{\bm{A}}(1,h-1)}^{h-1} \leq \abs{\N_{\bm{M}}(1,h-1)}^{h-1}. \label{Mupper}
  \end{equation}

  \noindent By \autoref{Jsd}, for $m$ sufficiently large, $\abs{\N_{\bm{M}}(1,h-1)}$ is stochastically dominated by the size of a branching process $\abs{\mathfrak{X}^M_{r_n}(X_1,Z_1;h-1)}$. By \autoref{MKexptail}, the distribution of $\abs{\mathfrak{X}^M_r(x,z;h-1)}$ has exponential tails uniformly in $x,z,r$. It follows that $\abs{\N_{\bm{M}}(1,h-1)}^{h-1}$ has uniformly bounded $p$ moments for any $p>2$.
\end{proof}

%----------------------------------------------------------------------

\FloatBarrier
\phantomsection
\addcontentsline{toc}{section}{References}
\bibliography{clt_nf}{} 
\bibliographystyle{aer}

%----------------------------------------------------------------------

\newpage
\part{Supplementary Appendix}

\makeatletter
\@addtoreset{section}{part}
\makeatother
\renewcommand{\thesection}{SA.\arabic{section}} 
\setcounter{section}{0}

This supplement is organized as follows. The first section (\autoref{smaster}) concerns the proof of \autoref{master-clt}. In \autoref{spois}, we prove an abstract CLT for the ``Poissonized'' model and ``de-Poissonize'' the result in \autoref{sdepois} to obtain a CLT for the original ``binomial'' model of interest (see \autoref{spfsketch} for a sketch of the argument). \autoref{master-clt} is a simple corollary of the latter result and is stated at the end of \autoref{smaster}. 

The second section (\autoref{sprimclt}) concerns the proof of \autoref{primclt}. In \autoref{introbp}, we show that the sizes of strategic neighborhoods are stochastically dominated by those of certain branching processes. Then \autoref{sexpstab} presents a series of lemmas culminating in a tail bound for the radius of stabilization. The proof of \autoref{primclt} is stated at the end of \autoref{sprimclt} (see \autoref{smeth2} for a sketch). 

%----------------------------------------------------------------------
\section{\autoref{master-clt}}\label{smaster}
%----------------------------------------------------------------------

%----------------------
\subsection{Poissonization}\label{spois}
%----------------------

This subsection proves a CLT for the Poissonized model in which the set of nodes is $\N_{N_n}$ rather than $\N_n$, where $N_n$ is defined in \eqref{poisson}. Throughout, we assume $d_\psi=1$. Define $\tilde{\sigma}_n^2 = n^{-1}\var(\sum_{i=1}^{N_n} \psi_i(\N_{N_n}))$.

\begin{sectheorem}\label{poi-clt}
  Suppose $d_\psi=1$. Under Assumptions \ref{main-bp-exp-stab} and \ref{main-bp-mom}, $\sup_n \tilde\sigma_n^2 < \infty$, and if additionally $\liminf_{n\rightarrow\infty} \tilde{\sigma}_n^2 > 0$, then
  \begin{equation*}
    \tilde\sigma_n^{-1} \frac{1}{\sqrt{n}} \left( \sum_{i=1}^{N_n} \psi_i(\N_{N_n}) - \E\left[ \sum_{i=1}^{N_n} \psi_i(\N_{N_n}) \right] \right) \dlimarrow \N(0,1).  
  \end{equation*}
\end{sectheorem}

\noindent The proof largely follows that of Theorem 2.1 of \cite{penrose2005normal} (henceforth PY). We next present several auxiliary lemmas and provide the proof at the end of this subsection.

\begin{secremark}\label{rlpoi}
  That PY's proof essentially carries over to our setting is perhaps not obvious due to several differences between the setups. First, the definitions of exponential stabilization differ since, as discussed in \autoref{srelit}, we only require invariance of node statistics to the removal of nodes outside the radius of stabilization, whereas PY demand invariance to removal and addition. Second, PY's model has no random-utility shocks $\zeta_{ij}$, but adding these to the model turns out to not affect the argument due to their high degree of independence. Third, PY define $\bm{T}_{N_n}$ as a marked Poisson point process,\footnote{See e.g.\ \cite{last2017lectures} for the definition of a marked Poisson point process.} which is equivalent to our representation because (a) our set of positions $\bm{X}_{N_n} = (X_i)_{i=1}^{N_n}$ has the same distribution as $\mathcal{P}_{nf}$, the Poisson point process on $\R^d$ with intensity function $nf(\cdot)$ \citep[][Proposition 1.5]{penrose2003}, and (b) the ``marks'' $Z_i$ associated with each $X_i$ are independent across nodes, each only potentially correlated with its own position $X_i$. Fourth, PY consider the simpler case where $X_i \indep Z_i$ for each $i$ since this is sufficient for their applications. We allow for dependence, but this turns out not to affect the argument. Finally, PY consider moments of the form
  \begin{equation*}
    \frac{1}{n} \sum_{i=1}^{N_n} \xi\big( (X_i,Z_i), r_n^{-1}\bm{T}_{N_n} \big).
  \end{equation*}

  \noindent This differs from $\psi_n(\cdot)$ because the network is not an argument (nor is $\Zeta_{N_n}$ for reasons previously discussed). However, since the network equals $\lambda_{N_n}(r_n^{-1}\bm{T}_{N_n}, \Zeta_{N_n})$ by \autoref{aesel}, we can define
  \begin{equation*}
    \xi(i, r_n^{-1}\bm{T}_{N_n}, \Zeta_{N_n}) \equiv \psi_{N_n}\big(i, r_n^{-1}\bm{T}_{N_n}, \Zeta_{N_n}, \lambda_{N_n}(r_n^{-1}\bm{T}_{N_n}, \Zeta_{N_n})\big),
  \end{equation*}

  \noindent in which case this coincides with PY's setup except for the presence of $\Zeta_{N_n}$. 
\end{secremark} 

The proof technique is to approximate $\sum_{i=1}^{N_n} \psi_i(\N_{N_n})$ by a sum $\sum_{i=1}^n W_i$ such that the dependence structure of the summands can be characterized in terms of a ``dependency graph'' $\bm{G}$ that links dependent observations. 

\begin{definition}
  Let $\bm{G}$ be a network on $n$ nodes with self-links, meaning that $G_{ii}=1$ for all $i$. We say $\bm{G}$ is a {\em dependency graph} for data $\{W_i\}_{i=1}^n$ if for any $S_1,S_2 \subseteq \N_n$ such that $G_{ij}=0$ for all $i \in S_1$ and $j \in S_2$, we have $\{W_i\colon i\in S_1\} \indep \{W_j\colon j\in S_2\}$.
\end{definition}

Let $\norm{W}_p$ be the $L_p$-norm of $W$ and $\Phi(\cdot)$ the standard normal CDF.

\begin{seclemma}[Dependency Graph CLT]\label{py4.1}
  Let $q \in (2,3]$ and $W = \sum_{i=1}^n W_i$. Suppose $\bm{G}$ is a dependency graph for $\{W_i\}_{i=1}^n$, and let $\Gamma = \max_i \sum_{j\neq i} G_{ij}$. Further suppose $\E[W^2] = 1$, $\E[W_i]=0$, and $\norm{W_i}_q \leq \theta$ for all $i$ and some $\theta>0$. Then
  \begin{equation*}
    \sup_t\, \abs{ \prob(W \leq t) - \Phi(t) } \leq 75\,\theta^q\, \Gamma^{5(q-1)}\, n.
  \end{equation*}
\end{seclemma}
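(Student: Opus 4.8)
The statement is a Berry--Esseen bound for a sum of locally dependent summands, so the natural route is Stein's method in its dependency-graph form, exactly as developed in \cite{chen2004normal}; indeed I would regard this lemma as a restatement of that reference and structure my proof to follow it. The plan is to reduce the Kolmogorov distance to a Stein functional and then exploit the conditional-independence structure encoded by $\bm{G}$. First I would fix $t$ and introduce the Stein equation $f'(w) - w f(w) = \mathbf{1}\{w \leq t\} - \Phi(t)$, whose bounded solution $f = f_t$ satisfies the standard estimates $\norm{f_t}_\infty \leq \sqrt{2\pi}/4$, $\norm{f_t'}_\infty \leq 1$, together with the one-sided Lipschitz control on $f_t'$ obtained by differentiating the equation, namely $f_t'(w) - f_t'(v) = (w f_t(w) - v f_t(v)) + (\mathbf{1}\{w\leq t\} - \mathbf{1}\{v \leq t\})$. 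Evaluating at $W$ and taking expectations gives $\prob(W \leq t) - \Phi(t) = \E[f_t'(W) - W f_t(W)]$, so it suffices to bound this Stein functional uniformly in $t$.

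The second step is to use the dependency graph to decompose the functional. For each $i$ let $N_i = \{j : G_{ij}=1\} \cup \{i\}$ be its closed neighborhood and set $\eta_i = \sum_{j \in N_i} W_j$; the defining property of $\bm{G}$ gives $W_i \indep (W - \eta_i)$. Two consequences drive the argument: $\E[W_i f_t(W - \eta_i)] = 0$ since $\E[W_i]=0$, and $\sum_i \E[W_i \eta_i] = \E[W^2] = 1$ because $\E[W_i W_j]=0$ for non-neighbors. Writing $\E[W_i f_t(W)] = \E[W_i(f_t(W) - f_t(W - \eta_i))]$ and expressing the difference as $\int_0^{\eta_i} f_t'(W - \eta_i + u)\,du$, while replacing the constant $\E[f_t'(W)]$ by $\sum_i \E[W_i \eta_i]\,\E[f_t'(W)]$, I would arrive at the decomposition
\[
  \E[f_t'(W) - W f_t(W)] = T_1 + T_2, \qquad T_2 = \E\Big[ f_t'(W)\big(1 - \textstyle\sum_i W_i \eta_i\big)\Big],
\]
where $T_1 = \sum_i \E\big[ W_i \int_0^{\eta_i} (f_t'(W) - f_t'(W - \eta_i + u))\,du \big]$.

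The final step is to bound $T_1$ and $T_2$. The term $T_2$ is at most $\norm{f_t'}_\infty\, \E\big|1 - \sum_i W_i\eta_i\big|$, which I would control by applying H\"older's inequality to the products $W_i W_j$ with $j \in N_i$ and using $\norm{W_i}_q \leq \theta$; each neighborhood sum contributes a factor of order $\Gamma$ and there are $n$ nodes. For $T_1$ I would use the one-sided Lipschitz estimate for $f_t'$ to split the integrand into a smooth part coming from $w f_t(w)$, which is Lipschitz in $w$ and so yields terms like $\E[|W_i|\,\eta_i^2(|W|+1)]$ bounded by moments, and a non-smooth indicator part $\mathbf{1}\{w \wedge v \leq t \leq w \vee v\}$. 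The indicator part is the crux: it is nonzero only when $W$ lies within $|\eta_i|$ of $t$, so it requires a \emph{concentration inequality} bounding $\prob(|W - t| \leq \delta)$-type quantities uniformly in $t$, which itself is proved by a companion Stein argument. Collecting the Stein-solution constants and the H\"older constants produces the numerical factor, each nested H\"older application over neighborhoods accrues a power of $\Gamma$ (accumulating to $\Gamma^{5(q-1)}$), and summing the $n$ node contributions, each $\lesssim \theta^q$, yields the claimed $75\,\theta^q\,\Gamma^{5(q-1)}\,n$. I expect the main obstacle to be precisely the indicator/jump term in $f_t'$: establishing the uniform-in-$t$ concentration bound and tracking the exact polynomial exponent of $\Gamma$ through the nested H\"older estimates is where the delicate bookkeeping lies, whereas the smooth contributions and $T_2$ are routine moment computations.
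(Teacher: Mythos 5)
Your proposal is correct and takes essentially the same approach as the paper: the paper's entire proof of this lemma is the single line ``This is Theorem 2.7 of \cite{chen2004normal}'', and you correctly identify the statement as a restatement of that theorem, with your Stein's-method/dependency-graph sketch faithfully describing the argument underlying the cited result. No gap to report.
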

\begin{proof}
  This is Theorem 2.7 of \cite{chen2004normal}.
\end{proof}

We next introduce some notation. Recall from \autoref{main-bp-exp-stab} the definitions of $\tau_p(\cdot)$ and $\eta$. For $p$ in \autoref{main-bp-mom}(a) and $d$ the dimension of $X_1$, let $q \in (2,3]$ such that $q < p$. Choose $\rho_n = (\alpha \log n)^{1/\eta}$ for some $\alpha$ sufficiently large that, for some $C>0$ and all $n\geq 1$, 
\begin{equation}
  \rho_n^{d/p} (n \tau_p(\rho_n))^{(q-2)/(2q)} < C n^{-4}, \quad \tau_p(\rho_n) < Cn^{-3}, \quad \rho_n^d < Cn^{p/(p+2)}.
  \label{rho_n}
\end{equation}

We provide an alternative labeling of nodes by position that is more convenient for the argument that follows. Recall from \autoref{rlpoi} that $\bm{X}_{N_n}$ has the same distribution as the Poisson point process $\mathcal{P}_{nf}$. Let $s_n = r_n \rho_n$ and $V_n$ be the number of cubes of the form $Q = \prod_{i=1}^d [j_i s_n, (j_i+1) s_n)$ where $j_i \in \mathbb{Z}$ for all $i$, such that $Q$ has positive density under $f$. Label the cubes $Q_1, \dots, Q_{V_n}$, and note that they have side lengths $s_n = o(1)$. Since $f$ has bounded support as assumed at the start of \autoref{smodel}, $V_n$ is finite, and
\begin{equation}
  V_n = O(n\rho_n^{-d}). \label{V_n} 
\end{equation}

Fix any $1\leq i\leq V_n$. Note that $N_i \equiv \abs{\bm{X}_{N_n} \cap Q_i}$ is distributed $\text{Poisson}(\nu_i)$ for $\nu_i = n \int_{Q_i} f(x) \,\text{d}x$. Relabel the points of $\bm{X}_{N_n} \cap Q_i$ as $X_{i1}, \dots, X_{iN_i}$, which, conditional on $N_i$, are i.i.d.\ draws from a density $f(\cdot) / \int_{Q_i} f(x)\,\text{d}x$. Under this relabeling,
\begin{equation}
  \bm{X}_{N_n} = \medcup_{i=1}^{V_n} \{X_{ij}\}_{j=1}^{N_i}. \label{Prep}
\end{equation}

\noindent In what follows, we will often use this double-indexed labeling for nodes, writing $X_{ij}$ for node positions and $\psi_{ij}(\N_{N_n})$ for node statistics. That is, suppose under this relabeling that node $k \in \{1, \ldots, N_n\}$ is relabeled $ij$. Then $\psi_{ij}(\N_{N_n})$ equals $\psi_k(\N_{N_n})$, $k$'s node statistic under the Poissonized model with $N_n$ nodes.

The next lemma is mostly a restatement of Lemma 4.3 of \cite{penrose2005normal}, translated to our notation.\footnote{The translation works as follows: replace $\lambda$ (their notation) with $n$ (our notation), $\lambda^{-1/d}$ with $r_n$, $\kappa$ with $f$, $A_\lambda$ with $\text{supp}(f)$, $f$ with the function mapping any real number to 1, $\mathcal{P}_\lambda$ with $\bm{X}_{N_n}$, and $\xi_\lambda(X_{i,j},\mathcal{P}_\lambda)$ with $\psi_{ij}(\N_{N_n})$.} For any $B \subseteq \R^d$, define $\N_{N_n}'(B) = \{i \in \N_{N_n}'\colon X_i \in B\}$. Unlike the definition of $\N_n(B)$ prior to \autoref{rosdef}, this involves a random number of nodes $N_n$ and unscaled node positions $X_i$. 

\begin{seclemma}\label{py4.3}
  Under \autoref{main-bp-mom}(a), for $p$ given in the assumption and $1 < q < p$, there exists $C>0$ such that $\norm{\sum_{j=1}^\infty \abs{\psi_{ij}(\N_{N_n}'(S_{ni}))} \ind\{j\leq N_i\}}_q \leq C \rho_n^{d(p+1)/p}$ for any $Q_i \subseteq S_{ni} \subseteq \R^d$, $1\leq i\leq V_n$, and $n\in\mathbb{N}$.
\end{seclemma}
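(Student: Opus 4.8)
The plan is to control the $L_q$ norm of $M_i \equiv \sum_{j=1}^\infty \abs{\psi_{ij}(\mathcal{N}_{N_n})}\,\ind\{j\leq N_i\}$ by combining the $p$th-moment bound from \autoref{py4.2} with a tail estimate for the Poisson count $N_i = \abs{\bm{X}_{N_n}\cap Q_i}$ of the representation \eqref{Prep}. Since every summand is nonnegative and $q\geq 1$, the first step is to apply Minkowski's inequality for countable sums, giving
\begin{equation*}
  \norm{M_i}_q \leq \sum_{j=1}^\infty \E\big[\, \abs{\psi_{ij}(\mathcal{N}_{N_n})}^q\, \ind\{j\leq N_i\} \,\big]^{1/q}.
\end{equation*}
This trades the genuine dependence across the summands $\psi_{ij}$ (which are correlated in $j$ and jointly depend on $N_i$ and $N_n$) for a term-by-term problem at the cost of only the triangle inequality.

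The second step interpolates each term. Applying Hölder's inequality with conjugate exponents $p/q$ and $p/(p-q)$ (legitimate because $q<p$) yields
\begin{equation*}
  \E\big[\,\abs{\psi_{ij}(\mathcal{N}_{N_n})}^q\, \ind\{j\leq N_i\}\,\big] \leq \E\big[\,\abs{\psi_{ij}(\mathcal{N}_{N_n})}^p\, \ind\{j\leq N_i\}\,\big]^{q/p}\,\prob(N_i\geq j)^{(p-q)/p},
\end{equation*}
where I have used $\ind\{j\leq N_i\}^{p/(p-q)} = \ind\{j\leq N_i\}$. The first factor is at most $(C\rho_n^d)^{q/p}$ uniformly in $i$ and $j$ by \autoref{py4.2}. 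Taking $1/q$ roots and summing over $j$ then gives
\begin{equation*}
  \norm{M_i}_q \leq (C\rho_n^d)^{1/p} \sum_{j=1}^\infty \prob(N_i\geq j)^{(p-q)/(pq)}.
\end{equation*}
It remains to show the displayed sum is $O(\rho_n^d)$ uniformly in $i\leq V_n$, for then $\norm{M_i}_q = O(\rho_n^{d/p+d}) = O(\rho_n^{d(p+1)/p})$, which is the claim.

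For this last step I would exploit that $N_i\sim\text{Poisson}(\nu_i)$ with $\nu_i = n\int_{Q_i} f(x)\,\text{d}x \leq \bar f\, n\, s_n^d = \bar f\kappa\rho_n^d$ by \eqref{r_n} and $s_n = r_n\rho_n$, so that $\nu_i = O(\rho_n^d)$ with a constant depending only on $\bar f,\kappa,d$, hence uniformly over the cubes. A Chernoff bound $\prob(N_i\geq j)\leq e^{-\nu_i}(e\nu_i/j)^j$ shows the tail decays geometrically once $j$ exceeds a fixed multiple of $\nu_i$; splitting the series at $j=e^2\nu_i$ and bounding the $O(\nu_i)$ terms below the threshold crudely by $1$, while bounding the remaining terms by a convergent geometric series that contributes $O(1)$, gives $\sum_{j\geq 1}\prob(N_i\geq j)^{(p-q)/(pq)} = O(\nu_i+1) = O(\rho_n^d)$, where $\rho_n^d\to\infty$ absorbs the additive constant.

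I expect the main obstacle to be the bookkeeping in this final step rather than any deep difficulty: the single $p$th-moment bound of \autoref{py4.2} is the only integrability available, so the Hölder exponents must be chosen exactly to convert the $q$th moment into the $p$th moment while leaving a strictly positive residual tail exponent $(p-q)/(pq)$ — which is precisely why the restriction $q<p$ is essential — and one must check that this residual still leaves the $N_i$-tail sum of order $\nu_i$ rather than divergent, uniformly in $i$. The remaining estimates are routine, and the overall structure parallels Lemma 4.3 of \cite{penrose2005normal}.
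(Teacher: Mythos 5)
Your proof is correct and is essentially the standard argument: the paper omits the proof of this lemma (deferring to Lemma 4.3 of \cite{penrose2005normal}), and your Minkowski--H\"{o}lder decomposition combined with \autoref{py4.2} and the Poisson tail bound for $N_i\sim\text{Poisson}(\nu_i)$ with $\nu_i\leq \bar f\kappa\rho_n^d$ is precisely how that omitted proof goes. The exponent bookkeeping checks out: $(C\rho_n^d)^{1/p}\cdot O(\nu_i)=O(\rho_n^{d/p+d})=O(\rho_n^{d(p+1)/p})$, uniformly in $i\leq V_n$ since $f$ is bounded.
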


\noindent Since the lemma is a fairly straightforward consequence of \autoref{main-bp-mom}(a) and properties of the Poisson distribution, we omit the proof. The main difference relative to PY is the presence of random-utility shocks $\Zeta_{N_n}$. The proof only involves quantities conditional on $N_i$ and integrating over $\Zeta_{N_n}$. Because the latter is independent of $N_i$ for all $i$, their presence does not materially affect PY's original argument. 

The last lemma is used to construct a dependency graph. 

\begin{seclemma}\label{elementary} 
  For any two disjoint sets $S_1,S_2 \subseteq \R^d$, 
  \begin{equation}
    \{(X_i,X_j,Z_i,Z_j,\zeta_{ij})\colon i,j \in \N_{N_n}(S_1)\} \indep \{(X_i,X_j,Z_i,Z_j,\zeta_{ij})\colon i,j \in \N_{N_n}(S_2)\}. \label{morespatialindep}
  \end{equation} 
\end{seclemma}
\begin{proof}
This is a consequence of the spatial independence property of Poisson point processes \citep[e.g.][Theorem 5.2]{last2017lectures}. Here we provide an elementary proof from first principles. Let $N(S) = \abs{\N_{N_n}(S)}$, $T_i = (X_i,Z_i)$, $\bm{T}_m = (T_i)_{i=1}^m$, $\bm{T}_S = (T_i\colon X_i \in S)$, $\bm{T}_{m,\dots,n} = (T_i)_{i=m}^n$, and $\bm{X}_{m,\dots,n} = (X_i)_{i=m}^n$. First consider the simple case where $S_1,S_2$ partition $\text{supp}(f)$. Fix $m_1,m_2,m\in\mathbb{N}$ such that $m_1+m_2=m$. For any $\mathcal{U}_m \subseteq \mathcal{T}^m$ and $\mathcal{Z}_m\subseteq \text{supp}(\Zeta_m)$, 
  \begin{align*}
    \prob(N(S_1)&=m_1, \bm{T}_{S_1} \in \mathcal{U}_{m_1}, \Zeta_{S_1} \in \mathcal{Z}_{m_1}, N(S_2)=m_2, \bm{T}_{S_2}\in \mathcal{U}_{m_2}, \Zeta_{S_2} \in \mathcal{Z}_{m_2}) \\
		&\begin{aligned}= \prob(N_n=m) \prob(N(S_1)=m_1, \bm{T}_{S_1} \in \mathcal{U}_{m_1}, 
		&\Zeta_{S_1} \in \mathcal{Z}_{m_1}, N(S_2)=m_2, \\
		&\bm{T}_{S_2} \in \mathcal{U}_{m_2}, \Zeta_{S_2} \in \mathcal{Z}_{m_2} \mid N_n=m) \end{aligned} \\
		&\begin{aligned}= \frac{e^{-n} n^m}{m!} \frac{m!}{m_1!m_2!} \prob(\bm{X}_{m_1} \in S_1, \, &\bm{T}_{m_1} \in \mathcal{U}_{m_1}, \Zeta_{m_1} \in \mathcal{Z}_{m_1}, \bm{X}_{m_1+1,\dots,m} \in S_2, \\ 
		&\bm{T}_{m_1+1,\dots,m} \in \mathcal{U}_{m_2}, \Zeta_{m_1+1,\dots,m} \in \mathcal{Z}_{m_2}) \end{aligned} \\
		&\begin{aligned}= \frac{e^{-n \prob(X_1 \in S_1)}n^{m_1}}{m_1!} &\prob(
		  \bm{X}_{m_1} \in S_1, \bm{T}_{m_1} \in \mathcal{U}_{m_1}, \Zeta_{m_1} \in \mathcal{Z}_{m_1}) \\ 
		&\times \frac{e^{-n \prob(X_1 \in S_2)}n^{m_2}}{m_2!} \prob(\bm{X}_{m_2} \in S_2, \bm{T}_{m_2} \in \mathcal{U}_{m_2}, \Zeta_{m_2} \in \mathcal{Z}_{m_2}). \end{aligned}
  \end{align*}

  \noindent The third equality holds because types are identically distributed, and $N(S_1)=m_1$ and $N(S_2)=m_2$ imply $N_n=m$ given that $S_1,S_2$ partition the space. The last line is true because disjoint sets of nodes have independent types and random-utility shocks.

  We perform similar calculations for the marginal distribution:
  \begin{align*}
    &\begin{aligned}\prob(N(S_1)=m_1, \bm{T}_{S_1} \in \mathcal{U}_{m_1}, \Zeta_{S_1} \in \,\,&\mathcal{Z}_{m_1}) = \sum_{m\geq m_1} \prob(N_n=m) \prob(N(S_1)=m_1, \\
    &\bm{T}_{S_1} \in \mathcal{U}_{m_1}, \Zeta_{S_1} \in \mathcal{Z}_{m_1}, N(S_1^c)=m-m_1 \mid N_n=m) \end{aligned} \\
    &= \sum_{m\geq m_1} \frac{e^{-n} n^m}{m!} \frac{m!}{m_1!(m-m_1)!} \prob(\bm{X}_{m_1} \in S_1, \bm{T}_{m_1} \in \mathcal{U}_{m_1}, \Zeta_{m_1} \in \mathcal{Z}_{m_1}, \bm{X}_{m_1+1,\dots,n} \in S_1^c) \\
    &\begin{aligned}= \sum_{m\geq m_1} \frac{e^{-n \prob(X_1 \in S_1)}n^{m_1}}{m_1!} \prob(\bm{X}_{m_1} \in S_1, \, \bm{T}_{m_1} \in \,\,&\mathcal{U}_{m_1},\Zeta_{m_1} \in \mathcal{Z}_{m_1}) \\
    &\times\frac{e^{-n \prob(X_1 \in S_1^c)}n^{m-m_1}}{(m-m_1)!} \prob(\bm{X}_{m_1+1,\dots,n} \in S_1^c) \end{aligned}\\
    &\begin{aligned}= \frac{e^{-n \prob(X_1 \in S_1)}n^{m_1}}{m_1!} \prob(\bm{X}_{m_1} \in S_1, \bm{T}_{m_1} \in \mathcal{U}_{m_1}, \,\,&\Zeta_{m_1} \in \mathcal{Z}_{m_1}) \\
      &\times \underbrace{\sum_{m\geq m_1} \frac{e^{-n \prob(X_1 \in S_1^c)}(n\prob(X_1 \in S_1^c))^{m-m_1}}{(m-m_1)!}}_1. \end{aligned}
  \end{align*}

  \noindent Therefore,
  \begin{multline*}
    \prob(N(S_1)=m_1, \bm{T}_{S_1} \in \mathcal{U}_{m_1}, \Zeta_{S_1} \in \mathcal{Z}_{m_1}, N(S_2)=m_2, \bm{T}_{S_2}\in \mathcal{U}_{m_2}, \Zeta_{S_2} \in \mathcal{Z}_{m_2}) \\
    = \prob(N(S_1)=m_1, \bm{T}_{S_1} \in \mathcal{U}_{m_1}, \Zeta_{S_1} \in \mathcal{Z}_{m_1}) \prob(N(S_2)=m_2, \bm{T}_{S_2}\in \mathcal{U}_{m_2}, \Zeta_{S_2} \in \mathcal{Z}_{m_2}), 
  \end{multline*}

  \noindent which proves \eqref{morespatialindep} for the simple case where $S_1,S_2$ partition $\text{supp}(f)$. For the general case in which $S_1,S_2$ does not partition $\text{supp}(f)$, since the sets are disjoint, we can partition the support into two sets $S_1',S_2'$ such that $S_1 \subseteq S_1'$ and $S_2 \subseteq S_2'$ and apply the result for the simple case.
\end{proof}

\bigskip
\begin{proof}[Proof of \autoref{poi-clt}]
  We mostly follow PY's proof of Theorem 2.1 with some minor modifications, reorganization, and elaboration. That $\sup_n \tilde\sigma_n^2 < \infty$ is a consequence of \autoref{pen5.2} below. Abbreviate
  \begin{equation*}
    \Lambda_n = \sum_{i=1}^{N_n} \psi_i(\N_{N_n}).
  \end{equation*}
  
  \noindent We prove the following stronger result: for $q \in (2,3]$ with $q<p$ and $\eta$ from \autoref{main-bp-exp-stab}, there exists $C>0$ such that for all $n$ sufficiently large,
  \begin{equation}
    \sup_t \,\abs{ \prob\big( \var(\Lambda_n)^{-1/2} (\Lambda_n - \E[\Lambda_n]) \leq t\big) - \Phi(t) } \leq C (\log n)^{dq/\eta} n \var(\Lambda_n)^{-q/2}. \label{stronger}
  \end{equation}

  \noindent Since $\var(\Lambda_n)^{-1} = O(n^{-1})$ by assumption, the right-hand side of \eqref{stronger} tends to zero, which proves the theorem. 

  Under representation \eqref{Prep}, $\Lambda_n = \sum_{i=1}^{V_n} \sum_{j=1}^{N_i} \psi_{ij}(\N_{N_n})$. The key idea is to approximate this with a similar sum to which the dependency graph CLT (\autoref{py4.1}) more usefully applies. For any $1\leq i\leq V_n$, let $Q_i^+ = \{x\in\R^d\colon \norm{x-y} \leq r_n\rho_n \text{ for some } y \in Q_i\}$, the $r_n\rho_n$-neighborhood of $Q_i$. Recall the definition of $\N_{N_n}'(S)$ prior to \autoref{py4.3}. Define
  \begin{equation*}
    \Lambda_n' = \sum_{i=1}^{V_n} \sum_{j=1}^{N_i} \psi_{ij}(\N'_{N_n}(Q_i^+)) \quad\text{and}\quad \Psi_i = \var(\Lambda_n')^{-1/2} \sum_{j=1}^{N_i} \psi_{ij}(\N'_{N_n}(Q_i^+)), 
  \end{equation*}

  \noindent so that $\var(\Lambda_n')^{-1/2} \Lambda_n' = \sum_{i=1}^{V_n} \Psi_i$. 

  Recall $\RR_i(n,r_n)$ from \autoref{rosdef}. The relabeling introduced above indexes nodes by $ij$ for $1\leq i\leq V_n$ and $1\leq j\leq N_i$, so we instead write $\RR_{ij}(n,r_n)$. For $\rho_n$ defined prior to \eqref{rho_n}, define the event $\mathcal{E}_{ij} = \{\RR_{ij}(N_n,r_n) \leq \rho_n\}$.

  \bigskip

  \noindent {\bf Approximating $\Lambda_n$.} We derive a bound on the discrepancy between $\Lambda_n$ and $\Lambda_n'$. We first show that $\Lambda_n' = \Lambda_n$ under the event $\mathcal{E}_n = \medcap_{i=1}^{V_n} \medcap_{j=1}^\infty \mathcal{E}_{ij}$. To see this, first define $r_n^{-1}Q_i^+ = \{r_n^{-1}x\colon x \in Q_i^+\}$, and observe that $x \in B(X_{ij}, r_n\rho_n) \subseteq Q_i^+$ if and only if $r_n^{-1}x \in B(r_n^{-1}X_{ij}, \rho_n) \subseteq r_n^{-1}Q_i^+$. Hence, $\N_{N_n}'(Q_i^+) = \N_{N_n}(r_n^{-1}Q_i^+)$ for $\N_{N_n}(B)$ defined prior to \autoref{rosdef}. As a result, $\psi_{ij}(\N_{N_n}'(Q_i^+)) = \psi_{ij}(\N_{N_n}(r_n^{-1}Q_i^+))$. Moreover, for any $X_{ij} \in Q_i$ and state of the world in $\mathcal{E}_{ij}$, we have $\{k\in\N_{N_n}\colon r_n^{-1}X_k \in B(r_n^{-1}X_{ij}, \bm{R}_{ij}(N_n,r_n))\} \subseteq \N_{N_n}(r_n^{-1}Q_i^+)$, which implies $\psi_{ij}(\N_{N_n}(r_n^{-1}Q_i^+)) = \psi_{ij}(\N_{N_n})$ by \autoref{rosdef}. Since this is true for all $ij$ under $\mathcal{E}_n$, it follows that $\Lambda_n' = \Lambda_n$ under this event.

  By \autoref{py4.3}, Minkowski's inequality, and \eqref{V_n}, there exists $C'>0$ such that for all $n$,
  \begin{equation}
    \max\big\{\norm{\Lambda_n}_q, \norm{\Lambda_n'}_q\big\} \leq C V_n \rho_n^{d(p+1)/p} \leq C' n \rho_n^{d/p}. \label{4.14}
  \end{equation}

  \noindent By the H\"{o}lder and Minkowski inequalities,
  \begin{equation}
    \norm{\Lambda_n - \Lambda_n'}_2 \leq \norm{\Lambda_n - \Lambda_n'}_q \prob(\mathcal{E}_n^c)^{0.5 - 1/q} \leq (\norm{\Lambda_n}_q + \norm{\Lambda_n'}_q) \prob(\mathcal{E}_n^c)^{(q-2)/(2q)}. \label{pre4.15}
  \end{equation}

  \noindent By the law of iterated expectations, 
  \begin{equation*}
    \prob\left( \mathcal{E}_n^c \right) \leq \E\left[ \sum_{i=1}^{V_n} \sum_{j=1}^{N_i} \ind\{\mathcal{E}_{ij}^c\} \right] = \E\left[ \sum_{i=1}^{V_n} \sum_{j=1}^{N_i} \E[\ind\{\mathcal{E}_{ij}^c\} \mid \bm{X}_{N_n}, N_n] \right]
  \end{equation*}

  \noindent As in the setting of \cite{penrose2005normal}, this is an expectation of pointwise sums (i.e.\ over the points of $\bm{X}_{N_n}$) of functionals of a Poisson point process $\bm{X}_{N_n}$. We may then apply Theorem 1.6 of \cite{penrose2003} to obtain 
  \begin{multline}
    \prob\left( \mathcal{E}_n^c \right) \leq \E\left[ \sum_{i=1}^{V_n} \sum_{j=1}^{N_i} \E[\ind\{\mathcal{E}_{ij}^c\} \mid \bm{X}_{N_n}, N_n] \right] \\ = n \int_{\R^d} \prob\left( \RR_{ij}(N_n+1,r_n) > \rho_n \mid X_{ij} = x \right) f(x) \,\text{d}x \leq n\, \tau_p(\rho_n) \label{4.11}
  \end{multline} 

\noindent for $n$ sufficiently large.\footnote{The quantity $h(Y, \mathcal{P}_\lambda)$ in Theorem 1.6 corresponds to $\E[\ind\{\mathcal{E}_{ij}^c\} \mid \bm{X}_{N_n}, N_n]$, which is a function of the Poisson process $\mathcal{P}_\lambda = \bm{X}_{N_n}$ and unit $ij$'s position $Y = X_{ij}$.} By \eqref{4.14}, \eqref{pre4.15}, \eqref{4.11}, our choice of $\rho_n$ \eqref{rho_n}, and \autoref{main-bp-exp-stab}, there exist $C,C'>0$ such that for all $n$
  \begin{equation}
    \norm{\Lambda_n - \Lambda_n'}_2 \leq C n \rho_n^{d/p} ( n \tau_p(\rho_n) )^{(q-2)/(2q)} \leq C' n^{-3}.
    \label{4.15}
  \end{equation}

  \noindent {\bf Dependency Graph.} We construct a dependency graph $\bm{G}$ for the summands $\{\Psi_i\}_{i=1}^{V_n}$ of $\Lambda_n'$ by setting $G_{ij} = \ind\{\inf\{\norm{x-y}\colon x \in Q_i, y \in Q_j\} \leq 2 r_n\rho_n\}$. To see that this is a dependency graph, observe that the value of $\Psi_i$ is determined solely by the types of and random-utility shocks between nodes positioned in $Q_i^+$. Then whenever $\inf\{\norm{x-y}\colon x \in Q_i, y \in Q_j\} > 2 r_n\rho_n$, we have $\{\Psi_k\colon k \in Q_i\} \indep \{\Psi_\ell\colon \ell \in Q_j\}$ by \autoref{elementary}. Hence, $\bm{G}$ is a dependency graph.

  Note that \eqref{stronger} holds trivially when $\var(\Lambda_n) < 1$ for $n$ sufficiently large, so we henceforth consider the case where it is at least 1. We apply \autoref{py4.1} to $\{W_i\}_{i=1}^{V_n}$ for $W_i = \Psi_i - \E[\Psi_i]$. To determine a good choice of $\theta$ for applying the lemma, notice by \autoref{py4.3} that
  \begin{equation*}
    \norm{\Psi_i}_q \leq C\, \var(\Lambda_n')^{-1/2} \bigg|\bigg| \sum_{j=1}^\infty \abs{\psi_{ij}(\N'_{N_n}(Q_i^+)) \ind\{j\leq N_i\}} \bigg|\bigg|_q \leq C\, \var(\Lambda_n')^{-1/2} \rho_n^{d(p+1)/p}.
  \end{equation*}
 
  \noindent We set $\theta$ to be the right-hand side. There are at most $5^d$ other cubes at most distance $2r_n\rho_n$ from cube $Q_i$, so $\Gamma \leq 5^d$ for $\Gamma = \max_i \sum_{j\neq i} G_{ij}$. By \autoref{py4.1} and \eqref{V_n}, for $W = \sum_{i=1}^{V_n} (\Psi_i - \E[\Psi_i])$, there exists $C>0$ such that for all $n$
  \begin{equation}
    \sup_t\, \abs{ \prob(W \leq t) - \Phi(t) } \leq Cn\rho_n^{-d} \,\var(\Lambda_n')^{-q/2} \rho_n^{dq(p+1)/p} \label{pre4.18}
  \end{equation}

  \noindent {\bf Right side of \eqref{stronger}.} We next show that the right-hand side of \eqref{pre4.18} is at most the right-hand side of \eqref{stronger}. First, we bound the discrepancy between $\var(\Lambda_n')$ and $\var(\Lambda_n)$. Trivially,
  \begin{equation}
    \var(\Lambda_n) = \var(\Lambda_n') + \var(\Lambda_n - \Lambda_n') + 2 \cov(\Lambda_n', \Lambda_n-\Lambda_n'). \label{pre4.17}
  \end{equation}

  \noindent We can bound the first term on the right-hand side as follows. By construction, $\Lambda_n'$ is the sum of $V_n$ random variables that have second moments bounded by a constant times $\rho_n^{2d(p+1)/p}$ due to \autoref{py4.3} for $q=2$. Also, as previously established, the covariance between any pair of these variables is zero when the corresponding nodes are respectively positioned in cubes $Q_i$ and $Q_j$ such that any element of $Q_i$ is more than $2r_n\rho_n$ away from any element of $Q_j$. Otherwise, the covariance is bounded by a constant times $\rho_n^{2d(p+1)/p}$ by Cauchy-Schwarz. Consequently, by \eqref{V_n},
  \begin{equation}
    \var(\Lambda_n') = O(n \rho_n^{d(p+2)/p}).
    \label{4.13}
  \end{equation}

  Combining \eqref{4.15}, \eqref{pre4.17}, and \eqref{4.13}, by our choice of $\rho_n$ \eqref{rho_n} and Cauchy-Schwarz, 
  \begin{equation}
    \abs{\var(\Lambda_n) - \var(\Lambda_n')} \leq C n^{-2}.
    \label{4.17}
  \end{equation}

  \noindent Given $\var(\Lambda_n) \geq 1$ from above, this implies that, for $n$ sufficiently large, $\var(\Lambda_n') \geq \var(\Lambda_n)/2$. Since $q<p$,
  \begin{equation}
    \eqref{pre4.18} \leq C n \var(\Lambda_n)^{-q/2} \rho_n^{dq}, \label{4.18}
  \end{equation}

  \noindent which delivers the bound on the right-hand side of \eqref{stronger}, given $\rho_n = (\alpha \log n)^{1/\eta}$.

  \bigskip

  \noindent {\bf Left side of \eqref{stronger}.} However, \eqref{pre4.18} and \eqref{4.18} do not immediately deliver \eqref{stronger} because they apply to $W = \var(\Lambda_n')^{-1/2} (\Lambda_n' - \E[\Lambda_n'])$ rather than the target $\var(\Lambda_n)^{-1/2} (\Lambda_n - \E[\Lambda_n])$. Let $\tilde W = \var(\Lambda_n')^{-1/2} (\Lambda_n - \E[\Lambda_n])$. Using the Lipschitz property of $\Phi(\cdot)$ and \eqref{4.18}, elementary calculations yield
  \begin{equation}
    \sup_t\, \abs{\prob(\tilde W \leq t) - \Phi(t)} \leq C\beta + Cn\var(\Lambda_n)^{-q/2} \rho_n^{dq} + \prob(\abs{\tilde W - W} \geq \beta)
    \label{4.19}
  \end{equation}

  \noindent for any $\beta>0$ \citep[][eq.\ (4.19)]{penrose2005normal}. Also,
  \begin{equation}
    \abs{\tilde W - W} \leq \var(\Lambda_n')^{-1/2} \left( \abs{\Lambda_n - \Lambda_n'} + \E[\abs{\Lambda_n - \Lambda_n'}] \right). \label{g2j03}
  \end{equation}

  \noindent Since \eqref{4.15} implies $\E[\abs{\Lambda_n - \Lambda_n'}] \leq C n^{-3}$, we have \eqref{g2j03} $\leq Cn^{-3}$ under $\mathcal{E}_n$. On the other hand, $\prob(\mathcal{E}_n^c) \leq Cn^{-2}$ by \eqref{4.11} and \eqref{rho_n}. So choosing $\beta = Cn^{-3}$ in \eqref{4.19}, we obtain
  \begin{equation}
    \sup_t\, \abs{\prob(\tilde W \leq t) - \Phi(t)} \leq Cn\var(\Lambda_n)^{-q/2} \rho_n^{dq} + Cn^{-2}.
    \label{4.20}
  \end{equation}
  
  With these pieces in place, we may now prove \eqref{stronger}. We have
  \begin{multline}
    \sup_t\, \abs{\prob\big(\var(\Lambda_n)^{-1/2} (\Lambda_n - \E[\Lambda_n]\big) \leq t) - \Phi(t)} \\ \leq \sup_t \bigg| \prob\left(\tilde W \leq t\left( \frac{\var(\Lambda_n)}{\var(\Lambda_n')} \right)^{1/2} \right) - \Phi\left( t \left( \frac{\var(\Lambda_n)}{\var(\Lambda_n')} \right)^{1/2} \right) \bigg| \\
    + \sup_t \bigg| \Phi\left( t \left( \frac{\var(\Lambda_n)}{\var(\Lambda_n')} \right)^{1/2} \right) - \Phi(t) \bigg|. \label{4.21}
  \end{multline}

  \noindent For $n$ large, $\var(\Lambda_n') \geq \var(\Lambda_n)/2 \geq 0.5$ since $\var(\Lambda_n) \geq 1$. By \eqref{4.17}, there exists $C > 0$ such that for all $n>0$,
  \begin{equation}
    \bigg| t \left( \frac{\var(\Lambda_n)}{\var(\Lambda_n')} \right)^{1/2} - t \bigg| \leq \abs{t} \bigg| \frac{\var(\Lambda_n)}{\var(\Lambda_n')} - 1 \bigg| \leq C n^{-2} \abs{t}. \label{ghawi3409}
  \end{equation}

  \noindent Letting $\phi(\cdot)$ denote the normal density, for all $s\leq t$, $\abs{\Phi(s)-\Phi(t)} \leq (t-s) \max_{u \in [s,t]} \phi(u)$. Then using \eqref{ghawi3409},
  \begin{equation*}
    \sup_t \bigg| \Phi\left( t \left( \frac{\var(\Lambda_n)}{\var(\Lambda_n')} \right)^{1/2} \right) - \Phi(t) \bigg| \leq Cn^{-2} \sup_t \big\{ \abs{t} \max_{u \in [t-tCn^{-2}, t+tCn^{-2}]} \phi(u) \big\} \leq Cn^{-2}.
  \end{equation*}

  \noindent Combined with \eqref{4.20} and \eqref{4.21}, we obtain
  \begin{equation*}
    \sup_t\, \abs{ \prob\big( \var(\Lambda_n)^{-1/2} (\Lambda_n - \E[\Lambda_n]) \leq t\big) - \Phi(t) } \leq C (\log n)^{dq/\eta} n \var(\Lambda_n)^{-q/2} + O(n^{-2}).
  \end{equation*}

  \noindent By \eqref{4.13} and \eqref{4.17}, $\var(\Lambda_n) = O(n\rho_n^{d(p+2)/p})$, so by \eqref{rho_n} and the fact that $q \in (2,3]$, the first term on the right-hand side dominates the second term, and \eqref{stronger} follows.
\end{proof}

%----------------------
\subsection{de-Poissonization}\label{sdepois}
%----------------------

As in the previous subsection, assume $d_\psi=1$. The next result shows that, given a CLT for the Poissonized model with node set $\N_{N_n}$, a CLT also holds for the original ``binomial'' model of interest with node set $\N_n$. Define $\sigma_n^2 = n^{-1} \var(\sum_{i=1}^n \psi_i(\N_n))$.

\begin{sectheorem}\label{pen2.3} 
  Suppose $d_\psi=1$. Under Assumptions \ref{main-bp-exp-stab} and \ref{main-bp-mom}, $\sup_n \sigma_n^2 < \infty$. Further suppose that $\liminf_{n\rightarrow\infty} \sigma_n^2 > 0$. Then $\liminf_{n\rightarrow\infty} \tilde\sigma_n^2 > 0$ and
  \begin{equation*}
    \sigma_n^{-1} \frac{1}{\sqrt{n}} \sum_{i=1}^n \big(\psi_i(\N_n) - \E[\psi_i(\N_n)]\big) \dlimarrow \N(0,1).  
  \end{equation*}
\end{sectheorem}

The proof is stated at the end of this subsection and uses the next four lemmas. \autoref{3.1py} establishes an asymptotic equivalence between $\bm{X}_n$ and $\bm{X}_{N_n}$ and is taken directly from \cite{py2003}. \autoref{pen3.7} proves that the node statistics of two nodes positioned a fixed distance apart are asymptotically uncorrelated. \autoref{pen5.1} proves that add-one costs of two nodes at a fixed distance apart are asymptotically uncorrelated. Finally, \autoref{pen5.2} shows that various moments are uniformly bounded. 

\begin{secremark}[Related Literature]\label{rlde} 
  These results are adaptations of lemmas in \cite{penrose2007gaussian}, so we compare and contrast our setup with his. First, the differences discussed in \autoref{rlpoi} persist here. Second, the proofs of Lemmas \ref{pen3.7} and \ref{pen5.1} need some modification because our setups differ, most notably the definition of stabilization. Third, to prove the lemmas, we define a coupling that follows the ideas in \cite{penrose2007gaussian} and \cite{py2003}, but the details are different because we need to accommodate random-utility shocks $\zeta_{ij}$, dependence between $X_i$ and $Z_i$, and differences in notation. Fourth, Penrose considers a generalization of $\psi_n(\cdot)$ that is a {\em measure} in that it takes as an argument a subset of $\R^d$. This generalization is unnecessary for our purposes.
\end{secremark}

%---------------------------
The lemmas that follow utilize the following coupling whose essential idea is from \cite{py2003}, p.\ 298. As in \autoref{eselection}, define i.i.d.\ types $\{(X_i,Z_i)\}_{i\in\mathbb{N}}$ and i.i.d.\ random-utility shocks $\{\zeta_{ij}\colon i\neq j, \{i,j\} \subseteq \mathbb{N}\}$, with the two sets mutually independent. For any $n\in\mathbb{N}$, let $N_n'$ be an independent copy of $N_n$ (defined in \autoref{spfsketch}), independent of all other primitives. 
\begin{itemize}
  \item For any $m\in\mathbb{N}$, let each node $i\in\N_m$ be assigned type $T_i = (X_i, Z_i)$ and each node pair $i,j\in\N_m$ assigned random-utility shock $\zeta_{ij}$. Define
    \begin{equation}
      \bm{X}_m = (X_5, X_7, X_9, \dots, X_{2m+3}).
      \label{Xm}
    \end{equation}

    \noindent This will be the array of node positions, which notably excludes even-labeled nodes and nodes $1, \ldots, 4$. The latter are reserved for particular use in the proof of \autoref{pen3.7} below, while the even-labeled nodes will be used in the next bullet point. Observe that \eqref{Xm} has the same distribution as the more conventional-looking array $(X_1, X_2, \dots, X_m)$. 
    
  \item Fix any $x,y\in\R^d$. Let $F_x$ be the set of points in $\R^d$ closer to $x$ than $y$ and $F_y$ the remaining set of points (break ties arbitrarily). Let $\mathcal{P}_{nf}^{*,x}$ be the restriction of $\tilde{\bm{X}}_{N_n} \equiv (X_5, X_7, X_9, \dots, X_{2N_n+3})$ to $F_x$. Let $\mathcal{Q}_{nf}^{*,x}$ be the restriction of $\tilde{\bm{X}}_{N_n'} \equiv (X_6, X_8, X_{10} \dots, X_{2N_n'+4})$ to $F_y$. Construct $\mathcal{P}_{nf}^x$ by concatenating the two vectors $\mathcal{P}_{nf}^{*,x}$ and $\mathcal{Q}_{nf}^{*,x}$. As the notation suggests, for any $n$, the set consisting of the components of $\mathcal{P}_{nf}^x$ has the same distribution as an inhomogeneous Poisson point process on $\R^d$ with intensity function $nf(\cdot)$. This follows from Poisson superposition \citep[][Theorem 3.3]{last2017lectures} since $F_x$ and $F_y$ partition $\R^d$. Hence, $\mathcal{P}_{nf}^{x}$ has the same distribution as $\bm{X}_{N_n}$ used in \autoref{spois}.
    
  \item Similarly, let $\mathcal{P}_{nf}^{*,y}$ be the restriction of $\tilde{\bm{X}}_{N_n}$ to $F_y$ and $\mathcal{Q}_{nf}^{*,y}$ the restriction of $\tilde{\bm{X}}_{N_n'}$ to $F_x$. Construct $\mathcal{P}_{nf}^y$ by concatenating the two vectors $\mathcal{P}_{nf}^{*,y}$ and $\mathcal{Q}_{nf}^{*,y}$. Then $\mathcal{P}_{nf}^{y}$ has the same distribution as $\bm{X}_{N_n}$. 
\end{itemize}

\noindent The point of this construction is that, by switching the odds and evens on the two half-sets $F_x$ and $F_y$, we have $\mathcal{P}_{nf}^{x} \indep \mathcal{P}_{nf}^{y}$. As a consequence, we obtain the following independence property. Let $r_n^{-1}\mathcal{P}_{nf}^x = \{r_n^{-1}X\colon X \in \mathcal{P}_{nf}^x\}$. For any $S_1,S_2 \subseteq \R^d$ with $S_1 \cap S_2 = \emptyset$, define $\N_x(S_1) = \{k \in \mathbb{N}\colon r_n^{-1}X_k \in r_n^{-1}\mathcal{P}_{nf}^x \cap S_1\}$ and $\N_y(S_2)$ similarly. Then
\begin{equation}
  \{(X_i,X_j,Z_i,Z_j,\zeta_{ij})\colon i,j \in \N_x(S_1)\} \indep \{(X_i,X_j,Z_i,Z_j,\zeta_{ij})\colon i,j \in \N_y(S_2)\}. \label{morespatialindep2}
\end{equation}

\noindent This is similar to \eqref{morespatialindep} but using two ``interleaved'' Poisson processes $\mathcal{P}_{nf}^{x}$ and $\mathcal{P}_{nf}^{y}$. The interleaving idea of taking two different ``halves'' on $F_x$ and $F_y$ of two independent Poisson processes (in our case, the odds and the evens) is taken from \cite{py2003}. Our construction only differs by building the two processes from the odd and even elements of the same countable set of random vectors. 

For any $x,y\in\R^d$, $r>0$, and $\bm{X} \subseteq \R^d$, let $\tau_{x,r}y = x+r^{-1}(y-x)$ and $\tau_{x,r}\bm{X} = \{x+r^{-1}(y-x)\colon y \in \bm{X}\}$. This operator dilates the point set $\bm{X}$ relative to point $x$, whereas in the main paper, we use $r^{-1}\bm{X}=\tau_{\zero,r}\bm{X}$, which specifically dilates $\bm{X}$ relative to the origin. For technical reasons, it will be convenient to consider dilations relative to other locations. The next coupling lemma shows that the ``binomial process'' $\tau_{x,r_n}\bm{X}_n$ and Poisson process $\tau_{x,r_n}\mathcal{P}_{nf}^x$ coincide on neighborhoods of $x$.

%---------------------------
\begin{seclemma}\label{3.1py}
  For any $R > 0$, $x,y \in \R^d$, and sequence $\{\ell_n\}_{n\in\mathbb{N}}$ with $\ell_n/n\rightarrow 1$,
  \begin{equation*}
    \prob\big( \tau_{x,r_n}\bm{X}_{\ell_n} \cap B(x,R) = \tau_{x,r_n}\mathcal{P}_{nf}^x \cap B(x,R) \big) \rightarrow 1.
  \end{equation*}
\end{seclemma}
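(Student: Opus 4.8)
The plan is to reduce the statement to controlling the number of ``extra'' points that one of the two (coupled) processes places in a shrinking ball around $x$, and then to show this number has vanishing expectation. Throughout I take $x\neq y$, which is the only case used in the sequel (the sets $F_x,F_y$ exist precisely to separate the two distinct fixed points).

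\textbf{Reduction to original coordinates.} Since $\tau_{x,r_n}$ is a bijection of $\R^d$ with $\tau_{x,r_n}w \in B(x,R) \iff w \in B(x,r_nR)$, the event in the lemma is equivalent to $\bm{X}_{\ell_n} \cap B(x,r_nR) = \mathcal{P}_{nf}^x \cap B(x,r_nR)$ as sets of positions. Because $x\neq y$, the open set $F_x = \{w : \norm{w-x} < \norm{w-y}\}$ contains a ball $B(x,\delta)$ for some $\delta>0$, and since $r_nR\to0$ by \eqref{r_n}, for all $n$ large we have $B(x,r_nR)\subseteq F_x$. On this (eventually deterministic) event, $\mathcal{Q}_{nf}^{*,x}$, which is supported on $F_y$, contributes nothing inside the ball, so $\mathcal{P}_{nf}^x \cap B(x,r_nR) = \mathcal{P}_{nf}^{*,x}\cap B(x,r_nR)$; in particular $N_n'$ and the even labels are irrelevant near $x$.

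\textbf{Comparing the odd-indexed truncations.} Both $\bm{X}_{\ell_n}=(X_5,X_7,\dots,X_{2\ell_n+3})$ and $\mathcal{P}_{nf}^{*,x}$, the restriction of $\tilde{\bm X}_{N_n}=(X_5,X_7,\dots,X_{2N_n+3})$ to $F_x$, are built from the \emph{same} underlying sequence of odd-labeled points $X_{2k+3}$: the former over $k\le\ell_n$, the latter over $k\le N_n$. Inside $B(x,r_nR)\subseteq F_x$ the $F_x$-restriction is vacuous, so the two sets agree there unless some $X_{2k+3}$ with $\min(N_n,\ell_n) < k \le \max(N_n,\ell_n)$ lands in $B(x,r_nR)$. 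It therefore suffices to show that the probability of at least one such ``extra'' point falling in the ball tends to zero.

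\textbf{Bounding the bad event and concluding.} Since $\{N_n\}$ is independent of the positions and each $X_{2k+3}$ has density $f$ with $\bar f = \sup_x f(x) < \infty$, writing $c_d$ for the volume of the unit Euclidean ball and using $r_n^d=\kappa/n$,
\[
  \E\!\left[\sum_{k=\min(N_n,\ell_n)+1}^{\max(N_n,\ell_n)} \ind\{X_{2k+3}\in B(x,r_nR)\}\right]
  = \E[\abs{N_n-\ell_n}]\int_{B(x,r_nR)} f(u)\,\mathrm{d}u
  \leq \kappa\,\bar f\,c_d\,R^d\,\frac{\E[\abs{N_n-\ell_n}]}{n}.
\]
By Jensen's inequality $\E[\abs{N_n-n}]\le \var(N_n)^{1/2}=\sqrt n$ because $N_n\sim\text{Poisson}(n)$, while $\abs{n-\ell_n}=o(n)$ because $\ell_n/n\to1$; hence $\E[\abs{N_n-\ell_n}]/n\le n^{-1/2}+o(1)\to0$. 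Markov's inequality then forces the probability of any extra point in the ball to vanish, and combined with the containment $B(x,r_nR)\subseteq F_x$ from the first step this yields the claim. The only real subtlety—the ``hard part'' in an otherwise routine estimate—is the coupling bookkeeping in the second step: one must see that inside the shrinking ball only the odd-indexed process is active and that the two truncations can disagree only on the $\abs{N_n-\ell_n}$ index slots where one extends beyond the other. Once that is isolated, the analysis rests entirely on $\var(N_n)=n$ and the $O(r_n^d)=O(1/n)$ probability that a single point falls in $B(x,r_nR)$.
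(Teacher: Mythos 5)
Your argument is correct. The paper does not actually prove this lemma; it disposes of it with a one-line citation to Lemma 3.1 of \cite{py2003}, so you have supplied the argument that the citation stands in for. Your proof is in fact the standard one behind that cited result: undo the dilation so the event becomes $\bm{X}_{\ell_n}\cap B(x,r_nR)=\mathcal{P}_{nf}^x\cap B(x,r_nR)$, observe that the two point sets are truncations of the same i.i.d.\ sequence at $\ell_n$ and at $N_n$ respectively, and bound the expected number of symmetric-difference points landing in the shrinking ball by $\bar f c_d R^d \kappa\,\E[\abs{N_n-\ell_n}]/n \le C(n^{-1/2}+o(1))$, which vanishes by $\var(N_n)=n$ and $\ell_n/n\to1$. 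The one genuinely paper-specific piece of bookkeeping --- that near $x$ only the odd-labeled process $\mathcal{P}_{nf}^{*,x}$ is active, because $B(x,r_nR)\subseteq F_x$ eventually and $\mathcal{Q}_{nf}^{*,x}$ lives on $F_y$ --- you handle correctly. Your restriction to $x\neq y$ is harmless: it is the only case with content (for $x=y$ the half-space construction degenerates to a single process and the same truncation bound applies directly), and it is the only case that arises with positive measure in the integrals where the lemma is invoked. No gaps.
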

\begin{proof}
  This is Lemma 3.1 of \cite{py2003}.
\end{proof}

\begin{secremark} 
  The coupling will be used to show that the node statistic of a node positioned at $x$ is asymptotically uncorrelated with that of a node positioned at $y$, for any fixed $x,y\in\R^d$ (\autoref{pen3.7}). The idea is that by stabilization, $x$'s statistic is primarily determined by nodes near $x$ and likewise for $y$, and by \autoref{3.1py} and stabilization, nodes near $x$ will be elements of $\mathcal{P}_{nf}^x$ and nodes near $y$ of $\mathcal{P}_{nf}^y$, which are independent. Note this does not mean node statistics are essentially i.i.d.\ because in the continuum limit, there are many nodes near $x$ and many near $y$.
\end{secremark}

%---------------------------
Recall that $\Phi_z(\cdot \mid x)$ is the conditional distribution of $Z_1$ given $X_1=x$. Fix any $w,x,y,z \in \text{supp}(f)$, $z_1 \in \text{supp}(\Phi_z(\cdot \mid x))$, $z_2 \in \text{supp}(\Phi_z(\cdot \mid y))$, $z_3 \in \text{supp}(\Phi_z(\cdot \mid x+r_nz))$, and $z_4 \in \text{supp}(\Phi_z(\cdot \mid y+r_nw))$. Consider a modification of the coupling construction in which we {\em redefine}
\begin{align*}
  &X_1\equiv x, X_2\equiv y, X_3\equiv x+r_nz, X_4\equiv y+r_nw, \quad\text{and} \\
  &Z_k\equiv z_k \quad\text{for all } k=1, \ldots, 4.
\end{align*}

\noindent That is, we replace the types associated with the first four positions in the coupling construction with non-random values. We refer to this as ``{\bf DGP $(\star)$}'' in what follows. Note that the DGP implicitly depends on $n$ and the eight non-random vectors defined above. It will be used to construct objects analogous to quantities defined prior to Assumptions \ref{main-bp-exp-stab} and \ref{main-bp-mom}(a) involving a $\star$ superscript.

For any finite $J \subseteq \mathbb{N}$, $i\in J$, and $v\in\R^d$, define 
\begin{equation*}
  \psi_{i,v}(J) = \psi_{\abs{J}}\big(i, \tau_{v,r_n}\bm{T}_J, \Zeta_J, \lambda_{\abs{J}}(\tau_{v,r_n}\bm{T}_J, \Zeta_J)\big)
\end{equation*}

\noindent where $\tau_{v,r_n}\bm{T}_J = ((\tau_{v,r_n}X_i,Z_i))_{i\in J}$ and $\Zeta_J = (\zeta_{ij})_{i\neq j; i,j\in J}$. In other words, $\psi_{i,v}(J)$ is $i$'s node statistic under a model in which the node set is $J$ and positions are dilated relative to $v$. For $v=\zero$, $\psi_{i,v}(J)$ is a representation of $\psi_i(J)$ defined in \eqref{psiH}. Recall from \eqref{morespatialindep2} the definition of $\N_x(S_1)$, and abbreviate $\N_x \equiv \N_x(\R^d)$. Then for $u \in \{x,y\} \subseteq \R^d$, $\psi_{i,x}(\N_u)$ corresponds to $i$'s statistic under the Poissonized model where the number of nodes $N_n$ is random. When $u=m\in\mathbb{N}$, it corresponds to $i$'s statistic under the usual ``binomial'' model with non-random node set $\N_m$. Finally, for any $a,b \in \mathbb{N}$ with $a\leq b$, define $\N_{a;b} = \{a, \ldots, b+a-1\}$.

\begin{seclemma}\label{pen3.7}
  Let $\{\ell_n\}_{n\in\mathbb{N}}$ and $\{m_n\}_{n\in\mathbb{N}}$ be sequences of natural numbers such that $\ell_n/n$ and $m_n/n$ tend to one. Define
  \begin{align*}
    &\begin{aligned} V_1(x,y) = \big( &\psi_{1,x}(\N_{5;\ell_n} \cup \{1\}), \psi_{1,x}(\N_{5;\ell_n} \cup \{1,2\}), \psi_{1,x}(\N_{5;\ell_n} \cup \{1,3\}), \\
			 &\psi_{1,x}(\N_{5;\ell_n} \cup \{1,2,3\}), \psi_{1,x}(\N_{5;m_n} \cup \{1\}), \psi_{1,x}(\N_{5;m_n} \cup \{1,2\}), \\
			 &\psi_{1,x}(\N_{5;m_n} \cup \{1,2,3\}), \psi_{2,y}(\N_{5;m_n} \cup \{2\}), \psi_{2,y}(\N_{5;m_n} \cup \{1,2\}), \\
			 &\psi_{2,y}(\N_{5;m_n} \cup \{1,2,3\}), \psi_{2,y}(\N_{5;m_n} \cup \{1,\dots,4\}) \big) \quad\text{and} \end{aligned} \\
			 &\begin{aligned} V_2(x,y) = \big( &\psi_{1,x}(\N_x \cup \{1\}), \psi_{1,x}(\N_x \cup \{1\}), \psi_{1,x}(\N_x \cup \{1,3\}), \\
			 &\psi_{1,x}(\N_x \cup \{1,3\}), \psi_{1,x}(\N_x \cup \{1\}), \psi_{1,x}(\N_x \cup \{1\}), \\
			 &\psi_{1,x}(\N_x \cup \{1,3\}), \psi_{2,y}(\N_y \cup \{2\}), \psi_{2,y}(\N_y \cup \{2\}), \\
			 &\psi_{2,y}(\N_y \cup \{2\}), \psi_{2,y}(\N_y \cup \{2,4\}) \big). \end{aligned}
  \end{align*}

  \noindent Under DGP $(\star)$ and \autoref{main-bp-exp-stab}, 
  \begin{equation*}
    \norm{V_1(x,y) - V_2(x,y)}_\infty \plimarrow 0.
  \end{equation*}

  \noindent where $\norm{\cdot}_\infty$ denotes the component-wise maximum. Furthermore, the terms in $V_2(x,y)$ with $x$ subscripts are independent of the terms with $y$ subscripts by \eqref{morespatialindep2}.
\end{seclemma}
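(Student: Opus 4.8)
The plan is to establish the conditional convergence coordinate by coordinate: for each of the eleven entries I will show that the binomial entry of $V_1(x,y)$ coincides with the corresponding Poissonized entry of $V_2(x,y)$ with conditional probability tending to one, and then finish by a union bound over coordinates. The engine is a localization argument combining exponential stabilization with the coupling lemma \autoref{3.1py}. I work in the coordinates dilated relative to $x$ for the $\psi_{1,x}$ entries and relative to $y$ for the $\psi_{2,y}$ entries. Under $\mathcal{F}(w,x,y,z)$ the special nodes sit at $X_1=x$, $X_2=y$, $X_3=x+r_nz$, $X_4=y+r_nw$, so in the $x$-dilated picture $\tau_{x,r_n}X_1=x$ and $\tau_{x,r_n}X_3=x+z$ (a bounded distance from $x$), while $\tau_{x,r_n}X_2$ and $\tau_{x,r_n}X_4$ drift to infinity because $r_n^{-1}\norm{y-x}\to\infty$ (here I use $x\neq y$). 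The near/far roles are mirrored in the $y$-dilated picture.

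Consider a representative $x$-coordinate, $\psi_{1,x}(\mathcal{N}_{\ell_n}\cup\{1,2,3\})$ versus $\psi_{1,x}(\mathcal{N}_x\cup\{1,3\})$. Fix a large radius $R$. By \autoref{main-bp-exp-stab}, applied with the appropriate clause (the fixed-position nodes determine $k\in\{1,2,3\}$ in $\tau_{b,\epsilon}$ after relabeling the adjoined special nodes to consecutive indices, and the Poissonized models are handled analogously via $\tau_p$), the radius of stabilization $\RR_1$ of node $1$ has uniformly exponential tails conditional on $\mathcal{F}$ for both the binomial node set $\mathcal{N}_{\ell_n}\cup\{1,2,3\}$ and the Poissonized node set $\mathcal{N}_x\cup\{1,3\}$, so $\prob(\RR_1>R\mid\mathcal{F})$ is uniformly small once $R$ is large. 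On $\{\RR_1\le R\}$, \autoref{rosdef} lets me discard every node whose dilated position lies outside $B(x,R)$ without changing $\psi_{1,x}$; in particular node $2$ drops out for $n$ large, while node $3$ is retained once $R>\norm{z}$. By \autoref{3.1py} (with the sequences $\ell_n/n\to1$ and $m_n/n\to1$) the configurations $\tau_{x,r_n}\bm{X}_{\ell_n}\cap B(x,R)$ and $\tau_{x,r_n}\mathcal{P}_{nf}^x\cap B(x,R)$ coincide with probability tending to one; since $B(x,r_nR)\subseteq F_x$ for $n$ large and $\mathcal{P}_{nf}^x$ is built from the odd-labeled points on $F_x$, coincidence of positions forces coincidence of the attached attributes $Z_i$ and shocks $\zeta_{ij}$. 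On the intersection of these events the two localized node sets, together with the common special nodes $\{1,3\}$, carry identical data, so the localized statistics agree and hence the binomial and Poissonized entries agree. Note that only invariance to removal of nodes is used, never invariance to addition, which is precisely why the counterfactual-subnetwork definition in \autoref{rosdef} suffices (cf.\ \autoref{rlpoi}).

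The remaining coordinates are handled identically: the $x$-coordinates differ only in which fixed-position nodes are adjoined (node $2$ is always far and immaterial, node $3$ is near whenever present) and in whether the ambient size is $\ell_n$ or $m_n$, both admissible in \autoref{3.1py}; the $y$-coordinates are the mirror image, dilating relative to $y$, using $\mathcal{P}_{nf}^y$ in place of $\mathcal{P}_{nf}^x$, with nodes $1,3$ now far and node $4$ near. A union bound over the eleven coordinates, followed by sending $n\to\infty$ for fixed $R$ and then $R\to\infty$, gives $\prob(\norm{V_1(x,y)-V_2(x,y)}>\epsilon\mid\mathcal{F})\to0$. The independence claim is separate and exact: each $x$-entry of $V_2$ is a function of $\mathcal{P}_{nf}^x$ together with the data indexed by $\{1,3\}$, and each $y$-entry a function of $\mathcal{P}_{nf}^y$ together with the data indexed by $\{2,4\}$; these index sets are disjoint, $\mathcal{P}_{nf}^x$ and $\mathcal{P}_{nf}^y$ are independent by construction, and conditioning on $\mathcal{F}$ only fixes the positions of the special nodes, so the spatial-independence property \eqref{morespatialindep2} delivers the asserted independence.

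The step I expect to be the main obstacle is securing the uniform exponential tail bound on $\RR_1$ conditional on $\mathcal{F}$ simultaneously for every configuration appearing in $V_1$ and $V_2$ — in particular matching each configuration to the correct clause of \autoref{main-bp-exp-stab} (the binomial clause with the right $k$, and the Poissonized clause, including the cases with two adjoined fixed-position nodes), and verifying that conditioning on the extra, payoff-irrelevant positions, such as $X_4$ for an $x$-statistic, leaves the tail bound intact by independence. Everything else — discarding far nodes, invoking the coupling, and transferring coincidence of positions to coincidence of marks — is routine given the coupling construction.
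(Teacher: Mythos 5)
Your proposal is correct and follows essentially the same route as the paper: localize each statistic to a ball $B(\cdot,R)$ using the radius of stabilization (so the far special nodes drop out and the near ones are retained), couple the binomial and Poissonized configurations on that ball via \autoref{3.1py}, bound the failure probability by $\prob(\mathcal{E}_n(R)^c)+\prob(\RR^* > R)$ as in \eqref{cm55}, and send $n\rightarrow\infty$ then $R\rightarrow\infty$. The one step you flag as the main obstacle --- transferring the origin-dilated tail bounds of \autoref{main-bp-exp-stab} to the radii defined relative to dilations about $x$ and $y$ --- is resolved in the paper by the translation-invariance property \eqref{xidiff}, restated as \eqref{xidiff2}, which shows the two radii coincide exactly.
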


\begin{secremark} 
  The vector $V_1(x,y)$ involves only binomial models, whereas $V_2(x,y)$ involves only Poissonized models. By the coupling construction, $\N_x$ and $\N_y$ involve only nodes with labels 5 or greater, and by definition, $\N_{5;\ell_n}$ involves only nodes with labels 5 through $\ell_n+5$. Then $\psi_{1,x}(\N_{5;\ell_n} \cup \{1,2,3\})$, for example, involves nodes 1, 2, 3, and 5 through $\ell_n+5$, so it depends on $(X_1,Z_1)\equiv (x,z)$ and $(X_2,Z_2)\equiv (y,z_2)$, and $(X_3,Z_3) \equiv (x+r_nz, z_3)$, as well as the types of nodes 5 through $\ell_n+5$, which are random.
\end{secremark}

\begin{secremark} 
  In this lemma, we have four nodes of interest: two ``focal'' nodes $1,2$ and two ``ancillary'' nodes $3,4$. Since $r_n\rightarrow 0$, node $3$ is local to $1$ and $4$ is local to $2$ in terms of positions, while $1$ and $2$ are a fixed distance $\norm{x-y}$ apart and in this sense non-local. The lemma shows that, in various models with different node sets, node $1$'s statistic is asymptotically uncorrelated with node $2$'s statistic since all terms in $V_2(x,y)$ involving node 1 are independent of those involving node 2 by the coupling construction. Node $1$'s statistic asymptotically only depends on nodes $1$ and $3$ and not on nodes $2$ and $4$, and vice versa for node $2$.
\end{secremark}

\begin{proof}[Proof of \autoref{pen3.7}]
  This lemma is analogous to Lemma 3.7 of \cite{penrose2007gaussian}, but the arguments need modification for reasons discussed in \autoref{rlde}. The key step is establishing an analog of his (3.4), which is accomplished in \eqref{cm55} below.

  We will only prove convergence of arguably the most complicated of the eleven components in $V_1(x,y)$ and $V_2(x,y)$, namely that under DGP $(\star)$, for any $\epsilon>0$,
  \begin{equation}
    \prob\big( \norm{ \psi_{2,y}(\N_{5;m_n} \cup \{1,\dots,4\}) - \psi_{2,y}(\N_y \cup \{2,4\}) }_\infty > \epsilon \big) \rightarrow 0. \label{r132g09huraeb}
  \end{equation}

  \noindent Convergence of the other components follows from the same arguments. We will make use of the following fact, which is a consequence of \autoref{main-bp-mom}(b):
  \begin{align}
    \psi_m(i, \tau_{z,r_n}\bm{T}_m, \bm{\zeta}_m, \lambda_m) &\equiv \psi_m\big(i, \tau_{z,r_n}\bm{T}_m, \bm{\zeta}_m, \lambda_m(\tau_{z,r_n}\bm{T}_m, \bm{\zeta}_m)\big) \nonumber\\ &= \psi_m(i, \tau_{z',r_n}\bm{T}_m, \bm{\zeta}_m, \lambda_m) \quad\forall z,z'\in\R^d, i \in \N_m, m\in\mathbb{N}. \label{xidiff2}
  \end{align}

  Maintain DGP $(\star)$ in what follows. Let $\bm{X}_{5;m_n} \equiv \{X_i\}_{i=5}^{m_n+4}$, and define for any $R>0$
  \begin{equation*}
    \mathcal{E}_n(R) = \big\{ \tau_{y,r_n}\bm{X}_{5;m_n} \cap B(y,R) = \tau_{y,r_n}\mathcal{P}_{nf}^y \cap B(y,R) \big\}
  \end{equation*}

  \noindent and 
  \begin{multline*}
    \RR^*_{2} = \max\big\{ \RR(2, \tau_{y,r_n}\bm{T}_{\N_{5;m_n} \cup \{1,\ldots,4\}}, \Zeta_{\N_{5;m_n} \cup \{1,\ldots,4\}}, \lambda_{m_n+4}), \\ \RR(2, \tau_{y,r_n}\bm{T}_{\N_y \cup \{2,4\}}, \Zeta_{\N_y \cup \{2,4\}}, \lambda_{\abs{\N_y}+2}) \big\}.
  \end{multline*}

  \noindent The latter is the larger of the radii of stabilization (\autoref{rosdef}) corresponding to the node statistics $\psi_{2,y}(\N_{5;m_n} \cup \{1,\dots,4\})$ and $\psi_{2,y}(\N_y \cup \{2,4\})$, respectively.

  For any $R>0$, define $\N(y,R) = \{j\in\mathbb{N}\colon \tau_{y,r_n}X_j \in B(y,R)\}$, the set of nodes whose dilated positions lie within the $R$-ball of $y$. This is analogous to the definition of $\N_n(\mathcal{B}_i)$ in \autoref{rosdef}. By \autoref{rosdef}, if $y=\zero$, then under the event $\{R > \RR^*_{2}\}$,
  \begin{equation}
    \psi_{2,y}\big( \N_{5;m_n} \cup \{1,\ldots,4\} \big) = \psi_{2,y}\big( (\N_{5;m_n} \cup \{1,\ldots,4\}) \cap \N(y,R) \big). \label{f302hjfverbh}
  \end{equation}
  
  \noindent By \eqref{xidiff2}, this also holds for $y\neq\zero$. 
  
  Recall that under DGP $(\star)$, the positions of nodes $1$, $3$, and $4$ are $x$, $x+r_nz$, and $y+r_nw$, respectively. Since $\tau_{y,r_n}(\{x, x+r_nz, y+r_nw\}) = \{y+r_n^{-1}(x-y), y+r_n^{-1}(x+r_nz - y), w+y\}$, the intersection of this set and $B(y,R)$ is $\{w+y\}$ for $n,R$ sufficiently large. Then  
  \begin{equation}
    \eqref{f302hjfverbh} = \psi_{2,y}\big( (\N_{5;m_n} \cup \{2,4\}) \cap \N(y,R) \big) \label{gt2u093nj}
  \end{equation}

  \noindent for such $n,R$ under $\{R > \RR^*_{2}\}$. Furthermore, under the event $\mathcal{E}_n(R) \cap \{R > \RR^*_{2}\}$,
  \begin{equation*}
    \eqref{gt2u093nj} = \psi_{2,y}\big( (\N_y \cup \{2,4\}) \cap \N(y,R) \big) = \psi_{2,y}(\N_y \cup \{2,4\}), 
  \end{equation*} 

  \noindent where the first equality uses the event $\mathcal{E}_n(R)$, which allows us to replace the ``binomial'' model with the ``Poissonized'' model, and the second equality follows from the argument for \eqref{f302hjfverbh}. Therefore, by the law of total probability,
  \begin{multline}
    \prob\big( \norm{ \psi_{2,y}(\N_{5;m_n} \cup \{1,\dots,4\}) - \psi_{2,y}(\N_y \cup \{2,4\}) }_\infty > \epsilon \big) \\ \leq \prob(\mathcal{E}_n(R)^c) + \prob(\RR^*_{2} > R), \label{cm55}
  \end{multline}

  \noindent where $\mathcal{E}_n(R)^c$ is the complement of $\mathcal{E}_n(R)$. It remains to show that the right-hand side can be made arbitrarily small by choosing $n,R$ large enough.

  We first claim that by \autoref{main-bp-exp-stab}, $\RR^*_{2} = O_p(1)$. This is not quite immediate from the assumption because, for instance, $\RR^*_{2}$ is defined in terms of the point sets $\tau_{y,r_n}\bm{X}_{5;m_n}$ and $\tau_{y,r_n}\mathcal{P}_{nf}^y$ rather than $r_n^{-1}\bm{X}_{5;m_n}=\tau_{\zero,r_n}\bm{X}_{5;m_n}$ and $r_n^{-1}\mathcal{P}_{nf}^y = \tau_{\zero,r_n}\mathcal{P}_{nf}^y$ as in the assumption. Invoking \eqref{xidiff2} with $z=y$ and $z' = \zero$, we have
  \begin{multline*}
    \psi_{m_n+4}(2, \tau_{y,r_n}\bm{T}_{\N_{5;m_n} \cup \{1,\ldots,4\}}, \Zeta_{\N_{5;m_n} \cup \{1,\ldots,4\}}, \lambda_{m_n+4}) \\ = \psi_{m_n+4}(2, \tau_{\zero,r_n}\bm{T}_{\N_{5;m_n} \cup \{1,\ldots,4\}}, \Zeta_{\N_{5;m_n} \cup \{1,\ldots,4\}}, \lambda_{m_n+4}).
  \end{multline*}
  
  \noindent Consequently, 
  \begin{multline*}
    \RR(2, \tau_{y,r_n}\bm{T}_{\N_{5;m_n} \cup \{1,\ldots,4\}}, \Zeta_{\N_{m_n+4} \cup \{1,\ldots,4\}}, \lambda_{m_n+4}) \\ = \RR(2, \tau_{\zero,r_n}\bm{T}_{\N_{5;m_n} \cup \{1,\ldots,4\}}, \Zeta_{\N_{m_n+4} \cup \{1,\ldots,4\}}, \lambda_{m_n+4}). 
  \end{multline*}
  
  \noindent Since $r_n^{-1}z = \tau_{\zero,r_n}z$, the right-hand side of the previous display is $O_p(1)$ by \autoref{main-bp-exp-stab}. By the same argument, 
  \begin{equation*}
    \RR(2, \tau_{y,r_n}\bm{T}_{\N_y \cup \{2,4\}}, \Zeta_{\N_y \cup \{2,4\}}, \lambda_{\abs{\N_y}+2}) = \RR(2, \tau_{\zero,r_n}\bm{T}_{\N_y \cup \{2,4\}}, \Zeta_{\N_y \cup \{2,4\}}, \lambda_{\abs{\N_y}+2}) = O_p(1).
  \end{equation*}

  \noindent Hence, $\RR^*_{2} = O_p(1)$, as desired.

  It follows that, for any $\epsilon>0$, we can choose $R$ large enough such that for all $n$ sufficiently large, $\prob(\RR^*_{2} > R) < \epsilon/2$. Furthermore, by \autoref{3.1py}, for any such $R$, we can choose $n$ large enough such that $\prob(\mathcal{E}_n(R)^c) < \epsilon/2$. Combining these facts with \eqref{cm55}, we obtain \eqref{r132g09huraeb}.
\end{proof}

%---------------------------
The next lemma shows that the add-one costs $\Xi_\ell$ and $\Xi_m$ defined in \eqref{add1} are asymptotically uncorrelated for $\ell,m$ large. 

\begin{seclemma}\label{pen5.1}
  Let $\{\nu_n\}_{n\in\mathbb{N}} \subseteq \mathbb{N}$ be a diverging sequence such that $\nu_n/n \rightarrow 0$. Under Assumptions \ref{main-bp-exp-stab} and \ref{main-bp-mom},
  \begin{equation*}
    \sup_{n-\nu_n \leq \ell < m \leq n+\nu_n} \big|\, \E[\Xi_\ell \Xi_m] - \E[\Xi_{N_n}]^2 \big| \rightarrow 0.
  \end{equation*}
\end{seclemma}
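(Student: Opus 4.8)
The plan is to show that $\E[\Xi_\ell \Xi_m]$ and $\E[\Xi_{N_n}]^2$ both converge to a common limit $L^2$, uniformly over the window $\ell, m \in [n-\nu_n, n+\nu_n]$, where $L = \int g(x) f(x)\,dx$ and $g(x)$ denotes the expected add-one cost in the limiting Poisson model with an extra point planted at $x$. Intuitively, $\Xi_\ell$ and $\Xi_m$ are add-one costs localized around the added nodes $\ell+1$ and $m+1$; when these positions are far apart the two costs depend on disjoint, independent parts of the configuration, so the product factorizes. Concretely, conditioning on the two added positions, I would write
\begin{equation*}
  \E[\Xi_\ell \Xi_m] = \int\!\!\int \E\big[\Xi_\ell \Xi_m \mid X_{\ell+1}=x,\, X_{m+1}=y\big]\, f(x) f(y)\,dx\,dy,
\end{equation*}
and reduce the problem to showing the integrand converges to $g(x)g(y)$ for a.e.\ $x\neq y$, together with enough integrability to pass to the limit.

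The first key step is to localize the add-one cost. Using \autoref{rosdef} and \autoref{main-bp-exp-stab}, I would show that $\Xi_\ell$ is, outside an event of exponentially small probability, determined only by the nodes lying in a ball $B(r^{-1}X_{\ell+1},R)$ around the added node, with $R$ having exponential tails. The direct term $\psi_{\ell+1}(\mathcal{N}_{\ell+1})$ is governed by node $\ell+1$'s own radius of stabilization. For the indirect terms, a summand $\psi_i(\mathcal{N}_{\ell+1}) - \psi_i(\mathcal{N}_\ell)$ vanishes whenever $r^{-1}\norm{X_i - X_{\ell+1}} > \max\{\RR_i(\ell),\RR_i(\ell+1)\}$, since then removing node $\ell+1$ leaves $\psi_i$ unchanged by \autoref{rosdef}; hence an affected node at scaled distance $>t$ from $\ell+1$ forces a radius of stabilization exceeding $t$, an event whose probability decays exponentially by \autoref{main-bp-exp-stab}. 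The uniformity over network sizes $\ell,\ell+1$ (and $m,m+1$) near $n$ is exactly what the interval $((1-\epsilon)n,(1+\epsilon)n)$ and the shifts $k\in\{1,2,3\}$ in \autoref{main-bp-exp-stab} provide, since $\nu_n/n\to 0$.

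The second key step is coupling and factorization. Fixing $x\neq y$ and conditioning on $X_{\ell+1}=x$, $X_{m+1}=y$, I would invoke the interleaved-Poisson coupling built before \autoref{3.1py}: the configuration near $x$ is governed by $\mathcal{P}_{nf}^x$ and that near $y$ by the independent process $\mathcal{P}_{nf}^y$. By the localization step $\Xi_\ell$ depends asymptotically only on the former and $\Xi_m$ only on the latter, so they are asymptotically independent. \autoref{pen3.7} supplies precisely this: its vectors $V_1,V_2$ are assembled from exactly the node-statistic differences (a base set augmented by one or two local or non-local nodes) needed to reconstruct both add-one costs, and it asserts that the limiting $x$-terms are independent of the $y$-terms. \autoref{3.1py} then matches the binomial and Poisson configurations on neighborhoods of $x$, giving $\E[\Xi_\ell\mid X_{\ell+1}=x]\to g(x)$ and the conditional product $\to g(x)g(y)$. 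With the uniform moment bounds on add-one costs provided by \autoref{pen5.2} (a consequence of \autoref{main-bp-mom}) supplying uniform integrability, dominated convergence yields $\E[\Xi_\ell\Xi_m]\to L^2$; the same machinery applied to a single Poisson add-one cost gives $\E[\Xi_{N_n}]\to L$. Uniformity over the window follows from a subsequence argument, as every $\ell,m$ there has ratio to $n$ tending to one.

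The main obstacle is precisely this aggregation step: \autoref{pen3.7} yields only the asymptotic uncorrelatedness of individual node statistics, whereas each add-one cost is a sum $\sum_i(\psi_i(\mathcal{N}_{\ell+1})-\psi_i(\mathcal{N}_\ell))$ over all base nodes. The work is to show this aggregate is effectively supported near the added node, so that in the product $\Xi_\ell\Xi_m$ the ``diagonal'' contributions (pairs of base nodes lying near the same added node, or the two footprints overlapping) are asymptotically negligible and the remainder factorizes across the two independent regions. Securing the integrability needed to interchange limit and spatial integral---by combining the exponential radius tails of \autoref{main-bp-exp-stab} with the $p$-th moment bound of \autoref{main-bp-mom}---is the most delicate part of the bookkeeping.
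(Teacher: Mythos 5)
Your high-level picture is right and you have correctly identified every ingredient the paper uses (localization via \autoref{rosdef} and \autoref{main-bp-exp-stab}, the interleaved coupling, \autoref{pen3.7}, the moment bounds of \autoref{pen5.2}, and a dominated-convergence step). But the step you yourself flag as ``the main obstacle'' --- passing from decorrelation of \emph{individual} node statistics to decorrelation of the \emph{aggregate} add-one costs --- is the entire content of the lemma, and your proposal does not resolve it. Conditioning on $X_{\ell+1}=x$, $X_{m+1}=y$ and trying to show the conditional expectation of the full product converges to $g(x)g(y)$ just pushes the aggregation problem inside the conditional expectation: $\Xi_\ell\Xi_m$ is still a product of two sums with $O(n)$ terms each, and \autoref{pen3.7} says nothing about such a product directly.

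The paper's resolution is different and is the key missing idea: expand the product
\begin{equation*}
  \Xi_\ell\Xi_m = \Big(\psi_{\ell+1}+\sum_{i=1}^\ell \Delta_{i,\ell}\Big)\Big(\psi_{m+1}+\sum_{j=1}^m \Delta_{j,m}\Big),
\end{equation*}
take expectations, and use exchangeability of node labels to collapse the double sum into nine representative terms with explicit combinatorial multiplicities, e.g.\ $\ell\,\E[\Delta_{1,\ell}\psi_{m+1}]$, $\ell(\ell-1)\E[\Delta_{1,\ell}\Delta_{2,m}]$, $\ell(m-\ell-1)\E[\Delta_{1,\ell}\Delta_{\ell+2,m}]$, and so on (display \eqref{g32q9bh0ni}). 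Each term now involves at most two focal node statistics and at most two ``ancillary'' added nodes --- exactly the four-node configuration that \autoref{pen3.7} is built to handle. The divergent factors $\ell$, $\ell(\ell-1)$, etc.\ are absorbed by changes of variables $z=r_n^{-1}(w-x)$, one per ancillary node, each producing a factor $r_n^d=\kappa/n$; the resulting integrands are dominated by $C\,\tau_{b,\epsilon}(\norm{z})^{1-2/p}$ because $\Delta_{1,\ell}$ vanishes unless $\norm{z}$ is below the radius of stabilization. Matching these limits against $\E[\Xi_{N_n}]^2$ requires the parallel Mecke-formula representation of $\E[\Xi_{N_n}]$ (display \eqref{schneid}), which your proposal also omits. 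Without the exchangeability expansion and the term-by-term change-of-variables bookkeeping, the argument does not close; your ``diagonal terms are negligible'' claim is precisely what these nine terms quantify (the diagonal contributions such as $\ell\,\E[\Delta_{1,\ell}\Delta_{1,m}]$ carry only a single factor of $\ell$ against two stabilization events and vanish, while the off-diagonal $\ell(\ell-1)$ term survives and factorizes).
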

\begin{proof}
  We follow the proof of Lemma 5.1 of \cite{penrose2007gaussian}. The main difference is that we use our \autoref{pen3.7} in place of their Lemma 3.7. For $i\leq m$, let 
  \begin{multline*}
    \psi_m = \psi_m(m, r_n^{-1}\bm{T}_m, \Zeta_m, \lambda_m) \\ \text{and}\quad \Delta_{i,m} = \psi_{m+1}(i, r_n^{-1}\bm{T}_{m+1}, \Zeta_{m+1}, \lambda_{m+1}) - \psi_m(i, r_n^{-1}\bm{T}_m, \Zeta_m, \lambda_m).
  \end{multline*}

  \noindent Then
  \begin{align}
    \E[\Xi_\ell &\Xi_m] = \E\left[ \left( \psi_{\ell+1} + \sum_{i=1}^\ell \Delta_{i,\ell} \right) \left( \psi_{m+1} + \sum_{j=1}^\ell \Delta_{j,m} + \Delta_{\ell+1,m} + \sum_{k=\ell+2}^m \Delta_{k,m} \right) \right] \nonumber\\
			       &\begin{aligned}= \E[\psi_{\ell+1}\psi_{m+1}] &+ \ell\,\E[\Delta_{1,\ell}\psi_{m+1}] + \ell\,\E[\psi_{\ell+1}\Delta_{1,m}] + \ell(\ell-1)\E[\Delta_{1,\ell}\Delta_{2,m}] \\
			       &+ \ell\,\E[\Delta_{1,\ell}\Delta_{1,m}] + \E[\psi_{\ell+1}\Delta_{\ell+1,m}] + \ell\,\E[\Delta_{1,\ell}\Delta_{\ell+1,m}] \\
			       &+ (m-\ell-1)\E[\psi_{\ell+1}\Delta_{\ell+2,m}] + \ell(m-\ell-1)\E[\Delta_{1,\ell}\Delta_{\ell+2,m}]. \end{aligned} \label{g32q9bh0ni}
  \end{align}

  \noindent By the law of iterated expectations, 
  \begin{equation*}
    \E[\Xi_{N_n}] = \E\left[ \psi_{N_n+1} + \sum_{i=1}^{N_n} \E[\Delta_{i,N_n} \mid \bm{X}_{N_n}, N_n] \right].
  \end{equation*}
  
  \noindent This is the expectation of a pointwise sum (i.e.\ over the points of $\bm{X}_{N_n}$) of functionals of a Poisson point process $\bm{X}_{N_n}$. We may then apply Corollary 3.2.3 of \cite{schneider2008stochastic} and use a change of variables to obtain
  \begin{multline}
    \E[\Xi_{N_n}] = \int_{\R^{d_z}} \int_{\R^d} \E^\star\left[ \psi_{1,x}(\N_x \cup \{1\}) \right] \,\text{d}\Phi_z(z_1 \mid x) f(x) \,\text{d}x \\
    + \kappa \int_{\R^{d_z}} \int_{\R^{d_z}} \int_{\R^d} \int_{\R^d} \E^\star\left[ \psi_{1,x}(\N_x \cup \{1,3\}) - \psi_{1,x}(\N_x \cup \{1\}) \right] \\ 
    \times \,\text{d}\Phi_z(z_3 \mid x+r_nz) \,\text{d}\Phi_z(z_1 \mid x)  f(x+r_nz) f(x) \,\text{d}z \,\text{d}x, \label{schneid}
  \end{multline}
    
  \noindent where $\E^\star[\cdot]$ denotes an expectation taken under DGP $(\star)$ defined prior to \autoref{pen3.7}. Recall that under this DGP, $\psi_{1,x}(\N_x \cup \{1\})$ is a function of $(x,z_1)$ and $\psi_{1,x}(\N_x \cup \{1,3\})$ is a function of $(x,z_1)$ and $(x+r_nz,z_3)$. 

  We need to derive the ``limits'' of each of the terms in \eqref{g32q9bh0ni} and show they match with the corresponding term of $\E[\Xi_{N_n}]^2$. We only demonstrate for the first two terms as the argument for the others is similar. 

  \bigskip
  
  \noindent {\bf Term 1.} This term provides the simplest illustration of the logic. Using \eqref{xidiff2} and $m>\ell\geq 1$, the first term $\E[\psi_{\ell+1}\psi_{m+1}]$ equals
  \begin{equation*}
    \int_{\R^{d_z}} \int_{\R^{d_z}} \int_{\R^d} \int_{\R^d} \E^\star\big[ \psi_{1,x}(\N_{5;\ell} \cup \{1\}) \psi_{2,y}(\N_{5;m-1} \cup \{1,2\}) \big] \,\text{d}\Phi_z(z_1 \mid x) \,\text{d}\Phi_z(z_2 \mid y) f(x) f(y) \,\text{d}x \,\text{d}y,
  \end{equation*}

  \noindent where $\E^\star[\cdot]$ denotes an expectation taken under DGP $(\star)$ defined prior to \autoref{pen3.7}. Recall that under this DGP, $\psi_{1,x}(\N_{5;\ell} \cup \{1\})$ is a function of $(x,z_1)$ and $\psi_{2,y}(\N_{5;m-1} \cup \{1,2\})$ a function of $(x,z_1)$ and $(y,z_2)$. Since $\ell,m$ are order $n$, by \autoref{pen3.7}, 
  \begin{equation*}
    \abs{ \psi_{1,x}(\N_{5;\ell} \cup \{1\}) \psi_{2,y}(\N_{5;m-1} \cup \{1,2\}) - \psi_{1,x}(\N_x \cup \{1\}) \psi_{2,y}(\N_y \cup \{2\}) } \plimarrow 0 
  \end{equation*}

  \noindent and $\psi_{1,x}(\N_x \cup \{1\}) \indep \psi_{2,y}(\N_y \cup \{2\})$. By the Vitali convergence theorem, 
  \begin{multline*}
    \abs{ \E^\star\left[ \psi_{1,x}(\N_{5;\ell} \cup \{1\}) \psi_{2,y}(\N_{5;m-1} \cup \{1,2\}) \right] \\ - \E^\star\left[ \psi_{1,x}(\N_x \cup \{1\}) \right] \E^\star\left[ \psi_{2,y}(\N_y \cup \{2\}) \right] } \rightarrow 0.
  \end{multline*}

  \noindent Then using \autoref{main-bp-mom} and the dominated convergence theorem,
  \begin{equation*}
    \bigg| \E[\psi_{\ell+1}\psi_{m+1}] - \left( \int_{\R^{d_z}} \int_{\R^d} \E^\star\left[ \psi_{1,x}(\N_x \cup \{1\}) \right] \,\text{d}\Phi_z(z_1 \mid x) f(x) \,\text{d}x \right)^2 \bigg| \rightarrow 0.
  \end{equation*}

  \noindent {\bf Term 2.} Consider the second term on the right-hand side of \eqref{g32q9bh0ni}. For $m>\ell\geq 1$,
  \begin{multline*}
    \ell\,\E[\Delta_{1,\ell}\psi_{m+1}] = \ell\,\E\big[ \left( \psi_{\ell+1}(1, r_n^{-1}\bm{T}_{\ell+1}, \Zeta_{\ell+1}, \lambda_{\ell+1}) - \psi_\ell(1, r_n^{-1}\bm{T}_\ell, \Zeta_\ell, \lambda_{\ell}) \right) \\
    \cdot \psi_{m+1}(m+1, r_n^{-1}\bm{T}_{m+1}, \Zeta_{m+1}, \lambda_{m+1}) \big] \\
    = \ell\,\E\left[ \left( \psi_{1,\zero}(\N_{5;\ell-1} \cup \{1,3\}) - \psi_{1,\zero}(\N_{5;\ell-1} \cup \{1\}) \right) \psi_{2,\zero}(\N_{5;m-2} \cup \{1,2,3\}) \right]. 
  \end{multline*}

  \noindent By \eqref{xidiff2} and a change of variables, this equals
  \begin{multline}
    \ell\, r_n^d \int_{\R^{d_z}} \int_{\R^{d_z}} \int_{\R^{d_z}} \int_{\R^d} \int_{\R^d} \int_{\R^d} \E^\star\left[ \big( \psi_{1,x}(\N_{5;\ell-1} \cup \{1,3\}) \right. \\ \left. - \psi_{1,x}(\N_{5;\ell-1} \cup \{1\}) \big) \psi_{2,y}(\N_{5;m-2} \cup \{1,2,3\}) \right] \\ \times \,\text{d}\Phi_z(z_1 \mid x) \,\text{d}\Phi_z(z_2 \mid y) \,\text{d}\Phi_z(z_3 \mid x+r_nz) f(x) f(y) f(x+r_nz) \,\text{d}z \,\text{d}x \,\text{d}y, \label{chofva}
  \end{multline}

  \noindent where $\E^\star[\cdot]$ denotes an expectation taken under DGP $(\star)$ defined prior to \autoref{pen3.7}. Recall that under this DGP, $\psi_{1,x}(\N_{5;\ell-1} \cup \{1,3\})$, $\psi_{1,x}(\N_{5;\ell-1} \cup \{1\})$, and $\psi_{2,y}(\N_{5;m-2} \cup \{1,2,3\})$ are functions of $(x,z_1)$, $(x+r_nz,z_3)$, and/or $(y,z_2)$.  Following the argument for Term 1, by \autoref{pen3.7}, \autoref{main-bp-mom}, and the Vitali convergence theorem,
  \begin{multline}
    \E^\star\left[ \big( \psi_{1,x}(\N_{5;\ell-1} \cup \{1,3\}) - \psi_{1,x}(\N_{5;\ell-1} \cup \{1\}) \big) \psi_{2,y}(\N_{5;m-2} \cup \{1,2,3\}) \right] \\ -
    \E^\star\left[ \psi_{1,x}(\N_x \cup \{1,3\}) - \psi_{1,x}(\N_x \cup \{1\}) \right] \E^\star\left[ \psi_{2,y}(\N_y \cup \{2\}) \right] \rightarrow 0. \label{integrand}
  \end{multline}

  Next, to apply the dominated convergence theorem, we show that the integrand of \eqref{chofva} is uniformly bounded by an integrable function. Under DGP $(\star)$ and the event
  \begin{equation*}
    \norm{z} > \RR(1, \tau_{x,r_n}\bm{T}_{\N_{5;\ell-1} \cup \{1,3\}}, \Zeta_{\N_{5;\ell-1} \cup \{1,3\}}, \lambda_{\ell+1}),
  \end{equation*}

  \noindent the distance between the scaled positions of nodes 1 and 3 (respectively $r_n^{-1}x$ and $r_n^{-1}(x+r_nz)$) exceeds the right-hand side since $r_n^{-1}\norm{x - (x+r_nz)} = \norm{z}$. By \autoref{rosdef}, this implies $\psi_{1,x}(\N_{5;\ell-1} \cup \{1,3\}) = \psi_{1,x}(\N_{5;\ell-1} \cup \{1\})$, in which case the integrand of \eqref{chofva} is zero. By \autoref{main-bp-mom} and the H\"{o}lder and Minkowski inequalities, there exist $C,\epsilon>0$ such that for $n$ large, 
  \begin{multline*}
    \abs{\E^\star\left[ \big( \psi_{1,x}(\N_{5;\ell-1} \cup \{1,3\}) - \psi_{1,x}(\N_{5;\ell-1} \cup \{1\}) \big) \psi_{2,y}(\N_{5;m-2} \cup \{1,2,3\}) \right]} \\ \leq
    \big( \norm{\psi_{1,x}(\N_{5;\ell-1} \cup \{1,3\})}_p^\star + \norm{\psi_{1,x}(\N_{5;\ell-1} \cup \{1\})}_p^\star \big) \cdot \norm{\psi_{2,y}(\N_{5;m-2} \cup \{1,2,3\})}_p^\star \\ \times \prob^\star\big(\RR(1, \tau_{x,r_n}\bm{T}_{\N_{5;\ell-1} \cup \{1,3\}}, \Zeta_{\N_{5;\ell-1} \cup \{1,3\}}, \lambda_{\ell+1}) > \norm{z}\big)^{1-2/p} \leq C \tau_{b,\epsilon}(\norm{z})^{1-2/p},
  \end{multline*}

  \noindent where $p$ is defined in \autoref{main-bp-mom} and, for any random variable $W$, $\norm{W}_p^\star \equiv \sup_{x,z_1,y,z_2,z,z_3} \E^\star[\abs{W}^p]^{1/p}$, and $\prob^\star$ denotes the distribution under DGP $(\star)$. By \autoref{main-bp-exp-stab},
  \begin{multline*}
    C \ell r_n^d \int_{\R^d} \int_{\R^d} \int_{\R^d} \tau_{b,\epsilon}(\norm{z})^{1-2/p} f(x) f(y) f(x+r_nz) \,\text{d}x \,\text{d}y \,\text{d}z \\ \leq C \kappa \sup_w f(w) \int_{\R^d} \tau_{b,\epsilon}(\norm{z})^{1-2/p} \,\text{d}z < \infty, 
  \end{multline*}

  \noindent which verifies the dominance condition. Therefore, by \eqref{chofva}, \eqref{integrand} and dominated convergence, 
  \begin{multline*}
    \bigg| \ell\,\E[\Delta_{1,\ell}\psi_{m+1}] - \kappa\int_{\R^{d_z}} \int_{\R^d} \E^\star\left[ \psi_{2,y}(\N_y \cup \{2\}) \right] \,\text{d}\Phi_z(z_2 \mid y) f(y) \,\text{d}y \\ \times \int_{\R^{d_z}} \int_{\R^{d_z}} \int_{\R^d} \int_{\R^d} \E^\star\left[ \psi_{1,x}(\N_x \cup \{1,3\}) - \psi_{1,x}(\N_x \cup \{1\}) \right] \\ \times \,\text{d}\Phi_z(z_1 \mid x) \,\text{d}\Phi_z(z_3 \mid x+r_nz) f(x) f(x+r_nz) \,\text{d}z \,\text{d}x \bigg| \rightarrow 0.
  \end{multline*}
\end{proof}

%---------------------------
The last lemma shows that add-one costs have bounded moments. 

\begin{seclemma}\label{pen5.2}
  Let $\{\nu_n\}_{n\in\mathbb{N}} \subseteq \mathbb{N}$ be a diverging sequence such that $\nu_n/n \rightarrow 0$. Under Assumptions \ref{main-bp-exp-stab} and \ref{main-bp-mom}, for $\tilde\sigma_n^2$ defined in \autoref{spois},
  \begin{equation*}
    \limsup_{n\rightarrow\infty} \sup_{n-\nu_n \leq m \leq n+\nu_n} \E[\Xi_m^2] < \infty \quad\text{and}\quad \sup_n \max\big\{\E[\Xi_{N_n}], \tilde\sigma_n^2\big\} < \infty.
  \end{equation*}
\end{seclemma}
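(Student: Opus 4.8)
The plan is to control every quantity through the \emph{add-one cost} and the localization supplied by the radius of stabilization. Write $\Xi_m = \psi_{m+1}(\mathcal{N}_{m+1}) + \sum_{i=1}^m \Delta_{i,m}$ with $\Delta_{i,m} = \psi_i(\mathcal{N}_{m+1}) - \psi_i(\mathcal{N}_m)$, and let $\mathcal{A}_m = \{i\le m : \Delta_{i,m}\neq 0\}$ be the affected nodes. The first step is a localization claim: $i\in\mathcal{A}_m$ forces $r^{-1}\norm{X_i-X_{m+1}} \le R_i^\ast := \max\{\RR_i(m),\RR_i(m+1)\}$. To see this, apply \autoref{rosdef} with $\mathcal{B}_i = B(r^{-1}X_i, R_i^\ast)$ to \emph{both} networks $\mathcal{N}_m$ and $\mathcal{N}_{m+1}$, so that $\psi_i(\mathcal{N}_{m+1}) = \psi_i(\mathcal{N}_{m+1}(\mathcal{B}_i))$ and $\psi_i(\mathcal{N}_m)=\psi_i(\mathcal{N}_m(\mathcal{B}_i))$; if $r^{-1}X_{m+1}\notin\mathcal{B}_i$ then $\mathcal{N}_{m+1}(\mathcal{B}_i)=\mathcal{N}_m(\mathcal{B}_i)$, giving $\Delta_{i,m}=0$. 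This argument uses only removal-invariance, consistent with the point stressed in \autoref{rlpoi} that our stabilization does not demand addition-invariance; the device of taking the maximum of the two radii is precisely what compensates. Relabeling nodes and invoking \autoref{main-bp-exp-stab} (with $k=1,2$, and $m\in(n-\nu_n,n+\nu_n)\subseteq((1-\epsilon)n,(1+\epsilon)n)$ for $n$ large since $\nu_n/n\to0$) then yields $\prob(i\in\mathcal{A}_m \mid X_i=x_i, X_{m+1}=x) \le 2\tau_{b,\epsilon}(r^{-1}\norm{x_i-x})$.

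The second step bounds $\E[\Xi_m^2]$ by expanding it into the self term $\E[\psi_{m+1}(\mathcal{N}_{m+1})^2]\le M^{2/p}$, the cross terms $\sum_i \E[\psi_{m+1}\Delta_{i,m}]$, the diagonal $\sum_i \E[\Delta_{i,m}^2]$, and the off-diagonal $\sum_{i\neq j}\E[\Delta_{i,m}\Delta_{j,m}]$. The crucial technique, and the main obstacle, is the order of operations. One must first integrate over the positions $X_i, X_j, X_{m+1}$, then apply a conditional Hölder bound with conjugate exponents $p,p,\tfrac{p}{p-2}$ (using the conditional moment bound \autoref{main-bp-mom} for the $\psi$-factors), and only then raise the affectedness probability to the power $\tfrac{p-2}{p}$ \emph{inside} the position integral. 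Because $\tau_{b,\epsilon}$ decays exponentially, $\tau_{b,\epsilon}(\norm{u})^{(p-2)/p}$ (and its square-root variant) stays integrable on $\R^d$, so after the change of variables $u=r^{-1}(x_i-x)$ the sparsity factor $r^d=\kappa/n$ is retained per affected node. This gives $\E[\psi_{m+1}\Delta_{i,m}]=O(1/n)$ and $\E[\Delta_{i,m}^2]=O(1/n)$, and, using $\prob(i,j\text{ affected})\le[\tau_{b,\epsilon}(\cdot)\tau_{b,\epsilon}(\cdot)]^{1/2}$ with two changes of variables, $\E[\Delta_{i,m}\Delta_{j,m}]=O(1/n^2)$; summing over the $O(n)$ and $O(n^2)$ index sets yields $\E[\Xi_m^2]=O(1)$ uniformly over $m\in(n-\nu_n,n+\nu_n)$. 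The trap to avoid is reversing the order: first computing $\prob(i\text{ affected})=O(1/n)$ and then taking the power would lose a factor $n^{2/p}$ by Jensen and destroy the bound.

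The third step handles $\E[\Xi_{N_n}]$ via $\abs{\E[\Xi_{N_n}]}\le\sum_m\prob(N_n=m)\,\E\abs{\Xi_m}$, using $\E\abs{\Xi_m}\le\E[\Xi_m^2]^{1/2}\le C$ on the window $(n-\nu_n,n+\nu_n)$ and dispatching the event $\{\abs{N_n-n}>\nu_n\}$ with the crude deterministic bound $\E\abs{\Xi_m}\le (m+1)\,2M^{1/p}$ against the super-polynomially small Poisson tail probabilities, giving $\sup_n\abs{\E[\Xi_{N_n}]}<\infty$. For $\tilde\sigma_n^2 = n^{-1}\var(\sum_{i=1}^{N_n}\psi_i(\mathcal{N}_{N_n}))$, the Poisson spatial independence \eqref{morespatialindep} makes ``disjoint stabilization balls $\Rightarrow$ independence'' exact, so $\cov(\psi_i,\psi_j)$ vanishes unless $B(r^{-1}X_i,\RR_i)$ and $B(r^{-1}X_j,\RR_j)$ overlap; the diagonal contributes $n^{-1}\sum_i\E[\psi_i^2]=O(1)$, and the off-diagonal is bounded by repeating the integrate-then-power estimate to get $\abs{\cov(\psi_i,\psi_j)}=O(1/n)$ per pair, hence $n^{-1}\sum_{i\neq j}\abs{\cov}=O(1)$ (equivalently one may invoke the Poincaré inequality for Poisson functionals, dominating the variance by the integrated second moment of the add-one cost already controlled above). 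I expect the single estimate of the second step to be the genuine difficulty; once the correct ordering of the Hölder power and the spatial integration is fixed, the remaining pieces are routine.
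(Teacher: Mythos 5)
Your overall strategy---localize $\Delta_{i,m}$ via the radius of stabilization, then apply H\"older with the affectedness probability raised to the power $(p-2)/p$ \emph{inside} the spatial integral---is exactly the mechanism the paper relies on; the paper simply outsources the $\E[\Xi_m^2]$ bound to the proof of Lemma 5.2 of \cite{penrose2007gaussian} rather than writing it out, and your ordering remark (integrate over positions first, then take the power) correctly identifies why the naive route of first computing $\prob(\Delta_{i,m}\neq 0)=O(1/n)$ fails. Your localization via $\max\{\RR_i(m),\RR_i(m+1)\}$ is a safe variant of the paper's device, which uses only the radius of the configuration containing the added node, and it correctly exploits removal-invariance alone. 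Where you genuinely diverge is in the Poissonized quantities: the paper computes $\E[\Xi_{N_n}]$ and $\tilde\sigma_n^2$ directly through the Mecke-type formula \eqref{schneid} (Corollary 3.2.3 of \cite{schneider2008stochastic}) and bounds the resulting integrands as in \eqref{gfj092qbhw}, whereas you condition on $N_n$ for the former and propose a ball-overlap/Poincar\'e argument for the latter. Both routes work; the Mecke formula buys closed-form integrands that are reused in \autoref{pen5.1} and in the de-Poissonization identity \eqref{vardecomp}, while your conditioning argument is more elementary.

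Two soft spots, both fixable. First, you dispatch the event $\{\abs{N_n-n}>\nu_n\}$ by claiming the Poisson tail is super-polynomially small, but the lemma's $\nu_n$ is an \emph{arbitrary} diverging $o(n)$ sequence (it could be $\log\log n$), in which case $\prob(\abs{N_n-n}>\nu_n)\rightarrow 1$ and the crude $O(n)$ bound off the window is not controlled. You must instead invoke the first claim with a window of your own choosing, say $n^{3/4}$ (legitimate, since the first claim holds for every such sequence), so that Lemma 1.4 of \cite{penrose2003} applies---this is precisely the window used in the proof of \autoref{pen2.3}. Second, the assertion that $\cov(\psi_i,\psi_j)$ \emph{vanishes} unless the stabilization balls overlap is too strong as stated: the radii are random and configuration-dependent, so ``balls disjoint'' is itself a random event and does not by itself deliver independence of the node statistics; you need either the deterministic-radius conditioning of \autoref{poi-clt} or the H\"older-plus-tail estimate of \eqref{gfj092qbhw}. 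Your parenthetical Poincar\'e-inequality route---bounding $\var$ of the Poissonized total by $n$ times the integrated second moment of the add-one cost, which you have already controlled---is a clean and correct way to sidestep this entirely, so the gap is one of phrasing rather than substance.
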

\begin{proof}
  Recall the characterization of $\E[\Xi_{N_n}]$ in \eqref{schneid}. The first term on the right-hand side of this equation is uniformly bounded in $n$ by \autoref{main-bp-mom}. For the second term, consider the quantity inside $\E^\star[\cdot]$. Under DGP $(\star)$ and the event $\norm{z} > \RR(1, \tau_{x,r_n}\bm{T}_{\N_x \cup \{1,3\}}, \Zeta_{\N_x \cup \{1,3\}}, \lambda_{\abs{\N_x}+2})$, since node 1 has position $x$ and node 3 has position $x+r_nz$, their scaled distance satisfies $r_n^{-1}\norm{x-(x+r_nz)} > \RR(1, \tau_{x,r_n}\bm{T}_{\N_x \cup \{1,3\}}, \Zeta_{\N_x \cup \{1,3\}}, \lambda_{\abs{N_x}+2})$. Hence, by definition of the radius of stabilization, $\psi_{1,x}(\N_x \cup \{1,3\}) = \psi_{1,x}(\N_x \cup \{1\})$. Therefore, by \autoref{main-bp-mom} and the H\"{o}lder and Minkowski inequalities, there exist $C,\epsilon>0$ such that for $n$ sufficiently large, the integrand of the second term on the right-hand side is bounded above in absolute value by
  \begin{multline}
    \big( \norm{\psi_{1,x}(\N_x \cup \{1,3\})}_p^\star + \norm{\psi_{1,x}(\N_x \cup \{1\})}_p^\star \big) f(x) f(x+r_nz) \\ \times \prob^\star\big(\RR(1, \tau_{x,r_n}\bm{T}_{\N_x \cup \{1,3\}}, \Zeta_{\N_x \cup \{1,3\}}, \lambda_{\abs{N_x}+2}) > \norm{z}\big)^{1-1/p} \\ \leq C  f(x) \sup_w f(w) \tau_p(\norm{z})^{1-1/p}, \label{gfj092qbhw}
  \end{multline}

  \noindent where for any random variable $W$, $\norm{W}_p^\star \equiv \sup_{x,z_1,z,z_3} \E^\star[\abs{W}^p]^{1/p}$, and $\prob^\star$ and $\E^\star[\cdot]$ are respectively the probability and expectation taken under DGP $(\star)$. The right-hand side of \eqref{gfj092qbhw} does not depend on $n$ and has a finite integral over $x,z$ by \autoref{main-bp-exp-stab}. Therefore $\sup_n \E[\Xi_{N_n}] < \infty$.

  The result for $\E[\Xi_m^2]$ can be established using similar arguments; see Lemma 5.2 of \cite{penrose2007gaussian}. Since the effort of translating the proof to our notation is the same as what was done for the previous two lemmas, we omit the argument.

  Finally, similar to \eqref{schneid}, 
  \begin{multline*}
    \tilde\sigma_n^2 = \int_{\R^{d_z}} \int_{\R^d} \E^\star\left[ \psi_{1,x}(\N_x \cup \{1\})^2 \right] \,\text{d}\Phi_z(z_1 \mid x) f(x) \,\text{d}x \\ + \kappa \int_{\R^{d_z}} \int_{\R^{d_z}} \int_{\R^d} \int_{\R^d} \bigg( \E^\star\left[ \psi_{1,x}(\N_x \cup \{1,3\}) \psi_{3,x}(\N_x \cup \{3,1\}) \right] \\ - \E^\star\left[ \psi_{1,x}(\N_x \cup \{1\}) \right] \E^\star\left[ \psi_{3,x}(\N_{x+r_nz} \cup \{3\}) \right] \bigg)\\ \times \,\text{d}\Phi_z(z_1 \mid x) \,\text{d}\Phi_z(z_3 \mid x+r_nz) f(x) f(x+r_nz) \,\text{d}z \,\text{d}x. 
  \end{multline*}

  \noindent The first term is uniformly bounded by \autoref{main-bp-mom}. The second term is also uniformly bounded by an argument similar to \eqref{gfj092qbhw}. 
\end{proof}

%---------------------------
\begin{proof}[Proof of \autoref{pen2.3}]
  We largely follow the proof of Theorem 2.12 of \cite{penrose2003} and Theorem 2.3 in \cite{penrose2007gaussian}, although the characteristic function argument at the end is different since we avoid deriving the limiting variance. Abbreviate $\alpha_n = \E[\Xi_{N_n}]$ and $\Lambda_m = \sum_{i=1}^m \psi_i(\N_m)$. We first prove that
  \begin{equation}
    \E\left[ n^{-1}\left( \Lambda_{N_n} - \Lambda_n - (N_n-n)\alpha_n \right)^2 \right] \rightarrow 0. \label{2.43}
  \end{equation}

  \noindent The left-hand side equals
  \begin{multline}
    \sum_{m\colon |m-n|\leq n^{3/4}} \E\left[ n^{-1}\left( \Lambda_m - \Lambda_n - (m-n)\alpha_n \right)^2 \right] \prob(N_n=m) \\
    + \E\left[ n^{-1}\left( \Lambda_{N_n} - \Lambda_n - (N_n-n)\alpha_n \right)^2 \mathbf{1}\{\abs{N_n-n} > n^{3/4}\} \right]. \label{2.44}
  \end{multline}

  \noindent Let $\varepsilon>0$. For $n$ sufficiently large and $m \in [n, n+n^{3/4}]$,
  \begin{multline*}
    \E\left[ \left( \Lambda_m - \Lambda_n - (m-n)\alpha_n \right)^2 \right] 
    = \E\left[ \left( \sum_{\ell=n}^{m-1} (\Xi_\ell - \alpha_n) \right)^2 \right] \\
    = \sum_{\ell=n}^{m-1} \E\left[ (\Xi_\ell - \alpha_n)^2 \right] + \sum_{\ell=n}^{m-1} \sum_{\substack{k=n\\k\neq\ell}}^{m-1} \E\left[ (\Xi_\ell - \alpha_n) (\Xi_k - \alpha_n) \right] \\
    \leq \varepsilon \sqrt{n}(m-n) + \varepsilon (m-n)^2, 
  \end{multline*}

  \noindent where the inequality follows from Lemmas \ref{pen5.1} and \ref{pen5.2}. A similar argument also applies to $m \in [n-n^{3/4}, n]$. Then the first term of \eqref{2.44} is bounded by
  \begin{equation*}
    n^{-1}\E\left[ \varepsilon\, n^{1/2}\abs{N_n-n} + \varepsilon(N_n-n)^2 \right] \leq n^{-1}\varepsilon  \left( n^{1/2} \left( \E\abs{N_n-n}^2 \right)^{1/2} + n \right) = 2\varepsilon. 
  \end{equation*}

  Following the argument in the proof of Theorem 2.3 in \cite{penrose2007gaussian}, the second term of \eqref{2.44} can be bounded by a constant times
  \begin{equation*}
    n^{-1}n^{(2p+1)/p} \prob\left( \abs{N_n-n} > n^{3/4} \right)^{1-2/p'} 
  \end{equation*}

  \noindent for some $p' \in (2,p)$ and $p$ defined in \autoref{main-bp-mom}. This argument uses \autoref{main-bp-mom} and \autoref{py4.3}. The previous display tends to zero since the probability decays exponentially with $n$ by Lemma 1.4 of \cite{penrose2003}. This establishes \eqref{2.43}.

  Next consider the identity
  \begin{equation}
    n^{-1/2}\Lambda_{N_n} = n^{-1/2}\Lambda_n + n^{-1/2}(N_n-n)\alpha_n + n^{-1/2}\left( \Lambda_{N_n} - \Lambda_n - (N_n-n)\alpha_n \right). \label{tiden}
  \end{equation}

  \noindent The variance of the last term tends to zero by \eqref{2.43}. The second term on the right-hand side has variance $\alpha_n^2$ and is independent of the first term by the coupling construction. Therefore,
  \begin{equation}
    \frac{1}{n} \var\left( \Lambda_{N_n} \right) = \frac{1}{n} \var\left( \Lambda_n \right) + \alpha_n^2 + o(1) \quad\Rightarrow\quad \sigma_n^2 = \tilde{\sigma}_n^2 - \alpha_n^2 + o(1), \label{vardecomp}
  \end{equation}

  \noindent and $\sup_n \sigma_n^2 < \infty$ since $\sup_n \max\{\alpha_n,\tilde\sigma_n^2\} < \infty$ by \autoref{pen5.2}. 
  
  Since $\sigma_n^2$ has strictly positive limit infimum by assumption, so does $\tilde{\sigma}_n^2$. We can then rewrite \eqref{tiden} as
  \begin{equation}
    n^{-1/2}\tilde\sigma_n^{-1}(\Lambda_{N_n}-\E[\Lambda_{N_n}]) - o_p(1) = \frac{\sigma_n}{\tilde\sigma_n} n^{-1/2}\sigma_n^{-1}(\Lambda_n -\E[\Lambda_{N_n}]) + \frac{\alpha_n}{\tilde\sigma_n} n^{-1/2}(N_n-n). \label{tiden2}
  \end{equation}

  \noindent Let $\tilde\varphi_n(\cdot)$ be the characteristic function of the left-hand side, $\varphi_n(\cdot)$ that of $n^{-1/2}\sigma_n^{-1}(\Lambda_n-\E[\Lambda_{N_n}])$, and $\varphi_p(\cdot)$ that of $n^{-1/2}(N_n-n)$. By independence of $N_n$, due to the coupling construction, the characteristic function of the right-hand side of \eqref{tiden2} is
  \begin{equation*}
    \varphi_n\left( \frac{\sigma_n}{\tilde\sigma_n} x \right) \varphi_p\left( \frac{\alpha_n}{\tilde\sigma_n} x \right), \quad\text{implying}\quad \varphi_n\left( y \right) = \tilde\varphi_n\left( \frac{\tilde\sigma_n}{\sigma_n}y \right) \varphi_p\left( \frac{\alpha_n}{\sigma_n} y \right)^{-1} 
  \end{equation*} 

  \noindent for $y = \sigma_n \tilde\sigma_n^{-1} x$. By \autoref{poi-clt} and the Poisson CLT,
  \begin{equation*}
    \tilde\varphi_n(x) \rightarrow e^{-x^2/2} \quad\text{and}\quad \varphi_p(x) \rightarrow e^{-x^2/2}. 
  \end{equation*} 

  \noindent Furthermore, this convergence is uniform over $x$ in a compact set \citep[][Theorem 15.23]{klenke2013probability}. As previously discussed, $\sup_n \max \{\alpha_n, \tilde\sigma_n\} < \infty$ and $\liminf_{n\rightarrow\infty} \min\{\sigma_n, \tilde\sigma_n\} > 0$. Thus, for any $y\in\R^d$, there exists a compact set $K$ such that for $n$ sufficiently large, $y\tilde\sigma_n / \sigma_n $ and $y\alpha_n / \sigma_n$ lie in $K$. Combining these facts,
  \begin{equation*}
    \varphi_n(y) = \frac{\text{exp}\left\{ -\frac{\tilde\sigma_n^2}{\sigma_n^2} \frac{y^2}{2} \right\} + o(1)}{\text{exp}\left\{ -\frac{\alpha_n^2}{\sigma_n^2} \frac{y^2}{2} \right\} + o(1)} = \frac{\text{exp}\left\{ -\frac{\tilde\sigma_n^2-\alpha_n^2}{\sigma_n^2} \frac{y^2}{2} \right\} + o(1)}{1 + o(1)} \rightarrow e^{-y^2/2}
  \end{equation*}
  
  \noindent for any $y\in\R$, where convergence follows from \eqref{vardecomp}. Therefore,
  \begin{equation*}
    n^{-1/2} \sigma_n^{-1} \left( \Lambda_n - \E\left[ \Lambda_{N_n} \right] \right) \dlimarrow \N(0,1). 
  \end{equation*}

  \noindent Finally, in the previous expression, we can replace $\E[\Lambda_{N_n}]$ with $\E[\Lambda_n]$ since \eqref{2.43} implies that $\E[n^{-1/2}(\Lambda_{N_n} - \Lambda_n - (N_n-n)\alpha_n)] \rightarrow 0$.
\end{proof}

%----------------------
\subsection{Proof of \autoref{master-clt}}
%----------------------

Recall that $d_\psi$ is the dimension of the range of $\psi_n(\cdot)$. For the case $d_\psi=1$, the result follows directly from \autoref{pen2.3}. For the case $d_\psi>1$, let $t \in \R^{d_\psi}\backslash\{\zero\}$. The 1-dimensional statistic $\tilde{\psi}_i(\N_n) \equiv t'\psi_i(\N_n)$ satisfies Assumptions \ref{main-bp-exp-stab} and \ref{main-bp-mom} since $\psi_n(\cdot)$ satisfies them, so the result follows from the Cram\'{e}r-Wold device. \qed

%----------------------------------------------------------------------
\section{\autoref{primclt}}\label{sprimclt}
%----------------------------------------------------------------------

Because we are verifying \autoref{main-bp-exp-stab}, we consider a model with nodes $\N_m$ and sparsity parameter $r_n$ with $m,n\in\mathbb{N}$. Let $\bm{A} = \lambda(r_n^{-1}\bm{T}_m, \Zeta_m)$, per \autoref{aesel}. From the same primitives $(r_n^{-1}\bm{T}_m, \Zeta_m)$, construct $\bm{D} \equiv \bm{D}(m)$, the network on $\N_m$ with no self links and $ij$th entry \eqref{Dij}. Let $C_i \equiv C_i(m)$ be $i$'s component in $\bm{D}(m)$, and define $C_i^+ \equiv C_i^+(m)$ as in \eqref{stratneigh}. Let $\bm{M} \equiv \bm{M}(m)$ be the network on $\N_m$ with no self links and $ij$th entry
\begin{equation}
  M_{ij} = \ind\big\{ \sup_s V(r_n^{-1}\norm{X_i-X_j}, s, Z_i, Z_j, \zeta_{ij}) > 0 \big\}. \label{Mnet}
\end{equation}

\noindent This is the network of potential links that are not robustly absent. Finally, for any $J \subseteq \N_m$ and $i \in J$, abbreviate
\begin{equation*}
  \psi_i(J) \equiv \psi_{\abs{J}}(i, r_n^{-1}\bm{T}_J, \Zeta_J, \lambda_{\abs{J}}) \equiv \psi_{\abs{J}}\big(i, r_n^{-1}\bm{T}_J, \Zeta_J, \lambda_{\abs{J}}(r_n^{-1}\bm{T}_J, \Zeta_J)\big).
\end{equation*}

The following lemma constructs for each node $i$ a set $J_i$ satisfying \eqref{Jigoal}. 

\begin{seclemma}[\cite{leung2016}, Lemma 1]\label{constructJT}
  Under Assumptions \ref{S}, \ref{dyklocal}, and \ref{nocoord}, for any $m,n\in\mathbb{N}$ and $i\in\N_m$, $\psi_i(\N_m) = \psi_i(J_i)$ for
  \begin{equation*}
    J_i \equiv J_i(m) = \bigcup \left\{ C_j^+(m)\colon j \in \N_{\bm{M}(m)}(i,K) \right\}. 
  \end{equation*}
\end{seclemma}

Let $\tilde{C}_i \equiv \tilde{C}_i(m) = \medcup_{j \in C_i(m)} \N_{\bm{M}(m)}(j,1)$, and note that $C_i^+(m) \subseteq \tilde{C}_i(m)$. It will be more convenient to stochastically bound the following set which contains $J_i$:
\begin{equation}
  \tilde{J}_i \equiv \tilde{J}_i(m) = \bigcup \left\{ \tilde{C}_j(m)\colon j \in \N_{\bm{M}(m)}(i,K) \right\}.
  \label{Ji}
\end{equation}

%----------------------
\subsection{Branching Processes}\label{introbp}
%----------------------

We next define branching processes used to stochastically bound $\abs{\tilde C_i}$ and $\abs{\N_{\bm{M}}(i,K)}$. These are used to construct a process whose size stochastically dominates $\abs{\tilde{J}_i}$ from \eqref{Ji}. A branching process is a discrete-time stochastic process in which each time period is associated with a set of elements that are the ``offspring'' of elements of the previous time period. The process starts at time zero with only one element. 

The use of branching processes to bound component and neighborhood sizes is a well-known technique in random graph theory \citep[e.g.][]{bollobas2012simple}. The basic idea is to explore the component of a node $i$ using a breadth-first search where at each step we uncover the neighbors of nodes uncovered in the previous step. If the network formation model is dyadic (as is the case for $\bm{D}$ and $\bm{M}$), then the set of nodes uncovered at a given step can be stochastically bounded by the number of offspring produced at a given time period of an appropriate branching process. If the branching process is subcritical in that number of offspring declines over time, then the size of $i$'s component will be stochastically bounded.

Let $x \in \R^d$, $z \in \R^{d_z}$, $r > 0$, and $\bar{f} = \sup_{x\in\R^d} f(x)$. Recall the definition of $\Phi^*$ from \autoref{dfrag} and $p_1(x,z;x',z')$ from \eqref{L_r}. Denote by $\mathfrak{X}^D_{r}(x,z)$ the multi-type Galton-Walton branching process \citep[see e.g.][\S 2.1]{bollobas_phase_2007} with type space $\R^d \times \R^{d_z}$ starting at a particle of type $(x,z)$ with the following two properties. First, each particle of type $(x',z')$ is replaced in the next generation by a set of particles (its ``offspring'') distributed as a Poisson point process on $\R^d \times \R^{d_z}$ with intensity
\begin{equation}
  \text{d}\varphi_{r}(x',z'; x'',z'') = \kappa\bar{f} (1+r) p_1(x',z';x'',z'') \,\text{d}\Phi^*(z'') \,\text{d}x''. \label{Dintens}
\end{equation}

\noindent Second, conditional on particle types at generation $t-1$, the Poisson point processes that create the particles of generation $t$ are independent across generation $t-1$ particles. 

We can interpret $\mathfrak{X}^D_{r}(x,z)$ as a tree network. The root of the tree, corresponding to the first generation of the process, consists of a single node (particle) of type $(x,z)$. Its neighbors are its offspring, which constitute the second generation and have types distributed according to the Poisson point process described above. Conditional on the types of the second generation, each node in the second generation of type $(x',z')$ independently generates neighbors with types distributed as the Poisson point process above. These constitute the third generation. The process is repeated indefinitely. Let $\abs{\mathfrak{X}^D_r(x,z)}$ denote the number of particles ultimately generated, which may be infinite. We write $(x',z') \in \mathfrak{X}^D_r(x,z)$ to mean that $(x',z')$ is a type of a particle generated at some point in the process, viewing $\mathfrak{X}^D_r(x,z)$ as the set of all such particles.

We next define a ``fixed-depth'' branching process, which is required to terminate at a predetermined generation. Recalling the definitions in \autoref{Vc}, let 
\begin{equation}
\bar{p}_r(x;x') = \prob\big(\bar{V}(r^{-1}\norm{x-x'},\zeta_{ij}) > 0\big). \label{barp}
\end{equation}

\noindent Let $\mathfrak{X}^M_r(x,z;K)$ denote the branching process on $\R^d \times \R^{d_z}$ starting at a particle of type $(x,z)$ that terminates at generation $K+2$, with the following two properties. First, any particle of type $(x',z')$ is replaced in the next generation by offspring with types distributed as a Poisson point process on $\R^d \times \R^{d_z}$ with intensity
\begin{equation}
  \text{d}\bar{\varphi}_r(x',z'; x'',z'') = \kappa \bar{f} (1+r) \bar{p}_1(x';x'') \,\text{d}\Phi^*(z'') \,\text{d}x''. \label{Mintens}
\end{equation}

\noindent Second, conditional on the types of generation $t-1$, the point processes that constitute generation $t$ are independent. Thus, the process is generated the same way as $\mathfrak{X}^D_r(x,z)$, except the intensity measure is different, and once the $(K+1)$-th generation is born, no further offspring are generated. Define $\abs{\mathfrak{X}^M_r(x,z)}$ and $(x',z') \in \mathfrak{X}^M_r(x,z)$ as we did for $\mathfrak{X}^D_r(x,z)$.

We use $\abs{\mathfrak{X}^D_{r_n}(X_i,Z_i)}$ and $\abs{\mathfrak{X}^M_{r_n}(X_i,Z_i;K)}$ to stochastically bound $\abs{C_i}$ and $\abs{\N_{\bm{M}}(i,K)}$, respectively. The main idea is that the number of offspring of a particle stochastically dominates the degree of a node with the same type.

\begin{seclemma}\label{Jsd}
  Let $(x,z) \in \mathcal{T} \equiv \text{supp}(X_1,Z_1)$.
  \begin{enumerate}[(a)]
    \item Under \autoref{Vc}, for any $m,n\in\mathbb{N}$ sufficiently large and $i \in \N_m$, $\abs{\N_{\bm{M}(m)}(i,K)}$ is stochastically dominated by $\abs{\mathfrak{X}^M_{r_n}(x,z;K)}$. The same result holds if we replace $m$ with $N_n+1$.

    \item Under Assumptions \ref{Vc}, \ref{dfrag}, and \ref{dreg}, part (a) holds if we replace $\abs{\N_{\bm{M}(m)}(i,K)}$ with $\abs{C_i(m)}$ and $\abs{\mathfrak{X}^M_{r_n}(x,z;K)}$ with $\abs{\mathfrak{X}^D_{r_n}(x,z)}$.
  \end{enumerate}
\end{seclemma}
\begin{proof}
  Part (b) follows from Lemma SA.3.1 of \cite{leung2019compute}. The key argument in the lemma is that we can stochastically bound the degree distribution of any node $j$ in $\bm{D}(m)$, which is conditionally binomial, with the Poisson point process with intensity \eqref{Dintens}. The case $m = N_n+1$ is simpler since the degree distribution in this case is exactly Poisson, so it follows from essentially the same arguments. The corresponding results for part (a) are simpler still because we terminate the branching process after $K+1$ generations rather than branching indefinitely. Hence, they also follow from similar arguments. For part (a), the analogous proof of Lemma SA.3.1 of \cite{leung2019compute} only needs his Assumption SA.3.2(a), which is implied by our \autoref{Vc}. Our Assumptions \ref{dfrag} or \ref{dreg} are unnecessary for (a) since these are used to control branching processes with an infinite number of periods.
\end{proof}

We lastly construct a ``hybrid'' branching process 
\begin{equation}
  \mathfrak{X}^H_{r_n}(X_i,Z_i;K)
  \label{hatfrak}
\end{equation}

\noindent whose size stochastically dominates $\abs{\tilde{J}_i}$ for sufficiently large $m,n$. First, for any $(x,z) \in \mathcal{T}$, define the smaller process 
\begin{equation*}
  \mathfrak{X}_{r_n}^+(x,z),
\end{equation*}

\noindent constructed by first generating $\mathfrak{X}^D_{r_n}(x,z)$ and then for each particle (in any generation) of type $(x',z') \in \mathfrak{X}^D_{r_n}(x,z)$ generating a branching process $\mathfrak{X}^M_{r_n}(x',z';1)$ independently across particles conditional on $\mathfrak{X}^D_{r_n}(x,z)$. Using \autoref{Jsd}, $\abs{\mathfrak{X}_{r_n}^+(X_i,Z_i)}$ stochastically dominates $\abs{\tilde{C}_i}$. 

Next construct \eqref{hatfrak} by generating the fixed-depth process $\mathfrak{X}^M_{r_n}(X_i,Z_i;K)$ and then for each particle (in any generation) of type $(x',z')$ in $\mathfrak{X}^M_{r_n}(X_i,Z_i;K)$ generating the smaller process $\mathfrak{X}_{r_n}^+(x',z')$ independently across particles conditional on $\mathfrak{X}^M_{r_n}(X_i,Z_i;K)$. Using \autoref{Jsd}, $\abs{\mathfrak{X}^H_{r_n}(X_i,Z_i;K)}$ stochastically dominates $\abs{\tilde{J}_i(m)}$, and therefore $\abs{J_i(m)}$, as desired. The previous claims are also true if we replace $m$ with $N_n+1$.

%----------------------
\subsection{Tail Bounds}\label{sexpstab}
%----------------------

The first two results respectively show that $\abs{\mathfrak{X}^D_{r_n}(x,z)}$ and $\abs{\mathfrak{X}^M_{r_n}(x,z;K)}$, defined in \autoref{introbp}, have distributions with exponential tails. Let $\bar{f} = \sup_x f(x)$, and for $\alpha,r>0$, define $g^\alpha_{r}(x,z) = \E[\alpha^{\abs{\mathfrak{X}^D_{r}(x,z)}}]$ and
\begin{equation*}
  \psi_{r}(x,z) = \kappa \bar{f} (1+r) \int_{\R^d} \left( \int_{\R^{d_z}} p_1(x,z;x',z')^2 \,\text{d}\Phi^*(z') \right)^{1/2} \,\text{d}x'. 
\end{equation*}

\begin{seclemma}\label{Dexptail}
  Under Assumptions \ref{Vc}, \ref{dfrag}, and \ref{dreg}, there exists $\alpha>1$ such that, for $r>0$ sufficiently small, $\sup_{(x,z)\in\mathcal{T}} g^\alpha_{r}(x,z) < \infty$.
\end{seclemma}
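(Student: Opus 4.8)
The plan is to exploit the fixed-point characterization of the total-progeny generating function and to run a coupled induction in two norms. Writing $g \equiv g^\alpha_{r}$ and conditioning on the offspring of the root, which form a Poisson point process with intensity $\varphi_{r}(x,z;\cdot)$ in \eqref{Dintens}, the probability generating functional of a Poisson process yields the identity $g(x,z) = \alpha\exp\!\big(\int (g(x',z')-1)\,\text{d}\varphi_{r}(x,z;x',z')\big)$. Rather than solve this directly, I would approximate $g$ from below by $g_k(x,z) = \E[\alpha^{T_k}]$, where $T_k$ is the number of particles in generations $0,\dots,k$. Then $g_0 \equiv \alpha$, the same computation gives the recursion $g_{k+1}(x,z) = \alpha\exp(\int(g_k-1)\,\text{d}\varphi_{r})$, and since $T_k \uparrow \abs{\mathfrak{X}^D_{r}(x,z)}$ and $\alpha>1$, monotone convergence gives $g_k\uparrow g$. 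It therefore suffices to bound $\sup_{(x,z)}g_k$ uniformly in $k$.

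The central estimate links the offspring operator to the subcriticality functional. Setting $u_k = g_k-1\ge 0$ and $(L_{r}u)(x,z) = \int u(x',z')\,\text{d}\varphi_{r}(x,z;x',z')$, Cauchy--Schwarz in $z'$ against $\Phi^*$ gives, for every $(x,z)$,
\begin{align*}
  (L_{r}u_k)(x,z) &\le \kappa\bar{f}(1+r)\int_{\R^d}\norm{u_k(x',\cdot)}_{L^2(\Phi^*)}\Big(\int_{\R^{d_z}} p_1(x,z;x',z')^2\,\text{d}\Phi^*(z')\Big)^{1/2}\,\text{d}x' \\
  &\le \norm{u_k}_\mathbf{m}\,\psi_{r}(x,z),
\end{align*}
since $\norm{u_k(x',\cdot)}_{L^2(\Phi^*)}\le\norm{u_k}_\mathbf{m}$ for all $x'$ and $\psi_{r}=(1+r)h_{\bm{D}}$. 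Two facts feed the induction. First, by \autoref{dfrag}, $\norm{h_{\bm{D}}}_\mathbf{m}<1$, so $\norm{\psi_{r}}_\mathbf{m}=(1+r)\norm{h_{\bm{D}}}_\mathbf{m}=:\rho<1$ once $r$ is small. Second, by the homophily clause of \autoref{Vc} (extended to $V_0$ via \autoref{dreg}), $p_1(x,z;x',z')\le\bar{\Phi}_\zeta(\bar{V}_0^{-1}(\norm{x-x'},0))$ decays exponentially in $\norm{x-x'}$, whence $\bar\psi := \sup_{(x,z)}\psi_{r}(x,z)<\infty$ uniformly for $r\le 1$.

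Now run a joint induction on $b_k = \norm{u_k}_\mathbf{m}$ and $a_k = \sup_{(x,z)}u_k$. Using $e^y-1\le y\,e^{y_0}$ for $0\le y\le y_0$ with the estimate above, on the region $b_k\bar\psi\le y_0$ one has $u_{k+1}(x,z)\le \alpha e^{b_k\psi_{r}(x,z)}-1\le (\alpha-1)+\alpha e^{y_0}b_k\psi_{r}(x,z)$. Taking $\norm{\cdot}_{L^2(\Phi^*)}$ in $z$ and the supremum in $x$ gives $b_{k+1}\le(\alpha-1)+q\,b_k$ with $q=\alpha e^{y_0}\rho$, while taking the supremum gives $a_{k+1}\le(\alpha-1)+\alpha e^{y_0}\bar\psi\,b_k$. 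Fixing $y_0$ small with $e^{y_0}\rho<1$ and then $\alpha>1$ close enough to $1$ that $q<1$ and $(\alpha-1)\bar\psi/(1-q)\le y_0$, the contraction $b_{k+1}\le qb_k+(\alpha-1)$ with $b_0=\alpha-1$ keeps $b_k\le(\alpha-1)/(1-q)$ for all $k$ and preserves the region $b_k\bar\psi\le y_0$, closing the induction. Hence $a_k$ is bounded uniformly in $k$, and $\sup_{(x,z)}g = 1+\sup_k a_k<\infty$. The degenerate case $p_r\equiv 0$ of \autoref{dreg} is immediate, since then $\mathfrak{X}^D_{r}$ is a single particle and $g\equiv\alpha$.

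The main obstacle I anticipate is the mismatch between the mixed-norm subcriticality of \autoref{dfrag} and the sup-norm conclusion: subcriticality only contracts the $L^2(\Phi^*)$-mixed norm, so a direct sup-norm fixed-point argument is unavailable. The resolution---contracting $b_k$ first and only then transferring to a sup-norm bound through the uniform boundedness $\bar\psi<\infty$ supplied by homophily---is the delicate point, as is keeping the exponential nonlinearity controlled by confining the iteration to $b_k\bar\psi\le y_0$, which is precisely what forces $\alpha-1$ to be small.
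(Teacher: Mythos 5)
Your proof is correct, and it is self-contained where the paper's is not: the paper disposes of \autoref{Dexptail} in two lines by invoking Lemma SA.3.3 of \cite{leung2019compute} (originally Lemma I.4 of the earlier draft) and verifying that lemma's two hypotheses, $\sup_{(x,z)}\psi_0(x,z)<\infty$ (from \autoref{Vc}) and $\inf_{(x,z)\in\mathcal{T}}\psi_0(x,z)>0$ (from the non-degeneracy clause of \autoref{dreg}). Your argument --- truncating the total progeny at generation $k$, deriving the Poisson PGFL recursion $g_{k+1}=\alpha\exp\{L_r(g_k-1)\}$ (the same recursion the paper uses in \eqref{grecur} for the fixed-depth process), and running a contraction in the mixed norm $\norm{\cdot}_\mathbf{m}$ via the Cauchy--Schwarz bound $L_r u\le\norm{u}_\mathbf{m}\,\psi_r$ before transferring to the sup norm through $\bar\psi<\infty$ --- reconstructs essentially the content of that cited lemma, and the quantifier bookkeeping ($y_0$ fixed first so that $e^{y_0}\rho<1$, then $\alpha$ taken close to $1$, with all thresholds uniform over small $r$ since $\norm{\psi_r}_\mathbf{m}=(1+r)\norm{h_{\bm{D}}}_\mathbf{m}$ and $\bar\psi\le 2\sup h_{\bm{D}}$) is handled correctly. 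The one substantive difference worth flagging is that your argument never uses the lower bound $\inf\psi_0>0$: you need \autoref{dreg} only for its final clause extending the homophily tail condition to $V_0$, which is what makes $\bar\psi$ finite and lets you pass from the $\norm{\cdot}_\mathbf{m}$-contraction to a uniform pointwise bound. That is not a gap --- the lemma is stated under all of \autoref{dreg}, so using less of it is harmless --- but it does mean your route is marginally leaner than the citation route, whose external lemma is stated with the non-degeneracy hypothesis.
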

\begin{proof}
  We apply Lemma SA.3.3 of \cite{leung2019compute}, which directly delivers the conclusion. His Assumption SA.3.1 corresponds to our \autoref{dfrag}. His Assumption SA.3.2 requires $\inf_{(x,z)\in \mathcal{T}} \psi_0(x,z) > 0$ and $\sup_{(x,z)\in \mathcal{T}} \psi_0(x,z) < \infty$. \autoref{Vc} implies the second inequality, and \autoref{dreg} implies the first because
  \begin{equation*}
    \lim_{n\rightarrow\infty} \psi_{r_n}(x,z) \geq \lim_{n\rightarrow\infty} n \,\E[p_{r_n}(x,z;X_2,Z_2)].
  \end{equation*}
\end{proof}

%-----------------------
\begin{seclemma}\label{MKexptail}
  Under \autoref{Vc}, $\sup_{r\leq \kappa} \sup_{(x,z) \in \mathcal{T}} \E[\alpha^{\abs{\mathfrak{X}^M_r(x,z;K)}}] < \infty$ for any $\alpha > 1$ and $K\in\mathbb{N}$.
\end{seclemma}
\begin{proof}
  Let $\bar{g}^{\alpha}_{r}(K)$ be the function that maps $(x,z) \in \mathcal{T}$ to $\E[\alpha^{\abs{\mathfrak{X}^M_r(x,z;K)}}]$ and $\bar{T}_{r}$ the functional that maps $h\colon \R^d \times \R^{d_z} \rightarrow \R$ to 
  \begin{equation*}
    (\bar{T}_{r}h)(x,z) = \int_{\R^d} \int_{\R^{d_z}} h(x',z') \text{d}\bar{\varphi}_{r}(x,z; x',z'), 
  \end{equation*}
  
  \noindent where $\bar{\varphi}_r(\cdot)$ is defined in \eqref{Mintens}. Define $\Gamma\colon h \mapsto \alpha\,\text{exp}\{\bar{T}_{r}(h-1)\}$. A standard branching process argument yields the recursion
  \begin{multline}
    \bar{g}^{\alpha}_{r}(K)(x,z) \\
    = \alpha\, \text{exp}\left\{ \kappa \bar{f} (1+r) \int_{\R^d} \int_{\R^{d_z}} (\bar{g}^{\alpha}_{r}(K-1)(x',z')-1)\bar{p}_r(x;x') \,\text{d}\Phi^*(z') \,\text{d}x' \right\} \\
    = \alpha\,\text{exp}\{(\bar{T}_{r}(\bar{g}^{\alpha}_{r}(K-1)-1))(x,z)\} = \alpha (\Gamma \bar{g}^{\alpha}_{r}(K-1))(x,z). \label{grecur}
  \end{multline}

  \noindent See for example Chapter 6 of \cite{mode1971multitype}. Then for any $(x,z)\in\mathcal{T}$,
  \begin{equation}
    \bar{g}^{\alpha}_{r}(K)(x,z) = \Gamma^K \alpha, \label{gak}
  \end{equation}

  \noindent by \eqref{grecur} and the fact that $\abs{\mathfrak{X}^M_r(x,z;0)} = 1$. 

  By \eqref{grecur} and \eqref{gak},
  \begin{equation}
    \Gamma\alpha = \alpha e^{\alpha-1} \text{exp}\left\{ \kappa (1+r) \bar{f} \int_{\R^d} \int_{\R^{d_z}} \bar{p}_r(x;x') \,\text{d}\Phi^*(z') \,\text{d}x' \right\}. \label{fj3902bne}
  \end{equation}

  \noindent By \autoref{Vc}, $\sup_x \int_{\R^d} \bar{p}_1(x;x') \,\text{d}x' < \infty$ since $\bar{p}_1(x;x') \leq \bar{\Phi}_\zeta( \bar{V}^{-1}(\norm{x-x'}, 0))$ decays exponentially in $\norm{x-x'}$. Consequently, there exists $C<\infty$ such that, for any $r\leq \kappa$, \eqref{fj3902bne} is bounded by $\alpha e^{\alpha-1} C$. Likewise, $\Gamma^2\alpha < \alpha e^{\alpha e^{\alpha-1} C-1} C < \infty$, and repeating this argument, we obtain $\sup_{(x,z) \in \mathcal{T}} \Gamma^K\alpha < \infty$. Combined with \eqref{gak}, this proves the claim.
\end{proof}

%-----------------------
Using the previous two lemmas, we establish an exponential tail bound for the distribution of $\abs{\mathfrak{X}^H_{r_n}(x,z;K)}$ defined in \eqref{hatfrak}. 

\begin{seclemma}\label{JCN} 
  Under Assumptions \ref{Vc}, \ref{dfrag}, and \ref{dreg}, there exists $n_0 \in \mathbb{N}$ such that 
  \begin{equation*}
    \limsup_{w\rightarrow\infty} w^{-1} \log \left( \sup_{n>n_0} \sup_{(x,z) \in \mathcal{T}} \prob\big( \abs{\mathfrak{X}^H_{r_n}(x,z;K)} > w \big) \right) < 0.
  \end{equation*}
\end{seclemma}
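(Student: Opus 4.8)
The plan is to bound a uniform exponential moment $\sup_{n>n_0}\sup_{(x,z)\in\mathcal{T}}\E\big[\beta^{\abs{\mathfrak{X}^H_{r_n}(x,z;K,T)}}\big]$ for some fixed $\beta>1$ and then convert it into the stated tail bound by Markov's inequality. The argument exploits the two-level composite structure of $\mathfrak{X}^H_{r_n}$ described after \eqref{hatfrak}, together with the complementary strengths of Lemmas \ref{Dexptail} and \ref{MKexptail}: the unbounded process $\mathfrak{X}^D$ has a finite exponential moment only for one particular base $\alpha>1$, whereas every fixed-depth process $\mathfrak{X}^M(\cdot;K)$ has finite exponential moments of \emph{all} orders, uniformly in $r\leq\kappa$ and in the root type. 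Conditioning on $(X_i,Z_{i0})=(x,z)$ merely fixes the root type, so throughout I work with the deterministic-root process $\mathfrak{X}^H_{r_n}(x,z;K,T)$.

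First I would treat the smaller hybrid process $\mathfrak{X}^+_{r_n}(x,z)$. Writing $\abs{\mathfrak{X}^+_{r_n}(x,z)}\leq\sum_{(x',z')\in\mathfrak{X}^D_{r_n}(x,z)}\abs{\mathfrak{X}^M_{r_n}(x',z';1)}$ and using that, conditional on $\mathfrak{X}^D_{r_n}(x,z)$, the attached depth-one processes are independent, I obtain
\[
  \E\big[\beta^{\abs{\mathfrak{X}^+_{r_n}(x,z)}}\,\big|\,\mathfrak{X}^D_{r_n}(x,z)\big]\leq\prod_{(x',z')}\bar g^\beta_{r_n}(x',z';1)\leq M_\beta^{\abs{\mathfrak{X}^D_{r_n}(x,z)}},
\]
where $M_\beta=\sup_{r\leq\kappa}\sup_{(x',z')}\bar g^\beta_{r}(x',z';1)$. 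Since $\abs{\mathfrak{X}^M_r(x,z;1)}=1+N$ with $N$ Poisson of mean $\kappa\bar f(1+r)\int_{\R^d}\bar p_1(x;x'')\,\text{d}x''$, and this mean is uniformly bounded by some $\Lambda<\infty$ (using the exponential decay of $\bar p_1$ shown in the proof of \autoref{MKexptail}), a direct computation gives $M_\beta\leq\beta e^{\Lambda(\beta-1)}\to1$ as $\beta\downarrow1$. Hence I can fix $\beta>1$ so small that $M_\beta\leq\alpha$, with $\alpha$ the base furnished by \autoref{Dexptail}. Taking expectations then yields $h^\beta_{r_n}(x,z):=\E\big[\beta^{\abs{\mathfrak{X}^+_{r_n}(x,z)}}\big]\leq\E\big[\alpha^{\abs{\mathfrak{X}^D_{r_n}(x,z)}}\big]=g^\alpha_{r_n}(x,z)$, so $L_\beta:=\sup_{n>n_0}\sup_{(x,z)\in\mathcal{T}}h^\beta_{r_n}(x,z)<\infty$ for $n_0$ large enough that $r_n$ is small enough for \autoref{Dexptail} and satisfies $r_n\leq\kappa$. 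Note $L_\beta\geq\beta>1$ since each process contains at least its root.

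Next comes the outer layer. Using $\abs{\mathfrak{X}^H_{r_n}(x,z;K,T)}\leq\sum_{(x',z')\in\mathfrak{X}^M_{r_n}(x,z;2K+T+1)}\abs{\mathfrak{X}^+_{r_n}(x',z')}$ and the analogous conditional independence of the attached hybrid processes,
\[
  \E\big[\beta^{\abs{\mathfrak{X}^H_{r_n}(x,z;K,T)}}\big]\leq\E\Big[\textstyle\prod_{(x',z')}h^\beta_{r_n}(x',z')\Big]\leq\E\big[L_\beta^{\abs{\mathfrak{X}^M_{r_n}(x,z;2K+T+1)}}\big]=\bar g^{L_\beta}_{r_n}(x,z;2K+T+1).
\]
The crucial point is that $L_\beta$ may be large, but because $2K+T+1$ is a fixed integer and $L_\beta>1$ is a fixed constant, \autoref{MKexptail} applies with base $L_\beta$ to give $\sup_{n>n_0}\sup_{(x,z)\in\mathcal{T}}\bar g^{L_\beta}_{r_n}(x,z;2K+T+1)=:C<\infty$. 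Markov's inequality then gives $\prob\big(\abs{\mathfrak{X}^H_{r_n}(X_i,Z_{i0};K,T)}>w\mid X_i=x,Z_{i0}=z\big)\leq C\beta^{-w}$, whence $\beta(w)\leq C\beta^{-w}$ and $\limsup_{w\rightarrow\infty}w^{-1}\log\beta(w)\leq-\log\beta<0$.

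The step I expect to be most delicate is the calibration in the inner layer: matching the base $\beta$ against the \emph{single} admissible base $\alpha$ of \autoref{Dexptail}. This forces me to verify that $M_\beta\to1$ as $\beta\downarrow1$, uniformly over root types and over $r\leq\kappa$, which is exactly what the explicit Poisson-offspring bound for the depth-one process delivers; without the ``all bases'' guarantee of \autoref{MKexptail} this calibration would fail. The outer layer is by contrast robust, since the fixed-depth process absorbs the arbitrarily large base $L_\beta$.
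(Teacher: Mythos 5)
Your proof is correct, but it takes a genuinely different route from the paper's. You bound a single uniform exponential moment $\sup_{n,(x,z)}\E[\beta^{\abs{\mathfrak{X}^H_{r_n}(x,z;K,T)}}]$ by compounding moment bounds through the layers, calibrating the inner base $\beta$ so that $M_\beta=\sup\bar g^\beta_r(\cdot;1)\leq\alpha$ matches the single admissible base $\alpha$ of \autoref{Dexptail} (using the explicit Poisson form $\E[\beta^{1+N}]=\beta e^{\lambda(\beta-1)}\to 1$ of the depth-one process), and then absorbing the resulting large constant $L_\beta$ into \autoref{MKexptail}, which holds for every base; one application of Markov's inequality finishes. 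The paper instead never compounds moment bounds: it peels the layers apart with the law of total probability, truncating the intermediate population $\Gamma_n(x,z)$ at levels $w'=cw$ and $w''=c'w'$, applying Markov with conditional independence within each piece, and trading off the truncation constants $c,c'$ against $\log\alpha$ and $\log C$ at the end (see \eqref{weo}--\eqref{weo3}). Your approach buys a cleaner statement (an explicit uniform exponential moment for $\abs{\mathfrak{X}^H}$, which is stronger than a tail bound and reusable elsewhere) at the cost of the delicate base calibration you correctly identify; the paper's truncation argument is more robust in that it only needs \emph{some} finite exponential moment at each layer and would survive even if the inner calibration $M_\beta\leq\alpha$ were unavailable. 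One shared imprecision, which I do not hold against you since the paper commits it too: the suprema in Lemmas \ref{Dexptail} and \ref{MKexptail} are taken over $\mathcal{T}=\text{supp}(X_i,Z_{i0})$, whereas the particle types generated by the intensities \eqref{Dintens} and \eqref{Mintens} range over $\R^d\times\text{supp}(\Phi^*)$; the bounds in fact hold uniformly over all of $\R^d$ because they depend only on $\sup_x\int\bar p_1(x;x')\,\text{d}x'$, so nothing breaks.
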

\begin{proof}
  Let $\Gamma_n(x,z) = \medcup_{(x',z') \in \mathfrak{X}^M_{r_n}(x,z;K)} \mathfrak{X}^D_{r_n}(x',z')$, where conditional on particle types in $\mathfrak{X}^M_{r_n}(x,z;K)$, $\{\mathfrak{X}^D_{r_n}(x',z')\colon (x',z') \in \mathfrak{X}^M_{r_n}(x,z;K)\}$ are independent branching processes with intensity \eqref{Dintens}. Then $\mathfrak{X}^H_{r_n}(x,z;K) = \medcup_{(x',z') \in \Gamma_n(x,z)} \mathfrak{X}^M_{r_n}(x',z';1)$, where, conditional on $\Gamma_n(x,z)$, $\{\mathfrak{X}^M_{r_n}(x',z';1)\colon (x',z') \in \Gamma_n(x,z)\}$ are independent branching processes with intensity \eqref{Mintens}.
  
  By the law of total probability, for any $w' > 0$,
  \begin{multline}
    \prob\left(\abs{\mathfrak{X}^H_{r_n}(x,z;K)} > w \right) \leq \prob\left( \abs{\Gamma_n(x,z)} > w' \right) \\ 
    + \prob\bigg( \sum_{(x',z') \in \Gamma_n(x,z)} \abs{\mathfrak{X}^M_{r_n}(x',z';1)} > w \medcap \abs{\Gamma_n(x,z)} \leq w' \bigg). \label{weo}
  \end{multline}

  \noindent For any $\alpha>1$, the second term on the right-hand side equals
  \begin{multline}
    \E\bigg[ \prob\bigg( \sum_{(x',z') \in \Gamma_n(x,z)} \abs{\mathfrak{X}^M_{r_n}(x',z';1)} > w \,\bigg|\, \Gamma_n(x,z) \bigg) \mathbf{1}\big\{ \abs{\Gamma_n(x,z)} \leq w' \big\} \bigg] \\ 
    \leq \alpha^{-w}\, \E\bigg[ \prod_{(x',z') \in \Gamma_n(x,z)} \E\left[ \alpha^{\abs{\mathfrak{X}^M_{r_n}(x',z';1)}} \,\big|\, \Gamma_n(x,z) \right] \mathbf{1}\big\{ \abs{\Gamma_n(x,z)} \leq w' \big\} \bigg] \\
    \leq \alpha^{-w}\, \left( \sup_{n>n_0} \sup_{(x',z') \in \mathcal{T}} \E\left[ \alpha^{\abs{\mathfrak{X}^M_{r_n}(x',z';1)}} \right] \right)^{w'}, \label{weo2}
  \end{multline}

  \noindent where the second line uses Markov's inequality and conditional independence of the branching processes, and the third line uses independence to drop the conditioning.

  By \autoref{MKexptail}, we can choose $\alpha$ and $n_0$ such that the supremum term in \eqref{weo2} is bounded above by some $C > 1$. Then setting $w' = c\,w$ for any $c>0$,
  \begin{equation*}
    \eqref{weo2} < \text{exp}\left\{ -w \log\alpha + c\,w \log C \right\}.
  \end{equation*}

  \noindent Choosing $c$ such that $\log\alpha > c \log C$, the second element on the right-hand side of \eqref{weo} is bounded above by $e^{-\beta w}$ for some $\beta > 0$. By a similar argument,
  \begin{multline}
    \prob\left( \abs{\Gamma_n(x,z)} > w' \right) \leq \alpha^{-w'} \left( \sup_{n>n_0} \sup_{(x',z') \in \mathcal{T}} \E\left[ \alpha^{\abs{\mathfrak{X}^D_{r_n}(x',z')}} \right] \right)^{w''} \\ + \prob\left( \mathfrak{X}^M_{r_n}(x,z;K) > w'' \right). \label{weo3}
  \end{multline}

  \noindent By \autoref{Dexptail}, we can choose $\alpha,n_0$ such that the supremum term is bounded by a finite constant. Setting $w'' = c'w'$ for some small enough $c'>0$, the first term on the right-hand side can be bounded above by $e^{-\beta'w'} = e^{-\beta'cw}$ for some $\beta'>0$, recalling $w' = c\,w$. By \autoref{MKexptail}, the second term on the right-hand side of \eqref{weo3} is $O(e^{-\beta''w''})$ for some $\beta''>0$, uniformly over all $n$ sufficiently large. We have therefore established that \eqref{weo} is $O(e^{-\gamma w})$ for some $\gamma>0$.
\end{proof}

%----------------------- 
Recall the definition of $\abs{\tilde{J}_i(m)}$ from \eqref{Ji}. We next obtain an exponential tail bound for its distribution using the previous lemmas. Let $m,k\in\mathbb{N}$ and $\bm{t}_k \in \mathcal{T}^k$. Recall that $\tilde{J}_i(m+k)$ is a function of node types $r_n^{-1}\bm{T}_{m+k}$. Construct $\tilde{J}_i^\star(m+k; \bm{t}_k)$ from $\tilde{J}_i(m+k)$ by replacing the node types with $r_n^{-1}\bm{T}_{m+k}^\star(\bm{t}_k)$ defined prior to \autoref{main-bp-exp-stab}, so that $\prob(\abs{\tilde{J}_1^\star(m+k; \bm{t}_k)} > w)$ is a version of the conditional probability $\prob(\abs{\tilde{J}_1(m+k)} > w \mid \bm{T}_k=\bm{t}_k)$.

\begin{seclemma}\label{Jtails}
  Under Assumptions \ref{Vc}, \ref{dfrag}, and \ref{dreg}, for any $k \in \{1,\dots,4\}$ and $k'\in\{1,2\}$, there exist $n_0,\epsilon>0$ such that $\limsup_{w\rightarrow\infty} w^{-1} \max\{ \log \tilde\tau_{b,\epsilon}(w), \log \tilde\tau_p(w)\} < 0$, where
  \begin{align*}
    &\tilde\tau_{b,\epsilon}(w) = \sup_{n>n_0} \sup_{m \in ((1-\epsilon)n, (1+\epsilon)n)} \sup_{\bm{t}_k \in \mathcal{T}^k} \prob\big( \abs{\tilde{J}_1^\star(m+k; \bm{t}_k)} > w \big), \\
    &\tilde\tau_p(w) = \sup_{n>n_0} \sup_{\bm{t}_{k'} \in \mathcal{T}^{k'}} \prob\big( \abs{\tilde{J}_1^\star(N_n+k'; \bm{t}_{k'})} > w \big).
  \end{align*}
\end{seclemma}
\begin{proof}
  We prove the result for $\tilde\tau_{b,\epsilon}(w)$ and $k=3$. The argument for the other cases is similar. Fix $\bm{t}_k \in \mathcal{T}^k$, and construct the networks $\bm{M}$ and $\bm{D}$ under the data-generating process with structural primitives $(r_n^{-1}\bm{T}_{m+k}^\star(\bm{t}_k),\Zeta_{m+k})$. Let $C_i$ be node $i$'s component in $\bm{D}$ and $\tilde{C}_i = \medcup_{j \in C_i} \N_{\bm{M}}(j,1)$.

  For any $i\in\N_{m+k}$ and $H \subseteq \N_{m+k}$, define $Q_i(H) \equiv \tilde{J}_i^\star(m+k; \bm{t}_k) \backslash H$. We first prove 
  \begin{equation}
    \tilde{J}_1^\star(m+k; \bm{t}_k) \subseteq Q_1(\{2\}) \cup Q_2(\{1\}).
    \label{f2039jbn}
  \end{equation}
  
  \noindent Let $\bm{M}^-$ and $\bm{D}^-$ be the networks obtained from $\bm{M}$ and $\bm{D}$ respectively by deleting node 1 and its links. Fix $i \in \tilde{J}_1^\star(m+k; \bm{t}_k)$. If $i = 2$, then clearly $i \in Q_2(\{1\})$, so suppose $i\neq 2$. Then either (a) $i \in \N_{\bm{M}}(1,K)$ or (b) $i \in \tilde{C}_j$ for some $j \in \N_{\bm{M}}(1,K)$. For case (a), if $i \not\in Q_1(\{2\})$, then node 2 must lie on a path of length at most $K$ in $\bm{M}$ connecting $i$ and node 1. Therefore $i \in \N_{\bm{M}^-}(2,K)$, so $i \in Q_2(\{1\})$. For case (b), if $i \not\in Q_1(\{2\})$, then there are two possibilities. The first is that nodes 2 and $i$ are linked in $\bm{M}$, in which case $i \in \N_{\bm{M}^-}(2,K)$.  The second is that node 2 lies on some path in $\bm{D}$ connecting $i$ and some node $j$, in which case nodes 2 and $i$ are connected through a path in $\bm{D}^-$. In both cases, $i \in Q_2(\{1\})$. 
  
  Applying the logic of \eqref{f2039jbn} several times yields
  \begin{align}
    \tilde{J}_1^\star(m+k; \bm{t}_k) &\subseteq Q_1(\{3\}) \cup Q_3(\{1\}) \nonumber\\ &\subseteq \big(Q_1(\{2,3\}) \cup Q_2(\{1,3\})\big) \cup \big(Q_3(\{1,2\}) \cup Q_2(\{1,3\})\big). \label{ogic}
  \end{align}

  \noindent As shown in \autoref{introbp}, for sufficiently large $m,n$, the size of $Q_1(\{2,3\})$ (say) is stochastically dominated by $\abs{\mathfrak{X}^H_{r_n}(x,z;K)}$ defined in \eqref{hatfrak} where $(x,z)$ is node 1's type under $\bm{t}_k$. By \autoref{JCN}, the distribution of $\abs{\mathfrak{X}^H_{r_n}(x,z;K)}$ satisfies the desired exponential tail bound.
\end{proof}

%-----------------------
The next lemma provides a tail bound for the distribution of the maximal distance between the (scaled) position of a node $i\in\N_m$ and the positions of nodes in 
\begin{equation}
  \N_{\bm{J}(m)}(i,K,K') = \bigcup_{j \in \N_{\bm{M}(m)}(i,K)} \bigcup_{k \in \N_{\bm{D}(m)}(j,K')} \N_{\bm{M}(m)}(k,1).
  \label{NJ}
\end{equation}

\noindent The relevance of the latter object is its relation to $\tilde{J}_i(m)$ defined in \eqref{Ji}. If $K'$ equals the diameter of $\bm{D}(m)$, which is the maximum path distance between any node pair, then $\N_{\bm{J}(m)}(i,K,K') = \tilde{J}_i(m)$. 

Let $m,k\in\mathbb{N}$ and $\bm{t}_k \in \mathcal{T}^k$. Let $\bm{D}(m+k)$, $\bm{M}(m+k)$, \eqref{NJ}, and $\tilde{C}_i(m+k)$ defined prior to \eqref{Ji} be defined for nodes with types $r_n^{-1}\bm{T}_{m+k}$. Let $\bm{D}^\star(m+k; \bm{t}_k)$, $\bm{M}^\star(m+k; \bm{t}_k)$, $\N_{\bm{J}^\star(m+k; \bm{t}_k)}(i,K,K')$, and $\tilde{C}_i^\star(m+k; \bm{t}_k)$ be the respective analogs of these objects obtained by replacing the node types with $r_n^{-1}\bm{T}_{m+k}^\star(\bm{t}_k)$ defined prior to \autoref{main-bp-exp-stab}. Then, for example, $\prob(\abs{\N_{\bm{J}^\star(m+k; \bm{t}_k)}(i,K,K')} > w)$ is a version of the conditional probability $\prob(\abs{\N_{\bm{J}(m+k)}(i,K,K')} > w \mid \bm{T}_k=\bm{t}_k)$. The lemma below uses this notation for the case $k=1$ with $\bm{t}_1=(x,z) \in \mathcal{T}$.

\begin{seclemma}\label{MKN}
  Under Assumptions \ref{Vc}, \ref{dfrag}, and \ref{dreg}, for any $\epsilon>0$ and $K \in \mathbb{N}$, there exist constants $n_0,c>0$ such that for $w,w'\rightarrow\infty$ with $w/w' \rightarrow \alpha \in (0,\infty)$,
  \begin{multline*}
    \max \left\{ \sup_{n>n_0} \sup_{m\in((1-\epsilon)n,(1+\epsilon)n)} \sup_{(x,z) \in \mathcal{T}} \prob\big( \max_{i \in \N_{\bm{J}^\star(m+1; (x,z))}(1,K,w')} r_n^{-1}\norm{x-X_i} > w \big), \right. \\
    \left. \sup_{n>n_0} \sup_{(x,z) \in \mathcal{T}} \prob\big( \max_{i \in \N_{\bm{J}^\star(N_n+1; (x,z))}(1,K,w')} r_n^{-1}\norm{x-X_i} > w \big) \right\} \\ = O(w^{d+2}\text{exp}\{-c\,w^{1/2}\}). 
  \end{multline*}
\end{seclemma}
\begin{proof}
  We prove the bound for the term involving $\bm{J}^\star(N_n+1; (x,z))$. The argument for the other term is almost identical except it does not require use of Corollary 3.2.3 to simplify the expectation in \eqref{g903nje} below. 

  Abbreviate $\bm{M} = \bm{M}(N_n+1)$ with $ij$th entry $M_{ij}$ and $\bm{D} = \bm{D}(N_n+1)$ with $ij$th entry $D_{ij}$. Let $\N_{\bm{J}^\star(N_n+1; (x,z))}^\partial(1,\ell,\ell',\ell'')$ be the set of nodes $k$ for which there exist nodes $i,j$ such that the path distance between 1 and $i$ in $\bm{M}$ is $\ell$, the path distance between $i$ and $j$ in $\bm{D}$ is $\ell'$, and the path distance between $j$ and $k$ in $\bm{M}$ is $\ell''$. Then
  \begin{align}
    \prob\bigg( \max_{i\in \N_{\bm{J}^\star(N_n+1; (x,z))}(1,K,w')} 
    &r_n^{-1}\norm{x-X_i} > w \bigg) \nonumber\\ 
    &\leq \sum_{\ell=1}^K \sum_{\ell''=0}^1 \prob\bigg( \max_{i\in \N_{\bm{J}^\star(N_n+1; (x,z))}^\partial(1,\ell,0,\ell'')} r_n^{-1}\norm{x-X_i} > w \bigg) \nonumber\\
    &+ \sum_{\ell'=0}^{w'} \sum_{\ell''=0}^1 \prob\bigg( \max_{i\in \N_{\bm{J}^\star(N_n+1; (x,z))}^\partial(1,0,\ell',\ell'')} r_n^{-1}\norm{x-X_i} > w \bigg) \nonumber\\
    &+ \sum_{\ell=1}^K \sum_{\ell'=1}^{w'} \prob\bigg( \max_{i\in \N_{\bm{J}^\star(N_n+1; (x,z))}^\partial(1,\ell,\ell',0)} r_n^{-1}\norm{x-X_i} > w \bigg) \nonumber\\
    &+ \sum_{\ell=1}^K \sum_{\ell'=1}^{w'} \prob\bigg( \max_{i\in \N_{\bm{J}^\star(N_n+1; (x,z))}^\partial(1,\ell,\ell',1)} r_n^{-1}\norm{x-X_i} > w \bigg). \label{t23b0e9}
  \end{align}

  We begin by bounding the fourth term on the right-hand side. Under the event that $\max_{i\in \N_{\bm{J}^\star(N_n+1; (x,z))}^\partial(1,\ell,\ell',1)} r_n^{-1}\norm{x-X_i} > w$, there must exist distinct nodes $j_0 \neq j_1 \neq \dots \neq j_{\ell+\ell'+1}$ with $j_0 = 1$ such that (a) for some $m=1,\dots,\ell+\ell'+1$, we have $r_n^{-1}\norm{X_{j_{m-1}} - X_{j_m}} > w/(\ell+\ell')$ where $X_1\equiv x$; (b) for all $k=1,\dots,\ell$, we have $M_{j_{k-1}j_k}=1$; (c) for all $k'=\ell+1,\dots,\ell+\ell'$, we have $D_{j_{k'-1}j_{k'}}=1$; and (d) $M_{j_{\ell+\ell'}j_{\ell+\ell'+1}}=1$. 

  By \autoref{Vc}, these facts imply $\zeta_{j_{m-1}j_m} > \bar{V}^{-1}(w/(\ell+\ell'),0)$ for some $m=1,\dots,\ell+\ell'+1$, so 
  \begin{multline}
    \prob\bigg( \max_{i\in \N_{\bm{J}^\star(N_n+1; (x,z))}^\partial(1,\ell,\ell',1)} r_n^{-1}\norm{x-X_i} > w \bigg) \\ 
    \leq \prob\bigg( \bigcup_{j_0 \neq \dots \neq j_{\ell+\ell'+1}  \in \N_{N_n+1}, j_0=1} \medcup_{m=1}^{\ell+\ell'+1} \big\{ \zeta_{j_{m-1}j_m} > \bar{V}^{-1}(w/(\ell+\ell'),0) \big\} \\ \medcap \bigg\{ \prod_{k=1}^\ell M_{j_{k-1}j_{k}} \cdot \prod_{k'=\ell+1}^{\ell+\ell'} D_{j_{k'-1}j_{k'}} \cdot M_{j_{\ell+\ell'}j_{\ell+\ell'+1}} = 1 \bigg\} \bigg). \label{gj90qw}
  \end{multline}

  \noindent Let $X_1\equiv x$, $\bar{p}_{r_n}(\cdot)$ be defined as in \eqref{barp}, $p_{r_n}(\cdot)$ be defined as in \eqref{L_r}, and for $x,x'\in\R^d$,
  \begin{equation*}
    \rho_n(x,x',w/(\ell+\ell')) \equiv \prob\big( \zeta_{j_{m-1}j_m} > \max\{\bar{V}^{-1}(w/(\ell+\ell'),0), \bar{V}^{-1}(r_n^{-1}\norm{x-x'},0)\}\big).
  \end{equation*}

  \noindent Using the law of iterated expectations conditioning on all node types, 
  \begin{align}
    \eqref{gj90qw} \leq \E\bigg[ &\sum_{j_0 \neq \dots \neq j_{\ell+\ell'+1} \in \N_{N_n+1}, j_0=1} \sum_{m=1}^{\ell+\ell'+1} \rho_n(X_{j_{m-1}},X_{j_m},w/(\ell+\ell')) \nonumber\\ 
	     &\times  \prod_{\substack{k=1 \\ k\neq m}}^\ell \bar{p}_{r_n}(X_{j_{k-1}},X_{j_k}) \cdot \prod_{\substack{k'=\ell+1 \\ k'\neq m}}^{\ell+\ell'} p_{r_n}(X_{j_{k'-1}},Z_{j_{k'-1}},X_{j_{k'}},Z_{j_{k'}}) \nonumber\\
	     &\times \big( \ind\{m=\ell+\ell'+1\} + \ind\{m\neq\ell+\ell'+1\}\bar{p}_{r_n}(X_{j_{\ell+\ell'}},X_{\ell+\ell'+1}) \big) \bigg]. \label{g903nje}
  \end{align}

  \noindent By Corollary 3.2.3 of \cite{schneider2008stochastic}, 
  \begin{align*}
    \eqref{g903nje} 
    &\leq (\ell+\ell'+1) \sup_{x_{j_0}} \underbrace{\int_{\R^d} \cdots \int_{\R^d}}_{\ell \text{ times}} \prod_{k=1}^\ell \bar{p}_{r_n}(x_{j_{k-1}},x_{j_k})\, nf(x_{j_k}) \,\text{d}x_{j_k} \\ 
    &\times \int_{\R^d} \rho_n\big(x_{j_{\ell}},x_{j_\ell+1},w/(\ell+\ell')\big) \, nf(x_{\ell+1}) \,\text{d}x_{\ell+1} \\
    &\times \underbrace{\int_{\R^d} \cdots \int_{\R^d}}_{\ell'-2 \text{ times}} \tilde\E\bigg[ \prod_{k'=\ell+2}^{\ell+\ell'} p_{r_n}(x_{j_{k'-1}},Z_{j_{k'-1}},x_{j_{k'}},Z_{j_{k'}}) \bigg] \prod_{k'=\ell+2}^{\ell+\ell'} nf(x_{j_{k'}}) \,\text{d}x_{j_{k'}} \\ 
    &\times \int_{\R^d} \bar{p}_{r_n}(x_{j_{\ell+\ell'}},x_{j_{\ell+\ell'+1}}) \, nf(x_{\ell+\ell'+1}) \,\text{d}x_{\ell+\ell'+1},
  \end{align*}
  
  \noindent where we abbreviate $\tilde\E[\cdot] = \E[\cdot \mid X_{j_\ell+2}=x_{j_\ell+2}, \dots, X_{j_{\ell+\ell'}}=x_{j_{\ell+\ell'}}]$. By repeatedly applying the Cauchy-Schwarz inequality and \autoref{dfrag}(a), we can bound this by
  \begin{equation*}
    (\ell+\ell'+1) \left( \sup_x n \int_{\R^d} \bar{p}_{r_n}(x,y) f(y) \,\text{d}y \right)^{\ell+1} \norm{h^*}_{\mathbf{m}}^{\ell'-2} \sup_x n\,\E\big[\rho_n\big(x,X_2,w/(\ell+\ell')\big)\big].
  \end{equation*}

  \noindent Note that $\norm{h^*}_{\mathbf{m}}<1$ by the assumption. Using \autoref{Vc}, for sufficiently large $w$ there exist $c_1>0$ and $c_2>1$ such that this is bounded above by
  \begin{multline}
    (\ell+\ell'+1) \cdot c_1^{\ell+1} \cdot c_2^{-\ell'+2} \cdot \kappa\bar{f} \bigg( \int_{w/(\ell+\ell')}^\infty u^{d-1} \bar{\Phi}_\zeta(\bar{V}^{-1}(u,0)) \text{d}u \\ + \frac{(w/(\ell+\ell'))^d}{d} \bar{\Phi}_\zeta(\bar{V}^{-1}(w/(\ell+\ell'),0)) \bigg) \label{gh9024jhjergs}
  \end{multline}

  \noindent by a change of variables to polar coordinates. Since $\bar{\Phi}_\zeta(\bar{V}^{-1}(u,0)) \leq c_3 e^{-c_4 u}$ for some $c_3,c_4>0$ by \autoref{Vc}, the second term involving $\bar{\Phi}_\zeta$ is at least as large as the first term for sufficiently large $w$, as can be seen from integration by parts. Then 
  \begin{multline*}
    \eqref{gh9024jhjergs} \leq (\ell+\ell'+1) \cdot c_1^{\ell+1} \cdot c_2^{-\ell'+2} \cdot \kappa\bar{f} (w/(\ell+\ell'))^d c_3 \,\text{exp}\{-c_4 w/(\ell+\ell')\} \\
    < C (\ell+\ell') (w/(\ell+\ell'))^d \,\text{exp}\{-c (w/(\ell+\ell') - \ell + \ell')\}
  \end{multline*}

  \noindent for $w$ sufficiently large and some $C,c>0$ that do not depend on $w,\ell,\ell'$. 

  Therefore, for $w$ large enough, the fourth term on the right-hand side of \eqref{t23b0e9} is bounded above by
  \begin{multline*}
    C \sum_{\ell=1}^K \sum_{\ell'=1}^{w'} (\ell+\ell') (w/(\ell+\ell'))^d \,\text{exp}\{-c (w/(\ell+\ell') - \ell + \ell')\} \\
    < CK(w'+K) w^d \sum_{\ell'=1}^{w'} \text{exp}\{-c (w/(K+\ell') + \ell' - K)\} \\
    \leq C'w'w^d \sum_{\ell'=1}^{w'} \text{exp}\{-c (w/(K+\ell') + \ell')\}. 
  \end{multline*}

  \noindent for some $C'>0$ that does not depend on $w,w'$. Since $w/(K+\ell') + \ell'$ is optimized at $\ell' = \sqrt{w} - K$, for $w$ large, the right-hand side is bounded above by
  \begin{equation*}
    C'w'w^d \cdot w' \text{exp}\left\{ -c \left( 2\sqrt{w} - K \right) \right\} = O(w^{d+2} \text{exp}\{-c\,w^{1/2}\}).
  \end{equation*}

  \noindent By a similar argument, the other terms of \eqref{t23b0e9} are also of this order.
\end{proof}

%-----------------------
The main result of this subsection is the following tail bound for a quantity that upper bounds the radius of stabilization. Recall the definitions of $r_n^{-1}\bm{T}_{m+k}^\star(\bm{t}_k)$ prior to \autoref{main-bp-exp-stab}, $\tilde{J}_1^\star(m+k; \bm{t}_k)$ prior to \autoref{Jtails}, and the objects prior to \autoref{MKN}. For each $i\in\mathcal{N}_{m+k}$, let $X_i(\bm{t}_k)$ denote node $i$'s position according to the type array $r_n^{-1}\bm{T}_{m+k}^\star(\bm{t}_k)$. Then $\prob( \max_{i\in \tilde{J}_1^\star(m+k; \bm{t}_k)} r_n^{-1}\norm{X_1(\bm{t}_k)-X_i(\bm{t}_k)} > w )$ is a version of the conditional probability $\prob( \max_{i\in \tilde{J}_1(m+k)} r_n^{-1}\norm{X_1-X_i} > w \mid \bm{T}_k=\bm{t}_k)$ for $\tilde{J}_i$ defined in \eqref{Ji}.

\begin{seclemma}\label{dyDC}
  Under Assumptions \ref{Vc}, \ref{dfrag}, and \ref{dreg}, for $n,m\in\mathbb{N}$ sufficiently large, $k \in \{1,\dots,4\}$, and $k'\in\{1,2\}$, there exist $n_0,\epsilon>0$ such that 
  \begin{align*}
    &\limsup_{w\rightarrow\infty} w^{-1/2} \max\{ \log \tau_{b,\epsilon}^*(w), \log \tau_p^*(w)\} < 0, \quad\text{where} \\
    \tau_{b,\epsilon}^*(w) &= \sup_{n>n_0} \sup_{m \in ((1-\epsilon)n, (1+\epsilon)n)} \sup_{\bm{t}_k \in \mathcal{T}^k} \prob\big( \max_{i\in \tilde{J}_1^\star(m+k; \bm{t}_k)} r_n^{-1}\norm{X_1(\bm{t}_k)-X_i(\bm{t}_k)} > w \big), \\
    \tau_p^*(w) &= \sup_{n>n_0} \sup_{\bm{t}_{k'} \in \mathcal{T}^{k'}} \prob\big( \max_{i\in \tilde{J}_1^\star(N_n+k'; \bm{t}_{k'})} r_n^{-1}\norm{X_1(\bm{t}_{k'})-X_i(\bm{t}_{k'})} > w \big).
  \end{align*}
\end{seclemma}
\begin{proof}
  We prove the result for $\tilde{J}_1^\star(m+k; \bm{t}_k)$ with $k=3$. The argument is very similar for the other cases. Let $w'=\alpha w$ for some $\alpha>0$. By the law of total probability,
  \begin{multline*}
    \prob\big( \max_{i\in \tilde{J}_1^\star(m+k; \bm{t}_k)} r_n^{-1}\norm{X_1(\bm{t}_k)-X_i(\bm{t}_k)} > w \big) \leq \prob\big( \abs{\tilde{J}_1^\star(m+k; \bm{t}_k)} > w' \big) \\ +  \prob\big( \max_{i\in \tilde{J}_1^\star(m+k; \bm{t}_k)} r_n^{-1}\norm{X_1(\bm{t}_k)-X_i(\bm{t}_k)} > w \medcap \abs{\tilde{J}_1^\star(m+k; \bm{t}_k)} \leq w' \big). 
  \end{multline*}

  \noindent The first term on the right-hand side decays exponentially in $w$ uniformly over $n,m$ by \autoref{Jtails}.
  
  For the second term, by \eqref{ogic},
  \begin{align*}
    \big\{ \max_{i\in \tilde{J}_1^\star(m+k; \bm{t}_k)} r_n^{-1}\norm{X_1(\bm{t}_k)-X_i(\bm{t}_k)} > w \big\} &\subseteq \big\{ \max_{i\in Q_1(\{2,3\})} r_n^{-1}\norm{X_1(\bm{t}_k)-X_i(\bm{t}_k)} > w \big\} \\ \cup \big\{ \max_{i\in Q_2(\{1,3\})} r_n^{-1}\norm{X_1(\bm{t}_k)-X_i(\bm{t}_k)} > w \big\} &\cup \big\{ \max_{i\in Q_3(\{1,2\})} r_n^{-1}\norm{X_1(\bm{t}_k)-X_i(\bm{t}_k)} > w \big\}.
  \end{align*}

  \noindent We proceed to bound
  \begin{equation*}
    \prob\big( \max_{i\in Q_1(\{2,3\})} r_n^{-1}\norm{X_1(\bm{t}_k)-X_i(\bm{t}_k)} > w \medcap \abs{Q_1(\{2,3\})} \leq w' \big),
  \end{equation*}

  \noindent noting that the argument is identical for analogous probabilities involving the other $Q$ sets. Let $\bm{t}_1=(x,z)$. Given $k=3$, the previous expression equals
  \begin{equation}
    \prob\big( \max_{i\in \tilde{J}_1^\star(m+1; (x,z))} r_n^{-1}\norm{x-X_i} > w \medcap \abs{\tilde{J}_1^\star(m+1; (x,z))} \leq w' \big). \label{p3p9820yu}
  \end{equation}

  If $\abs{\tilde{J}_1^\star(m+1; (x,z))} \leq w'$, then for all $i \in \N_{\bm{M}^\star(m+1; (x,z))}(1,K)$, we have $\tilde{C}_i^\star(m+1; (x,z)) \subseteq \medcup\{\N_{\bm{M}^\star(m+1; (x,z))}(j,1)\colon j \in \N_{\bm{D}^\star(m+1; (x,z))}(i,w')\}$, where these objects are defined prior to \autoref{MKN}. Recalling \eqref{NJ}, this implies $\tilde{J}_1^\star(m+1; (x,z)) \subseteq \N_{\bm{J}^\star(m+1; (x,z))}(1,K,w')$. Therefore, 
  \begin{equation}
    \eqref{p3p9820yu} \leq \delta(w) \equiv \prob\big( \max_{i\in \N_{\bm{J}^\star(m+1; (x,z))}(1,K,w')} r_n^{-1}\norm{x-X_i} > w \big). \label{g390en3}
  \end{equation}
  
  \noindent By \autoref{MKN} and the choice of $w'$, $\delta(w) = O(w^{d+2}\text{exp}\{-c\,w^{1/2}\})$ for some $c>0$, so $\limsup_{w\rightarrow\infty} w^{-1/2} \log \delta(w) < 0$.
\end{proof}

%----------------------
\subsection{Proof of \autoref{primclt}}
%----------------------

\autoref{constructJT} constructs for each node $i$, a set $J_i$ satisfying \eqref{Jigoal}, which is contained in $\tilde{J}_i$ defined in \eqref{Ji}. Using the branching process results in \autoref{introbp} and \autoref{sexpstab}, \autoref{dyDC} derives a tail bound on the distribution of $\max_{i \in \tilde{J}_1} r_n^{-1}\norm{X_1-X_i}$. This plus one is an upper bound on the radius of stabilization by \eqref{ros}, which proves the result. \qed

%----------------------------------------------------------------------

\end{document}